\DeclareMathOperator{\FPT}{\mathrm{FTP}}
\DeclareMathOperator{\NP}{\mathrm{NP}}
\DeclareMathOperator{\BFS}{\mathrm{BFS}}
\newcommand{\es}{\varnothing}
\title{\sc {Rainbow domination and related problems on some classes of perfect graphs}} 
\author{
Wing-Kai~Hon\inst{1} 
\and 
Ton~Kloks 
\and 
Hsian-Hsuan~Liu\inst{1} 
\and
Hung-Lung~Wang\inst{2}
}
\institute{
Department of Computer Science\\
National Tsing Hua University, Taiwan
\and
Institute of Information and Decision Sciences\\
National Taipei University of Business, Taiwan\\
{\tt hlwang@ntub.edu.tw}} 
\begin{document}

\maketitle 

\begin{abstract}
Let $k \in \mathbb{N}$ and let $G$ be a graph. 
A function $f: V(G) \rightarrow 2^{[k]}$ is a rainbow  
function if, for every vertex $x$ with $f(x)=\es$, 
$f(N(x)) =[k]$. The rainbow domination 
number $\gamma_{kr}(G)$ is the minimum of $\sum_{x \in V(G)} |f(x)|$ 
over all rainbow functions. 
We investigate the rainbow domination 
problem for 
some classes of perfect graphs. 
\end{abstract}

\section{Introduction}

Bre\u{s}ar et al. introduced the rainbow domination problem in 
2008~\cite{kn:bresar}. 

The $k$-rainbow domination-problem drew 
our attention because 
it is solvable in polynomial time for classes of graphs of bounded rankwidth 
but, unless one fixes $k$ as a constant, it seems not formulatable  
in monadic second-order logic.  

\smallskip 

Let us start with the definition. 
 
\begin{definition}
Let $k \in \mathbb{N}$ and let $G$ be a graph. 
A function $f: V(G) \rightarrow 2^{[k]}$ is a $k$-rainbow 
function if, for every $x \in V(G)$, 
\begin{equation}
\label{eqn1}
f(x)=\es \quad \text{implies}\quad \cup_{y \in N(x)} \; f(y)\; = [k].
\end{equation}
The $k$-rainbow domination number of $G$ is 
\begin{multline}
\label{eqn2}
\gamma_{rk}(G) = \min \; \bigl\{\; \|f\| \; \mid \;   
\text{$f$ is a $k$-rainbow function for $G$}\;\bigr \},\\ 
\text{where} \quad \|f\|=\sum\nolimits_{x \in V(G)} \; |f(x)|. 
\end{multline} 
\end{definition}
We call $\|f\|$ the \underline{cost} of $f$ over the graph $G$. 
When there is danger of confusion, we  write $\|f\|_{G}$ 
instead of $\|f\|$.  We call the elements of $[k]$ the \underline{colors} 
of the 
rainbow and, for a vertex $x$ we call $f(x)$ the \underline{label} of $x$.   
For a set $S$ of vertices we write 
\[f(S)=\cup_{x \in S}\; f(x).\]  
 
\bigskip 

It is a common phenomenon that the introduction of a new domination 
variant is followed chop-chop by an explosion of research results and 
their write-ups. 
One reason for the popularity of domination problems is the wide 
range of applicability and directions of possible research. 
We moved our bibliography of recent publications 
on this specific domination variant to the appendix. 
We refer to~\cite{kn:sumenjak} for the description of an application 
of rainbow domination. 
 
\bigskip 

To begin with, Bre\u{s}ar et al. showed 
that, for any graph $G$,  
\begin{equation}
\label{eqn3}
\gamma_{rk}(G)=\gamma(G \Box K_k),
\end{equation}
where $\gamma$ denotes the domination number and where $\Box$ 
denotes the Cartesian product. This observation, together with  
Vizing's conjecture, stimulated the search for graphs for which 
$\gamma=\gamma_{r2}$ (see also~\cite{kn:aharoni,kn:hartnell}). 
Notice that, by~\eqref{eqn3} and Vizing's upperbound 
$\gamma_{rk}(G) \leq k \cdot \gamma(G)$~\cite{kn:vizing}. 

\bigskip 

Chang et al.~\cite{kn:chang} were quick on the uptake and showed that, 
for $k \in \mathbb{N}$,  
the $k$-rainbow domination problem is NP-complete, even when restricted to 
chordal graphs or bipartite graphs. The same paper shows 
that there is a linear-time algorithm to determine the parameter on trees. 
A similar algorithm for trees appears in~\cite{kn:yen} and this paper 
also shows that the problem remains NP-complete on planar graphs. 

\bigskip 

Notice that~\eqref{eqn3} shows that $\gamma_{rk}(G)$ 
is a non-decreasing function in $k$.  Chang et al. show that, 
for all graphs $G$ with $n$ vertices and all $k \in \mathbb{N}$,   
\begin{equation}
\label{eqn13}
\min \; \{\;k,\;n\;\} \leq \gamma_{rk}(G) \leq n \quad\text{and} \quad 
\gamma_{rn}(G)=n. 
\end{equation}
For trees $T$, Chang et al.~\cite{kn:chang} give sharp 
bounds for the smallest $k$ satisfying   
$\gamma_{rk}(T)=|V(T)|$. 

\bigskip 
 
Many other papers establish bounds and relations, eg,  
between the $2$-rainbow domination number 
and the total domination number  
or the (weak) roman domination 
number~\cite{kn:chellali,kn:fujita3,kn:furuya,kn:wu,kn:wu2}, 
or study edge- or vertex critical graphs with respect to 
rainbow domination~\cite{kn:rad}, or obtain results for special 
graphs such as paths, cycles, graphs with given radius, and 
the generalized Petersen 
graphs~\cite{kn:ali,kn:fujita4,kn:shao,kn:stepien,kn:stepien2,kn:tong,%
kn:wang,kn:xavier,kn:xu}. 

\bigskip 

Pai and Chiu develop   
an exact algorithm and a heuristic for $3$-rainbow domination. 
In~\cite{kn:pai} they present the results of some experiments.  
Let us mention that the $k$-rainbow domination number may be 
computed, via~\eqref{eqn3}, 
by an exact, exponential algorithm that computes the domination 
number. For example, this shows that the $k$-rainbow domination number 
can be computed in $O(1.4969^{nk})$~\cite{kn:rooij2,kn:rooij}. 

\bigskip 

A $k$-rainbow family is a set of $k$-rainbow functions which 
sum to at most $k$ for each vertex. The $k$-rainbow domatic number 
is defined as the maximal number of elements in such a 
family. Some results were obtained 
in~\cite{kn:fujita,kn:meierling,kn:sheikholeslami}. 

\bigskip 

Whenever domination problems are 
under investigation, the class of strongly chordal graphs are of interest 
from a computational point of view. 
Farber showed that a minimum weight dominating set 
can be computed in polynomial time on strongly 
chordal graphs~\cite{kn:farber}. 
Recently, Chang et al. showed that the $k$-rainbow dominating number 
is equal to the so-called weak $\{k\}$-domination 
number for strongly chordal 
graphs~\cite{kn:bresar,kn:bresar2,kn:chang2}.  
A weak $\{k\}$-dominating function 
is a function $g:V(G) \rightarrow \{0,\dots,k\}$ such that, 
for every vertex $x$, 
\begin{equation}
\label{eqn4}
g(x)=0 \quad \text{implies} \quad \sum_{y \in N(x)} g(y) \geq k.
\end{equation} 
The weak domination number $\gamma_{wk}(G)$ minimizes 
$\sum_{x \in V(G)} g(x)$, over all weak $\{k\}$-dominating functions $g$.  
In their paper, Chang et al. show that the $k$-rainbow domination number 
is polynomial for block graphs. As far as we know, the $k$-rainbow 
domination number is open for strongly chordal graphs. 

\bigskip 

It is easy to see that, for each $k$, the $k$-rainbow domination 
problem can be formulated in monadic second-order logic. This shows 
that, for each $k$, the parameter is computable in linear time 
for graphs of bounded treewidth or rankwidth~\cite{kn:courcelle}. 

\begin{theorem}
\label{thm courcelle}
Let $k \in \mathbb{N}$. There exists a linear-time algorithm that 
computes $\gamma_{rk}(G)$ for graphs of bounded rankwidth. 
\end{theorem}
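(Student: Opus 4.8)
The plan is to cast $k$-rainbow domination as an optimization problem that is definable in monadic second-order logic, and then to invoke the metatheorem of Courcelle, Makowsky and Rotics, which guarantees that such problems are solvable in linear time on graphs of bounded clique-width, equivalently bounded rankwidth~\cite{kn:courcelle}. As the introduction already hints, the key is that $k$ is treated as a fixed constant, so that every formula we write has constant length; no single formula works uniformly in $k$.

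First I would encode a candidate $k$-rainbow function $f\colon V(G)\to 2^{[k]}$ by $k$ monadic (vertex-set) variables $X_1,\dots,X_k$, under the reading $i \in f(x) \iff x \in X_i$. With this encoding the hypothesis ``$f(x)=\es$'' of~\eqref{eqn1} becomes $\bigwedge_{i=1}^{k} x\notin X_i$, while its conclusion ``$f(N(x))=[k]$'' becomes $\bigwedge_{j=1}^{k}\exists y\,(\mathrm{adj}(x,y)\wedge y\in X_j)$. Hence~\eqref{eqn1} is captured by the formula
\[
\Phi(X_1,\dots,X_k)\;\equiv\;\forall x\;\Bigl[\;\bigl(\bigwedge_{i=1}^{k} x\notin X_i\bigr)\;\rightarrow\;\bigl(\bigwedge_{j=1}^{k}\exists y\,(\mathrm{adj}(x,y)\wedge y\in X_j)\bigr)\;\Bigr].
\]
This uses only vertex quantifiers, vertex-set quantifiers, and the adjacency predicate, so it lies in the fragment $\mathrm{MSO}_1$ that is the relevant one for clique-width and rankwidth.

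Next I would express the objective. Since $\|f\|=\sum_{x}|f(x)|=\sum_{i=1}^{k}|X_i|$, the cost is simply the sum of the cardinalities of the free set variables, that is, a linear evaluation term over the sets witnessing $\Phi$. Minimizing $\sum_{i=1}^{k}|X_i|$ over all tuples $(X_1,\dots,X_k)$ satisfying $\Phi$ is therefore a $\mathrm{LinEMSOL}$ optimization instance. Applying the Courcelle--Makowsky--Rotics theorem to a graph of bounded rankwidth --- after converting a rank-decomposition into a clique-width expression of bounded width --- then returns the value $\gamma_{rk}(G)$, together with an optimal labeling, in time linear in $|V(G)|$.

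The main difficulty is bookkeeping rather than mathematical depth. The formula $\Phi$ has size $\Theta(k)$, so the constant hidden in the linear running time grows with $k$ (and, as usual for these metatheorems, as a tower in the width bound); this is exactly the phenomenon noted earlier, and it is the reason $k$ must be fixed. I would also be explicit about the input convention, namely that the linear-time guarantee presupposes a bounded-width decomposition is available or computable within the same bound, which is the standard hypothesis for this class of algorithmic metatheorems.
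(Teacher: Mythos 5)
Your proposal is correct and follows exactly the route the paper takes: the paper's justification for Theorem~\ref{thm courcelle} is precisely that, for fixed $k$, the $k$-rainbow domination condition is expressible in monadic second-order logic (via $k$ set variables, as you spell out) and the objective is a LinEMSOL-style cardinality sum, so the Courcelle-type metatheorem for bounded rankwidth/clique-width applies. You merely make explicit the formula, the encoding, and the caveats about the constant depending on $k$ and on the availability of a decomposition, all of which the paper leaves implicit.
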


For example, Theorem~\ref{thm courcelle} implies that, for each $k$, 
$\gamma_{rk}(G)$ is computable in polynomial time for 
distance-hereditary graphs, ie, the graphs of rankwidth 1. Also, 
graphs of bounded outerplanarity have bounded treewidth, which implies 
bounded rankwidth. 

\bigskip 
 
A direct application of the monadic second-order 
theory involves a constant which is an exponential function of $k$. 
In the following section we show that, often, this 
exponential factor can be avoided. 

\section{$k$-Rainbow domination on cographs}

Cographs are the graphs without an induced $P_4$. 
As a consequence, cographs are completely decomposable by 
series and parallel operations, that is, joins and unions~\cite{kn:gallai}. 
In other words, a graph is a cograph 
if and only if every nontrivial, induced subgraph is disconnected 
or its complement is disconnected. Cographs have a rooted, 
binary decomposition tree, called a cotree, 
with internal nodes labeled as joins 
and unions~\cite{kn:corneil3}.   

\bigskip 

For a graph $G$ and $k \in \mathbb{N}$, 
let $F(G,k)$ denote the set of $k$-rainbow functions on $G$. 
Furthermore, define 
\begin{eqnarray}
F^{+}(G,k) &=&  \{\; f \in F(G,k) \;\mid\; 
\forall_{x \in V(G)} \; f(x) \neq \es \; \} \\
\text{and} \quad F^{-}(G,k) & = & F(G,k) \setminus F^{+}(G,k).
\end{eqnarray}

\bigskip 

\begin{theorem}
\label{thm cograph}
There exists a linear-time algorithm to compute the $k$-rainbow domination 
number $\gamma_{rk}(G)$ for cographs $G$ and $k \in \mathbb{N}$. 
\end{theorem}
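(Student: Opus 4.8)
The plan is to run a bottom-up dynamic program over the cotree of $G$. The point of the cotree is that every internal node is either a union or a join of its two children, so it suffices to understand how an optimal labeling behaves under these two operations and to carry, at each node, a compact summary of the partial labelings of the subgraph rooted there. For a labeling $f$ of the subgraph $H$ sitting at a node, the only features that matter for the rest of the computation are its cost $\|f\|_{H}$ together with two subsets of $[k]$: the set $S=f(V(H))$ of colors that occur in $H$, and the deficiency
\[
M(f)\;=\;\bigcup_{\,x\,:\,f(x)=\es}\bigl([k]\setminus f(N(x))\bigr),
\]
i.e.\ the colors that the still-unsatisfied empty vertices of $H$ must receive from outside. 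I will store, for each node $H$ and each attainable pair $(S,M)$, the minimum cost $c_{H}(S,M)$ of a labeling of $H$ realizing it; the functions in $F^{+}(H,k)$ are exactly those contributing entries with $M=\es$, so this table refines the $F^{+}/F^{-}$ dichotomy.

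The transitions are forced by the two operations. At a union node $H=H_{1}\cup H_{2}$ there are no new edges, so $S=S_{1}\cup S_{2}$, $M=M_{1}\cup M_{2}$, and costs add. At a join node $H=H_{1}\oplus H_{2}$ every vertex of $H_{1}$ becomes adjacent to all of $H_{2}$, so an empty vertex of $H_{1}$ now sees, in addition, all colors of $S_{2}$; hence its residual demand shrinks and the new deficiency is $M=(M_{1}\setminus S_{2})\cup(M_{2}\setminus S_{1})$, again with $S=S_{1}\cup S_{2}$ and additive cost. The base case is a single vertex $v$: labeling $v$ with a color set $C\neq\es$ gives $(S,M)=(C,\es)$ at cost $|C|$, while leaving $v$ empty gives $(S,M)=(\es,[k])$ at cost $0$. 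The answer $\gamma_{rk}(G)$ is then $\min\{\,c_{r}(S,\es)\,\}$ over all $S$ at the root $r$, since at the top there is no further partner to satisfy the remaining empty vertices.

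As stated, this table has up to $4^{k}$ entries, which is exactly the exponential dependence on $k$ that I want to eliminate. The key observation is that the whole construction is invariant under permuting the colors of $[k]$, so $c_{H}(S,M)$ depends only on the four cardinalities $|S\cap M|$, $|S\setminus M|$, $|M\setminus S|$ and $k-|S\cup M|$; there are only $O(k^{3})$ such types. Moreover, because the colors used in two different subtrees may be permuted independently before they are combined, the set unions and differences appearing in the two transitions can realize precisely those result-types allowed by the overlaps of the children's four parts, and each such overlap pattern is described by polynomially many integer parameters. Consequently the table of a node is computable from those of its children in time polynomial in $k$, the cotree has $O(n)$ nodes, and the total running time is linear in $n$. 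The main obstacle is to make this symmetry reduction rigorous: one must prove that tracking only the type of $(S,M)$ (rather than the actual sets) loses nothing, and that at a union node every overlap consistent with the prescribed cardinalities is actually achievable by independently recoloring the two subtrees, so that the $O(k^{3})$-entry table is both sound and complete. Verifying the join formula, and in particular that $M(f)$ is exactly the quantity a future partner must cover, is the other point that needs care.
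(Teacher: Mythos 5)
Your proposal is sound in outline but takes a genuinely different, and considerably heavier, route than the paper. The paper does not track color sets at all: by exchange arguments it shows that an optimal labeling can always be normalized so that each cotree node needs only two numbers, $R^{+}(H)=\max\{|V(H)|,k\}$ (all labels nonempty and all of $[k]$ used) and $R^{-}(H)$ (some label empty and $f$ already valid on $H$), with $O(1)$ recurrences at union and join nodes --- e.g.\ at a join, $R^{-}(H)=\min\{R^{+}(H_1),R^{+}(H_2),R^{-}(H_1),R^{-}(H_2),2k\}$, the $2k$ term absorbing every ``deferred'' configuration. This yields true $O(n)$ time with no dependence on $k$ beyond arithmetic. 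Your $(S,M)$-state DP is a correct alternative: the deficiency set $M(f)$ is indeed the right aggregate (each ancestor join subtracts the partner's color set from every empty vertex's residual demand simultaneously, so tracking the union loses nothing, and $M=\es$ at the root is exactly validity), the union and join transitions you give are right, and the reduction to types under color permutations is provable by the independent-recoloring argument you sketch. But note what each approach buys: yours gives only $O(\mathrm{poly}(k)\cdot n)$ --- per node you enumerate pairs of child types and all $4\times 4$ contingency tables with prescribed marginals, something like $k^{15}$ work --- which is linear only for fixed $k$, whereas the whole point of this section of the paper is to beat the $k$-dependence coming from Courcelle's theorem, and the paper's normalization gets it down to a constant.

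Two loose ends in your write-up. First, you leave the symmetry lemma as a stated obligation; it is true and routine (a color permutation maps valid labelings to valid labelings of equal cost and acts on $(S,M)$ componentwise, and permuting the two children independently realizes every overlap pattern consistent with the cardinalities), but as written the proof is not complete. Second, your identification ``the functions in $F^{+}(H,k)$ are exactly those contributing entries with $M=\es$'' is wrong as stated: $M=\es$ characterizes the labelings that are already valid $k$-rainbow functions on $H$, which includes members of $F^{-}(H,k)$ whose empty vertices happen to be dominated inside $H$; $F^{+}$ corresponds to the stronger condition that no vertex is labeled $\es$. This slip does not affect the algorithm, since the final answer only queries $M=\es$ at the root, but the remark should be corrected.
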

\begin{proof}
We describe a dynamic programming algorithm 
to compute the $k$-rainbow 
domination number.
A minimizing $k$-rainbow function can be 
obtained by backtracking.

\medskip

\noindent
It is easy to determine the minimal cost of $k$-rainbow functions that have 
no empty set-labels. We therefore concentrate on those $k$-rainbow 
functions for which some labels are empty sets. 

\medskip 
  
\noindent
Let $k \in \mathbb{N}$.  For a cograph $H$ define 
\begin{eqnarray}
\label{eqn5}
R^{+}(H) & =& 
\min \; \{\; \|f\|_{H} \; \mid \;  
f \in F^{+}(H,k) \; \text{and}\;  
f(V(H)) = [k] \;\}, \\
\label{eqn52}
R^{-}(H) &=& 
\min \; \{\; \|f\|_{H} \; \mid \; 
f \in F^{-}(H,k)\;\}.
\end{eqnarray}
Here, we adopt the convention that $R^{-}(H)=\infty$ if $F^{-}(H,k)=\es$. 

\medskip 

\noindent
Notice that 
\begin{equation}
\label{eqn16}
\boxed{R^{+}(H) = \max \; \{\; |V(H)|,\; k\;\}.}
\end{equation}

\medskip

\noindent
Assume that $H$ is the union of two smaller cographs $H_1$ and $H_2$. 
Then 
\begin{equation}
\label{eqn6}
R^{-}(H)  = \min \; \{
R^{-}(H_1)+|V(H_2)|,\;
R^{-}(H_2)+|V(H_1)|, \; 
R^{-}(H_1)+R^{-}(H_2) \;\}. 
\end{equation}
 
\medskip 

\noindent
Now assume that $H$ is the join of two smaller cographs, $H_1$ and $H_2$. 
Then we have 
\begin{equation}
\label{eqn7}
R^{-}(H) = 
\min \;\{\;  
R^{+}(H_1), \;R^{+}(H_2),
\;R^{-}(H_1), \;R^{-}(H_2),\; 2k \; \}.   
\end{equation}

\medskip

\noindent
We prove the correctness of Equation~\eqref{eqn7} below.

\medskip 

\noindent
Let $f$ be a $k$-rainbow function from $F^{-}(H,k)$ 
with minimum cost over $H$.  Consider the following cases.  
\begin{enumerate}[\rm (a)]
\item $f(x) \neq \es$ for all $x \in V(H_1)$. Then there is a vertex 
with an empty label in $H_2$. Let $L_2=\cup_{z \in V(H_2)} f(z)$. 
Define an other $k$-rainbow function $f^{\prime}$ as follows. 
For an arbitrary vertex $x \in V(H_1)$ let $f^{\prime}(x)=f(x) \cup L_2$. 
For all other vertices $y$ in $H_1$ let $f^{\prime}(y)=f(y)$ and for all 
vertices $z$ in $H_2$ let $f^{\prime}(z)=\es$.  Notice that 
$f^{\prime}(V(H_1))=[k]$. So, $f^{\prime}$ is a 
$k$-rainbow function with at most the same cost as $f$. This shows that 
$R^{-}(H)=R^{+}(H_1)$.    
\item $f(y) \neq \es$ for all $y \in V(H_2)$. 
This case is similar to the previous, so that 
$R^{-}(H) = R^{+}(H_2)$. 
\item $f(x) = f(y) = \es$ for some $x \in V(H_1)$ and some $y \in V(H_2)$. 
Let 
\[L_1 = f(V(H_1)) \quad \text{and}\quad 
L_2=f(V(H_2)).\] 
For each color $\ell \in [k]$, 
let $\nu_{\ell}$ be the number of times that $\ell$ is used as a label, 
that is,  
\[\nu_\ell = | \{ \; x \; | \; x \in V(H) \quad 
\text{and}\quad \ell \in f(x)\;\}|.\]   
Consider the following two subcases.
\begin{enumerate}[\rm (i)]
\item 
There exists some $\ell$ with $\nu_\ell = 1$. 
Let $u$ be the unique vertex with 
$\ell \in f(u)$. 
Assume that $u \in V(H_1)$. The case where $u \in V(H_2)$ is 
similar. Then, $u$ is  
adjacent to all $x \in V(H)$ with $f(x) = \es$.  
Modify $f$
to $f'$, such that $f'(u) = f(u) \cup (L_2 \setminus L_1)$, 
$f'(x) = f(x)$ for all $x \in V(H_1)\setminus \{u\}$, 
and $f'(y) = \es$ for all $y \in V(H_2)$.
Then 
$f'$ is a $k$-rainbow function from $F^{-}(H,k)$ and 
the cost of $f'$ is at most the cost of $f$. 
Moreover, $f'$ restricted to $H_1$ 
is a $k$-rainbow function with minimum cost over $H_1$. Thus,
in this case,  $R^{-}(H) = R^{-}(H_1)$.
\item 
For all $\ell$, $\nu_\ell \geq 2$.  
Then, the cost of $f$ over $H$ is at least $2k$.
In this case, we use an alternative function $f'$, 
which selects a certain vertex $u$ 
in $H_1$ and a certain vertex $v$ in $H_2$, 
and set $f'(u) = f'(v) = [k]$. For
all vertices $z \in V(H)\setminus \{u,v\}$, let $f'(z) = \es$.  
The cost of $f'$ is $2k$ (which is at most the cost of $f$), 
and $f'$ remains a $k$-rainbow function from $F^{-}(H,k)$. 
Thus, in this case, $R^{-}(H) = 2k$.
\end{enumerate}
\end{enumerate}
This proves the correctness of Equation~\eqref{eqn7}.

\medskip

\noindent
At the root of the cotree, 
we obtain $\gamma_{rk}(G)$ via  
\begin{equation}
\gamma_{rk}(G) = \min\; \{\; |V(G)|, \; R^{-}(G)\; \}.
\end{equation}

\medskip

\noindent
The cotree can be obtained in linear 
time (see, eg,~\cite{kn:bretscher,kn:corneil2,kn:habib}).  
Each $R^{+}(H)$ is obtained in $O(1)$ time via Equation~\eqref{eqn16}, 
and 
$R^{-}(H)$ is obtained in $O(1)$ time 
via Equations~\eqref{eqn6} and~\eqref{eqn7}. 

\medskip 

\noindent
This proves the theorem. 
\qed\end{proof}

\bigskip 

The weak $\{k\}$-domination number (recall the definition near Equation~\eqref{eqn4}) 
was introduced by Bre\u{s}ar, Henning and 
Rall in~\cite{kn:bresar2} as an accessible, `monochromatic version' of 
$k$-rainbow domination. 
In the following theorem we turn the tables. 

In general, for graphs $G$ one has that $\gamma_{wk}(G) \leq \gamma_{rk}(G)$ 
since, given a $k$-rainbow function $f$ one obtains a 
weak $\{k\}$-dominating function $g$ by defining, for 
$x \in V(G)$,  $g(x)=|f(x)|$. 
The parameters $\gamma_{wk}$ and $\gamma_{rk}$ do not always 
coincide. For example $\gamma_{w2}(C_6)=3$ and $\gamma_{r2}(C_6)=4$. 
Bre\u{s}ar et al. ask, in their Question~3, for which graphs 
the equality $\gamma_{w2}(G)=\gamma_{r2}(G)$ holds. As far as we 
know this problem is still open. Chang et al. 
showed that weak $\{k\}$-domination and $k$-rainbow 
domination are equivalent for strongly chordal graphs~\cite{kn:chang2}. 

\bigskip 

For cographs equality does not hold. For example,   
\begin{equation}
\label{eqn22}
\text{when} \quad  G=(P_3 \oplus P_3) \otimes (P_3 \oplus P_3) 
\quad \text{then}\quad 
\gamma_{w3}(G)=4 \quad \text{and}\quad \gamma_{r3}(G)=6.  
\end{equation}

\bigskip 

Let $G$ be a graph and let $k \in \mathbb{N}$.  
For a function $g: V(G) \rightarrow \{0,\dots,k\}$  
we write 
$\|g\|_{G}=\sum_{x \in V(G)} g(x)$. Furthermore, for $S \subset V(G)$ we write 
$g(S)=\sum_{x \in S} g(x)$. 

\bigskip 
 
\begin{theorem}
\label{thm weak cograph}
There exists an $O(k^2 \cdot n)$ algorithm to compute 
the weak $\{k\}$-domination number for cographs when a cotree is a 
part of the input. 
\end{theorem}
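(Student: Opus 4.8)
The plan is to run a bottom-up dynamic program over the given cotree, in the same spirit as the proof of Theorem~\ref{thm cograph} but tracking \emph{weighted demands} instead of colour sets. For a node of the cotree inducing the cograph $H$, and for each $j\in\{0,1,\dots,k\}$, define
\[
R(H,j)=\min\Big\{\|g\|_H \;:\; g:V(H)\to\{0,\dots,k\},\ \forall x\ \big(g(x)=0\Rightarrow \textstyle\sum_{y\in N_H(x)} g(y)\ge j\big)\Big\}.
\]
Here $j$ is the residual demand that the neighbours \emph{inside} $H$ must meet at a $0$-vertex; the weight supplied across a join from the other part is accounted for by lowering $j$. The weak $\{k\}$-domination number is then $\gamma_{wk}(G)=R(G,k)$, and the table stored at each node has size $O(k)$.

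The first ingredient is a padding lemma that lets me avoid carrying the total weight $\|g\|$ as a second index. Given any feasible $g$ that witnesses $R(H,j)$, raising the value of one vertex never violates~\eqref{eqn4} (a larger neighbour weight only helps a $0$-vertex); hence for every $\tau$ with $0\le\tau\le k$ the minimum cost of a feasible $g$ at demand $j$ whose total weight is at least $\tau$ equals exactly $\max(R(H,j),\tau)$. This collapse of the total-weight dimension is what keeps every table one-dimensional, and it is the reason a quadratic bound is plausible at all.

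The transitions come next. If $H=H_1\oplus H_2$ there are no edges between the parts, so a $0$-vertex is dominated only from its own side and $R(H,j)=R(H_1,j)+R(H_2,j)$. If $H=H_1\otimes H_2$ then each vertex of one part sees the whole of the other, so a $0$-vertex of $H_1$ receives the entire weight $\|g_2\|$ of $H_2$ and its internal demand drops to $(j-\|g_2\|)^+$, and symmetrically for $H_2$. Writing $a,b$ for the internal demands actually imposed on $H_1,H_2$ and invoking the padding lemma, I would prove
\[
R(H,j)=\min_{0\le a,b\le j}\ \max\!\big(R(H_1,a),\,j-b\big)+\max\!\big(R(H_2,b),\,j-a\big).
\]
Correctness follows from a two-way inequality in the style of the case analysis after Equation~\eqref{eqn7}: from an optimal $g$ one reads off $a,b$ from the realised cross-weights to obtain ``$\ge$'', while any admissible pair $(a,b)$ yields a feasible function of the stated cost by padding each side up to the supply the other side relies on, giving ``$\le$''. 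I expect the subtle point to be that genuinely \emph{mutual} support is sometimes forced here (already for a join of two copies of several disjoint large stars, neither side alone can carry the load), so the double minimum over $(a,b)$ cannot be collapsed to a single scan as in the rainbow case.

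The remaining, and main, difficulty is meeting the $O(k^2\cdot n)$ bound. Taken verbatim the join recurrence spends $O(1)$ per triple $(j,a,b)$, i.e.\ $O(k^3)$ at a node, a factor $k$ too much. The plan is to exploit that $R(H,\cdot)$ is non-decreasing (and slowly varying), which makes the inner minimisation over $b$, for fixed $j$ and $a$, split into a range where the padding term $j-b$ dominates and a range where $R(H_2,b)$ dominates. Precomputing, once per join node, the prefix minima of the shifted arrays $b\mapsto R(H_2,b)-b$ and $a\mapsto R(H_1,a)-a$ then lets each inner optimum be returned by a constant number of range lookups, so that all of $R(H,0),\dots,R(H,k)$ are produced in $O(k^2)$ time per join. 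With $O(n)$ cotree nodes and $O(k)$ work at each union and leaf, the total running time is $O(k^2\cdot n)$. The hard part will be verifying that this prefix-minimum reformulation, together with the monotonicity bookkeeping at the case boundaries, really computes the double minimum exactly.
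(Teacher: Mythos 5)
Your skeleton is the paper's: a bottom-up pass over the cotree keeping, at each node $H$, a $(k+1)$-entry table indexed by the residual demand $j$ (the paper's table $W(H,q)$ is the same object with $q=k-j$ playing the role of external supply), with the additive rule at union nodes. The genuine difference is at join nodes, and there your treatment is the more careful one. The paper's Equation~\eqref{eqn25} restricts attention to pairs $(W_1,W_2)$ with $W_1=W(H_1,q+W_2)$ and $W_2=W(H_2,q+W_1)$, i.e., each side must pay exactly its own minimum given the supply it receives; your padding lemma and the terms $\max(R(H_1,a),j-b)$, $\max(R(H_2,b),j-a)$ additionally allow one side to be overweighted purely in order to feed the other. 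This extra freedom is not cosmetic. Take $k=2$ and $H=H_1\otimes H_2$ with $H_1=K_1$ and $H_2$ a disjoint union of $m\ge2$ edges: then $\gamma_{w2}(H)=2$ (weight $2$ on the apex), realized in your recurrence by $a=2$, $b=0$, where the apex is padded from its own optimum $R(H_1,2)=1$ up to $j-b=2$; but $W(H_1,0)=1$, $W(H_1,t)=0$ for $t\ge2$, $W(H_2,1)=m$, and $W(H_2,t)=0$ only for $t\ge2$, so the unique pair satisfying both conditions of \eqref{eqn25} is $(0,2m)$. So your ``mutual support'' worry is exactly on target, your two-way inequality (reading off $a=(j-s_2)^+$ and $b=(j-s_1)^+$ from an optimum in one direction, padding in the other) is the right correctness proof, and the padding lemma itself is sound because raising any single value preserves \eqref{eqn4} and $\tau\le k\le k\,|V(H)|$.

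The one real gap in your write-up is the $O(k^2)$-per-join-node evaluation, which you flag but do not verify; it does close, and more simply than via prefix minima of shifted arrays. For fixed $j$, parameterize by the total weight $s_1$ spent on $H_1$ instead of by $(a,b)$: a pair $(s_1,s_2)$ is realizable iff $s_1\ge R(H_1,(j-s_2)^+)$ and $s_2\ge R(H_2,(j-s_1)^+)$, so the least admissible $s_2$ for a given $s_1$ is $\max\bigl(R(H_2,(j-s_1)^+),\,j-d^{*}(s_1)\bigr)$, where $d^{*}(s_1)$ denotes the largest demand $d$ with $R(H_1,d)\le s_1$. Since $R(H_1,\cdot)$ is non-decreasing, all values $d^{*}(0),\dots,d^{*}(2k)$ are produced by a single monotone scan, so each $j$ costs $O(k)$ and each join node $O(k^2)$, giving the claimed $O(k^2\cdot n)$ overall. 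With that substitution your argument is complete; as written, the paper's own join recurrence is the step I would not rely on.
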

\begin{proof}
Let $k \in \mathbb{N}$. 
For a cograph $H$ and $q \in \mathbb{N} \cup \{0\}$, define 
\begin{multline}
\label{eqn23}
W(H,q)=\min \;  \{\;\|g\|_{H}\;|\; 
g:V(H) \rightarrow \{0,\dots,k\} \quad\text{and}\\
\forall_{x \in V(G)} \; g(x)=0 \quad \Rightarrow \quad g(N(x))+ q \geq k\;\}.
\end{multline}

\medskip

\noindent
When a cograph $H$ is the union of two 
smaller cographs $H_1$ and $H_2$ then 
\begin{equation}
\label{eqn24}
\gamma_{wk}(H)=\gamma_{wk}(H_1)+\gamma_{wk}(H_2).
\end{equation}

\noindent
In such a case, we have
\begin{equation}
\label{eqn24-2}
W(H,q)=  W(H_1,q) + W(H_2,q).  
\end{equation}

\medskip 

\noindent
When a cograph $H$ is the join of two cographs $H_1$ and $H_2$ 
then the minimal cost of a 
weak $\{k\}$-dominating function is bounded from above by $2k$.  Then 
\begin{equation}
\label{eqn25}
W(H,q)= \min \; \{\;W_1+W_2\;|\; 
W_1=W(H_1,q+W_2) \quad\text{and}\quad 
W_2=W(H_2,q+W_1) \;\}.  
\end{equation}

\medskip 

\noindent
The weak $\{k\}$-domination number of a cograph~$G$, $W(G,0)$,  
can be obtained via the above recursion, 
spending $O(k^2)$ time in each of the $n$ nodes in the cotree. 
This completes the proof. 
\qed\end{proof}

\bigskip 

\begin{remark}
A $\{k\}$-dominating function~\cite{kn:domke}  
$g:V(G)\rightarrow \{0,\dots,k\}$  
satisfies 
\[\forall_{x \in V(G)} \;  g(N[x]) \geq k.\]  
The $\{k\}$-domination number $\gamma_{\{k\}}(G)$ 
is the minimal cost of a $\{k\}$-dominating function. 
A similar proof as for Theorem~\ref{thm weak cograph} shows the 
following theorem. 
\begin{theorem}
There exists an $O(k^2 \cdot n)$ algorithm to compute $\gamma_{\{k\}}(G)$ 
when $G$ is a cograph. 
\end{theorem}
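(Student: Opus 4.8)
The plan is to mimic the dynamic-programming structure of Theorem~\ref{thm weak cograph}, adjusting only the base case and the boundary condition at empty labels so as to reflect the closed-neighborhood constraint $g(N[x]) \geq k$ of $\{k\}$-domination rather than the open-neighborhood, conditional constraint of weak $\{k\}$-domination. First I would define, for a cograph $H$ and a carry-parameter $q \in \mathbb{N} \cup \{0\}$, a quantity analogous to $W(H,q)$ in Equation~\eqref{eqn23}, but where the requirement reads
\[
\forall_{x \in V(H)} \quad g(N[x]) + q \geq k.
\]
Here $q$ again records the total weight already contributed by vertices outside $H$ that are universal to $V(H)$ (ie, the join partners further up the cotree); since $\{k\}$-domination uses the closed neighborhood, the constraint is now \emph{unconditional}, applying to every vertex rather than only to those with $g(x)=0$. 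This slight simplification is actually favorable: there is no case split on whether a label is zero.

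Next I would establish the recurrences. At a leaf (single vertex $v$), the sole constraint is $g(v)+q \geq k$, so the base value is $\max\{0,\,k-q\}$. For the union of $H_1$ and $H_2$, vertices of $H_1$ and $H_2$ are non-adjacent, so the carry $q$ passes to each part independently and the optimum splits additively, giving an analogue of Equations~\eqref{eqn24} and~\eqref{eqn24-2}, namely $\widehat{W}(H,q) = \widehat{W}(H_1,q)+\widehat{W}(H_2,q)$. For the join, every vertex of $H_1$ sees all of $H_2$ in its closed neighborhood and vice versa, so the total weight placed in $H_2$ acts as additional carry for $H_1$ and symmetrically; this yields a fixed-point recurrence of exactly the same shape as Equation~\eqref{eqn25}, with $W_1 = \widehat{W}(H_1,q+W_2)$ and $W_2=\widehat{W}(H_2,q+W_1)$, minimized over consistent pairs. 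Because a single vertex of weight $k$ in each of $H_1$ and $H_2$ already dominates everything in the join under the closed-neighborhood rule, the optimum at a join node is bounded above by $2k$, which justifies restricting the search for $(W_1,W_2)$ to the range $\{0,\dots,k\}^2$ and keeps each node's work at $O(k^2)$.

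The correctness argument for the join recurrence is the step I expect to require the most care, so I would treat it as the main obstacle. The content is to show that an optimal $\{k\}$-dominating function on the join, when restricted to $H_1$, is an optimal solution to the subproblem with carry $q+W_2$ where $W_2$ is the weight it places on $H_2$, and symmetrically; the mutual dependence of the two carries is what forces the fixed-point formulation in Equation~\eqref{eqn25}. The argument is the same exchange/replacement argument underlying Theorem~\ref{thm weak cograph}: one shows both that any globally optimal $g$ induces a consistent pair $(W_1,W_2)$ feasible for the recurrence, and conversely that any consistent pair yields a valid $\{k\}$-dominating function of the stated cost, so the minimum over consistent pairs equals $\widehat{W}(H,q)$. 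The only genuine difference from the weak case is that the feasibility check is uniform over all vertices (no $g(x)=0$ precondition), which if anything shortens the verification.

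Finally I would read off $\gamma_{\{k\}}(G) = \widehat{W}(G,0)$ at the root of the cotree and tally the running time. The cotree has $O(n)$ nodes; at each union node the update is $O(1)$ per carry value, and at each join node the fixed-point search over $\{0,\dots,k\}^2$ costs $O(k^2)$, for a total of $O(k^2 \cdot n)$, matching the bound claimed in the theorem. This completes the plan.
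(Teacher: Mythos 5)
Your proposal is essentially the paper's intended argument: the paper proves this theorem only via the remark that ``a similar proof as for Theorem~\ref{thm weak cograph}'' works, and your adaptation --- carrying the parameter $q$ for the weight of join-partners higher in the cotree, making the closed-neighbourhood constraint $g(N[x])+q\geq k$ unconditional, keeping the additive union rule and the fixed-point join rule in the shape of Equations~\eqref{eqn24-2} and~\eqref{eqn25}, and reading off the answer at the root with $q=0$ in $O(k^{2}\cdot n)$ total time --- is precisely that adaptation. No substantive difference from the paper's approach to report.
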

Similar results can be obtained for, eg, the $(j,k)$-domination number, 
introduced by Rubalcaba and Slater~\cite{kn:rubalcaba,kn:rubalcaba2}. 
\end{remark}
\begin{remark}
A frequently studied generalization of cographs is the class of 
$P_4$-sparse graphs. 
A graph is $P_4$-sparse if every set of 5 vertices induces at most one  
$P_4$~\cite{kn:hoang,kn:jamison}. 
We show in Appendix~\ref{appendix cographs} that the 
rainbow domination problem can be solved in linear time on 
$P_4$-sparse graphs. 
\end{remark}
  
\section{Weak $\{k\}$-L-domination on trivially perfect graphs}

Chang et al. were able to solve the $k$-rainbow domination problem 
(and the weak $\{k\}$-domination problem) 
for two subclasses of strongly chordal graphs, namely for 
trees and for blockgraphs. 
In order to obtain linear-time algorithms, they   
introduced a variant, called the 
weak $\{k\}$-L-domination problem~\cite{kn:chang2,kn:chang}.   
In this section we show that this problem can be solved in 
$O(k\cdot n)$ time for trivially perfect graphs. 

\begin{definition}
A $\{k\}$-assignment of a graph $G$ is a map $L$ from $V(G)$ to ordered pairs 
of elements from $\{0,\dots,k\}$.  Each vertex $x$ is assigned a 
label $L(x)=(a_x,b_x)$, 
where $a_x$ and $b_x$ are elements of $\{0,\dots,k\}$. 
A weak $\{k\}$-L-dominating function is a function $w:V(G) \rightarrow \{0,\dots,k\}$ 
such that, for each vertex $x$ the following two conditions hold. 
\begin{eqnarray}
w(x) & \geq & a_x, \quad\text{and}\\  
w(x) & =0 &  \quad\Rightarrow\quad w(N[x]) \geq b_x.
\end{eqnarray} 
The weak $\{k\}$-L-domination number is defined as 
\begin{equation}
\label{27}
\gamma_{wkL}(G)=\min\;\{\;\|g\|\;|\; \text{$g$ is a weak 
$\{k\}$-L-dominating function on $G$}\;\}.
\end{equation}
\end{definition}

Notice that 
\begin{equation}
\label{eqn28}
\forall_{x \in V(G)}\; L(x)=(0,k) \quad\Rightarrow \quad 
\gamma_{wk}(G)=\gamma_{wkL}(G).  
\end{equation}

\bigskip 

\begin{definition}
A graph is trivially perfect if it has no induced $P_4$ or $C_4$. 
\end{definition}

Wolk investigated the trivially perfect graphs as the comparability graphs of 
forests. 
Each component of a trivially perfect graph $G$ has a model which is a rooted tree $T$  
with vertex set $V(G)$.  Two vertices of $G$ are adjacent if, in $T$, one lies on the path 
to the root of the other one. Thus each path from a leaf to the root is a maximal clique 
in $G$ and these are all the maximal cliques.    
See~\cite{kn:chu,kn:golumbic} for the recognition of these graphs. 
In the following we assume that a rooted tree $T$ as a model for the graph is a part of 
the (connected) input. 

\bigskip 

We simplify the problem by using two basic observations. 
(See~\cite{kn:chang2,kn:chang} for similar observations.)    
Let $T$ be a rooted tree which is the model for a connected trivially perfect graph $G$. 
Let $R$ be the root of $T$; note that this is a universal vertex in $G$. 
We assume that $G$ is equipped with a $\{k\}$-assignment $L$, which attributes 
each vertex $x$ with a pair 
$(a_x,b_x)$ of numbers from $\{0,\dots,k\}$. 
\begin{enumerate}[\rm (I)]
\item There exists a weak $\{k\}$-L-dominating function $g$ of 
minimal cost such that 
\begin{equation}
\label{eqn29}
\forall_{x \in V(G)\setminus \{R\}} \; a_x > 0 \quad \Rightarrow \quad g(x)=a_x.
\end{equation}
\item There exists a weak $\{k\}$-L-dominating function $g$ of minimal cost 
such that 
\begin{equation}
\label{eqn32}
\forall_{x \in V(G) \setminus \{R\}} \; 
a_x=0 \quad\text{and}\quad b_x \leq \sum_{y \in N[x]} a_y \quad \Rightarrow \quad g(x)=0.
\end{equation}
\end{enumerate}

\bigskip 

\begin{definition}
The \underline{reduced instance} of the weak $\{k\}$-L-domination problem 
is the subtree $T^{\prime}$ of $T$ with vertex set $V(G^{\prime}) \setminus W$, where 
\begin{multline}
\label{eqn30}
W = \{\; x \;|\; x \in V(G) \setminus \{R\} \quad 
\text{and}\quad a_x >0\;\} \quad \cup \quad  \\ 
\{\;x \;|\; x \in V(G) \setminus \{R\} \quad\text{and}\quad a_x =0 
\quad\text{and}\quad \sum_{y \in N[x]} a_y \geq b_x\;\}.   
\end{multline}
The labels of the reduced instance are, 
for $x \neq R$, $L(x)=(a_x^{\prime},b_x^{\prime})$, where  
\begin{equation}
\label{eqn31} 
a_x^{\prime}=0 \quad \text{and}\quad 
b_x^{\prime}=b_x - \sum_{y \in N[x]} a_y, 
\end{equation}
and the root $R$ has a label $L(R)=(a_R^{\prime},b_R^{\prime})$, where    
\begin{equation}
\label{eqn33} 
a_R^{\prime}=a_R \quad\text{and}\quad   
b_R^{\prime}=\max\;\{\;0,\;b-\sum_{x \in V(G) \setminus \{R\}} a_x\;\}.
\end{equation}
\end{definition}

\bigskip 

The previous observations prove the following lemma. 

\begin{lemma}
Let $T^{\prime}$ and $L^{\prime}$ be a reduced instance of a weak 
$\{k\}$-L-domination problem. 
Then 
\[\gamma_{wkL}(G)=\gamma_{wkL^{\prime}}(G^{\prime})+
\sum_{x \in V(G)\setminus \{R\}} a_x.\] 
\end{lemma}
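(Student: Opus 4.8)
The plan is to prove the lemma by exhibiting a cost-preserving bijection between minimal weak $\{k\}$-L-dominating functions on the original instance $(G,L)$ that respect observations (I) and (II), and weak $\{k\}$-L-dominating functions on the reduced instance $(G',L')$. First I would invoke observations (I) and (II) to restrict attention to a minimal function $g$ on $G$ satisfying both \eqref{eqn29} and \eqref{eqn32}; by those observations such a minimizer exists, so computing $\gamma_{wkL}(G)$ over this restricted family loses nothing. For every vertex $x \in W$ the value $g(x)$ is then forced: if $a_x>0$ then $g(x)=a_x$ by \eqref{eqn29}, and if $a_x=0$ with $\sum_{y\in N[x]} a_y \geq b_x$ then $g(x)=0$ by \eqref{eqn32}. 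Hence the total contribution of the deleted set $W$ to $\|g\|$ is exactly $\sum_{x\in V(G)\setminus\{R\}} a_x$, since the vertices of $W$ with $a_x=0$ contribute nothing and every vertex with $a_x>0$ (other than $R$, which is retained) lies in $W$ and contributes $a_x$.

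Next I would show that restricting $g$ to $V(G')=V(G)\setminus W$ yields a weak $\{k\}$-L-dominating function $g'$ on $(G',L')$ of cost $\|g\|-\sum_{x\in V(G)\setminus\{R\}} a_x$, and conversely that any weak $\{k\}$-L-dominating function $g'$ on $(G',L')$ extends to a weak $\{k\}$-L-dominating function $g$ on $(G,L)$ by reinstating the forced values on $W$. The key point is that the new bound $b_x' = b_x - \sum_{y\in N[x]} a_y$ in \eqref{eqn31} exactly accounts for the domination already supplied by the deleted neighbors: for a surviving vertex $x\neq R$ with $g'(x)=0$, the original constraint $g(N[x])\geq b_x$ splits as a contribution $\sum_{y\in N[x]} a_y$ from the forced $a_y$ values (whether $y$ was deleted or survives, observation (I) pins its value at $a_y$ whenever $a_y>0$) plus the contribution from surviving neighbors with $a_y=0$, and the latter must cover the residual $b_x'$. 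The root's adjusted bound $b_R'=\max\{0,\,b_R-\sum_{x\in V(G)\setminus\{R\}} a_x\}$ in \eqref{eqn33} plays the same bookkeeping role for the universal vertex $R$, whose closed neighborhood is all of $V(G)$.

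The main obstacle I anticipate is verifying that deleting the vertices of $W$ does not break the domination constraints of the \emph{retained} vertices and, symmetrically, that the forced values on $W$ genuinely satisfy $W$'s own constraints. For the latter, the vertices of $W$ with $a_x>0$ satisfy $w(x)\geq a_x$ trivially since $g(x)=a_x$, and the implication with hypothesis $w(x)=0$ is vacuous there; the vertices of $W$ with $a_x=0$ have $g(x)=0$ but their bound $b_x$ is met precisely because $\sum_{y\in N[x]} a_y \geq b_x$ and each such $a_y$ is a lower bound for $g(y)$. For the former, I must confirm that the deleted neighbors of a retained vertex contribute no more than their $a_y$ to $g(N[x])$ is an \emph{under}estimate that is already absorbed into $b_x'$, so using only the $a_y$ part is safe. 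Because $a_x'=0$ for all $x\neq R$ in the reduced labels, the lower-bound constraints $w(x)\geq a_x'$ become vacuous except at $R$, which simplifies the correspondence considerably. Assembling these two directions gives equality of the minima up to the additive constant $\sum_{x\in V(G)\setminus\{R\}} a_x$, which is the claim.
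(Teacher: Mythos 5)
Your proposal is correct and follows exactly the route the paper intends: the paper gives no written proof beyond the remark that observations (I) and (II) establish the lemma, and your argument is precisely the elaboration of that remark --- fix a minimizer respecting \eqref{eqn29} and \eqref{eqn32}, account for the forced cost $\sum_{x \in V(G)\setminus\{R\}} a_x$ on $W$, and match the residual constraints through the adjusted bounds $b_x^{\prime}$ and $b_R^{\prime}$. The two feasibility checks you single out (that the forced values satisfy the constraints of the deleted vertices, and that the surviving constraints are correctly rebudgeted) are exactly the content the paper leaves implicit.
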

  
\bigskip 

In the following, let $G$ be a connected, trivially perfect graph and 
let $G$ be equipped with a $\{k\}$-assignment. 
Let $G^{\prime}=(V^{\prime},E^{\prime})$ be a reduced instance with model 
a $T^{\prime}$ and a root $R$, and a reduced assignment $L^{\prime}$.  
Let $g$ be a weak $\{k\}$-$L^{\prime}$-dominating function on $G^{\prime}$ of 
minimal cost. 
Notice that we may assume that 
\[\boxed{\forall_{x \in V(G^{\prime}) \setminus \{R\}} \; g(x) \in \{0,1\}.}\] 

\bigskip 

Let $x$ be an internal vertex in the tree $T^{\prime}$ 
and let $Z$ be the set of descendants of $x$. Let $P$ be the path in 
$T^{\prime}$ from $x$ to the root $R$.   
Assume that $Z$ is a union of  
$r$ distinct cliques, say $B_1,\dots,B_r$. 
Assume that the vertices of each $B_j$ are ordered 
$x^j_1,\dots,x^j_{r_j}$ such that 
\[\boxed{p \leq q \quad\Rightarrow\quad b^{\prime}_{x^j_p} \geq b^{\prime}_{x^j_q}.}\] 
Define $d_{x^j_p}=b^{\prime}_{x^j_p}-p+1$.  
Relabel the vertices of $Z$ as $z_1,\dots,z_{\ell}$ 
such that 
\[\boxed{p \leq q \quad\Rightarrow d_{z_p} \geq d_{z_q}.}\] 

\bigskip 

\begin{lemma}
There exists an optimal weak $\{k\}$-$L^{\prime}$-dominating function 
$g$ such that $g(z_i) \geq g(z_j)$ when $i < j$. 
\end{lemma}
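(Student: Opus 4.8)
The plan is to prove the lemma by an exchange argument, showing that any optimal weak $\{k\}$-$L^{\prime}$-dominating function can be modified, without increasing its cost and without destroying feasibility, so that the values $g(z_i)$ are sorted in non-increasing order according to the fixed ordering $z_1,\dots,z_\ell$. Since we have already argued that $g(x)\in\{0,1\}$ for every $x\in V(G^{\prime})\setminus\{R\}$, the function restricted to $Z$ is a $0/1$-vector, and the claim amounts to saying that we may push all the $1$'s to the front of the sequence $z_1,\dots,z_\ell$. Concretely, suppose $g$ is optimal but there exist indices $i<j$ with $g(z_i)=0$ and $g(z_j)=1$; I will show that the swapped function $g'$, with $g'(z_i)=1$, $g'(z_j)=0$, and $g'$ agreeing with $g$ elsewhere, is again a feasible weak $\{k\}$-$L^{\prime}$-dominating function of the same cost.

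The first step is to observe that the cost is obviously preserved: $\|g'\|=\|g\|$, since we only moved a single unit of weight from $z_j$ to $z_i$. The feasibility check is the substantive part. I would verify the two defining conditions for every vertex. The lower-bound condition $w(x)\ge a'_x$ is trivially maintained, because for the reduced instance $a'_x=0$ for all $x\neq R$, and the root is untouched. The real work is the domination condition: for each vertex $v$ with $g'(v)=0$ I must check that $g'(N[v])\ge b'_v$. Here I would split into cases according to where $v$ sits relative to the clique structure. The key geometric fact to exploit is that $z_i$ and $z_j$ both lie in the set $Z$ of descendants of $x$, and every such descendant is adjacent (in $G^{\prime}$) to every vertex on the path $P$ from $x$ to the root, so the closed neighborhoods involved overlap in a controlled way.

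The heart of the argument is to use the ordering $d_{z_1}\ge d_{z_2}\ge\cdots\ge d_{z_\ell}$ together with the clique-internal ordering $b'_{x^j_1}\ge\cdots\ge b'_{x^j_{r_j}}$. The quantity $d_{x^j_p}=b'_{x^j_p}-p+1$ is engineered precisely so that, within a single clique $B_j$, satisfying the requirement of the vertex with the largest residual demand, after accounting for the contributions of earlier vertices in the clique, guarantees the requirements of all later vertices in that clique. Thus when I move a $1$ from $z_j$ to $z_i$ with $i<j$, I only need to confirm that the vertex $z_j$ that lost its weight is still dominated: because $d_{z_i}\ge d_{z_j}$, the demand that $z_i$ represents is at least as stringent as that of $z_j$, and the newly activated vertex $z_i$ contributes to the closed neighborhood of $z_j$ exactly when they share a clique; for vertices of $Z$ outside $z_j$'s clique, their demands are met via the common path $P$ and by descendants, which are unaffected by an internal swap within $Z$. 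I would formalize this by checking that $g'(N[z_j])\ge b'_{z_j}$ follows from $g(N[z_i])\ge b'_{z_i}$ and the inequality $d_{z_i}\ge d_{z_j}$, and that no other vertex's demand can be violated since the total weight inside any closed neighborhood is non-decreasing except possibly at $z_j$, which is handled by the ordering.

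The main obstacle I anticipate is the bookkeeping in the case analysis for the domination condition, specifically disentangling the contributions of (a) vertices on the path $P$, (b) vertices in the same clique of $Z$ as the swapped pair, and (c) vertices in other cliques of $Z$, and confirming that relabelling from the per-clique order into the global order $z_1,\dots,z_\ell$ by the $d$-values is exactly what makes the single-swap exchange valid without a cascade of further repairs. Once that case distinction is set up cleanly, each individual verification is a short inequality, and a standard induction on the number of inversions in the sequence $(g(z_1),\dots,g(z_\ell))$ completes the proof: repeatedly applying the swap strictly decreases the number of inversions while preserving cost and feasibility, so finitely many swaps yield the desired sorted optimal function.
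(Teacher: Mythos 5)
Your overall strategy --- an exchange argument that swaps $g(z_i)=0$ with $g(z_j)=1$ for an inversion $i<j$ and then inducts on the number of inversions --- is the same as the paper's, but there is a genuine gap at the central step. You claim that $g'(N[z_j])\ge b'_{z_j}$ ``follows from $g(N[z_i])\ge b'_{z_i}$ and the inequality $d_{z_i}\ge d_{z_j}$.'' For an arbitrary inversion pair it does not. Write $z_i=x^m_{m^\ast}$ and $z_j=x^h_{h^\ast}$ with $B_m\ne B_h$. Feasibility at $z_i$ gives only $g(P)\ge b'_{z_i}-g(B_m\setminus\{z_i\})$, while after the swap $g'(N[z_j])=g(P)+g(B_h\setminus\{z_j\})$; unwinding the definition of $d$, the swap is safe exactly when $g(B_h\setminus\{z_j\})-(h^\ast-1)\ \ge\ g(B_m\setminus\{z_i\})-(m^\ast-1)$, and this can fail whenever $B_m$ still carries a $1$ below position $m^\ast$ or $B_h$ carries a $0$ above position $h^\ast$ --- that is, precisely when within-clique inversions are still present. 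Concretely: take $g(P)=3$, a clique $\{u_1,u_2\}$ with $b'_{u_1}=b'_{u_2}=4$ and $g(u_1)=0$, $g(u_2)=1$, and a singleton clique $\{v\}$ with $b'_v=4$ and $g(v)=1$; then $d_{u_1}=d_v=4>d_{u_2}=3$, so the global order may read $u_1,v,u_2$, and swapping the leftmost (cross-clique) inversion $(u_1,v)$ leaves $g'(N[v])=3<4$. So ``fix any inversion'' does not preserve feasibility, and the induction is not justified as stated.

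The paper controls the order of repairs to rule this out: it first eliminates all inversions inside a common clique (there the swap preserves $g(N[w])$ for every $w$, and the new constraint at $z_j$ follows from $b'_{z_i}\ge b'_{z_j}$), and only then picks a cross-clique inversion with $i$ as small and $j$ as large as possible. That extremal choice forces the $1$'s to form a prefix in both relevant cliques, so $g(N[z_i])=g(P)+m^\ast-1$ yields the linchpin $g(P)\ge d_{z_i}\ge d_{z_j}$, whence $\hat g(N[z_j])=g(P)+h^\ast-1\ge b'_{z_j}$. You would also need to cover the vertices $x^h_t$ with $t>h^\ast$, which likewise lose a unit from their closed neighborhoods (your claim that only $z_j$ can be affected is not literally correct); they are saved because $b'_{x^h_t}\le b'_{z_j}$. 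With the repair order and the extremal choice made explicit, your sketch becomes the paper's proof.
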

We moved the proof of this lemma to Appendix~\ref{appendix TP}. 

\bigskip 

\begin{definition}
For $a \in \{0,\dots,k\}$, $a \geq a_R^{\prime}$, 
let $\Gamma(G^{\prime},L^{\prime},a)$ 
be the minimal cost over all weak $\{k\}$-$L^{\prime}$-dominating 
functions $g$ on $G^{\prime}$ on condition that $g(P) \geq a$. 
\end{definition}

\bigskip 

\begin{lemma}
Define $d_{z_{\ell+1}}=a$. Let $i^{\ast} \in \{1,\dots,\ell+1\}$ be such that 
\begin{enumerate}[\rm (a)]
\item $\max\;\{\;a,\;d_{z_i^{\ast}}\;\}+i^{\ast}-1$ is smallest possible, and 
\item $i^{\ast}$ is smallest possible with respect to $(a)$. 
\end{enumerate}
Let $H=G^{\prime}-Z$. 
Let $L^H$ be the restriction of $L^{\prime}$ to $V(H)$ with the following 
modifications. 
\[\forall_{y \in P}\; b^H_y=\max\;\{\;0,\;b^{\prime}_y-i^{\ast}+1\;\}.\] 
Let $a^H=\max\;\{\;a,\;d_{z_{i^{\ast}}}\;\}$. 
Then 
\[\Gamma(G^{\prime},L^{\prime},a)=\Gamma(H,L^H,a^H)+i^{\ast}-1.\] 
\end{lemma}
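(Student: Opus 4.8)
The plan is to prove the two inequalities separately, after first putting an optimal solution into a normal form. I would begin by recording the adjacencies forced by the tree model $T'$: since $P$ is a leaf-to-root path it is a clique of $G'$; every $z\in Z$ lies below $x$, so each $y\in P$ is an ancestor of $z$ and hence $Z\subseteq N[y]$; conversely any vertex of $H=G'-Z$ that is not on $P$ is neither an ancestor nor a descendant of a $z\in Z$, so it has \emph{no} neighbour in $Z$; and for $z\in Z$ one has $N[z]\cap Z=B_j$, the clique containing $z$. These facts localize the effect of weight placed on $Z$: it helps the domination only of the $P$-vertices, and there uniformly by the total amount $g(Z)$, and of the $Z$-vertices themselves. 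By the preceding lemma I may take an optimal weak $\{k\}$-$L'$-dominating function $g$ with $g(x)\in\{0,1\}$ off the root and $g(z_i)\ge g(z_j)$ for $i<j$, so the $1$'s on $Z$ occupy a $d$-prefix $z_1,\dots,z_{i-1}$ and $g\equiv 0$ on $z_i,\dots,z_\ell$, for some $i\in\{1,\dots,\ell+1\}$.

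Next I would compute what such a prefix demands of the path. Inside a clique $B_j$ the value $d_{x^j_p}=b'_{x^j_p}-p+1$ is strictly decreasing in $p$ (since $b'$ is non-increasing there), so the global $d$-prefix restricts to an index-prefix in each $B_j$, say its top $s_j$ vertices. A $0$-vertex $x^j_p$ (so $p>s_j$) sees exactly $s_j$ ones inside $B_j$ together with all of $g(P)$, so its requirement $g(N[x^j_p])\ge b'_{x^j_p}$ is equivalent to $g(P)\ge b'_{x^j_p}-s_j$; the binding vertex of $B_j$ is the first $0$-vertex $x^j_{s_j+1}$, with requirement $d_{x^j_{s_j+1}}$, and taking the maximum over cliques gives the clean statement that the $0$-vertices of $Z$ are dominated iff $g(P)\ge d_{z_i}$. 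Together with the standing constraint $g(P)\ge a$ this is $g(P)\ge\alpha_i:=\max\{a,d_{z_i}\}$. Because weight on $Z$ reaches only the $P$-vertices, and uniformly by $i-1$, the restriction $g|_H$ is feasible for the instance on $H$ whose path constraint is $g(P)\ge\alpha_i$ and whose $P$-labels are lowered to $\max\{0,b'_y-(i-1)\}$; conversely any solution of that reduced instance extends back by placing $1$'s on $z_1,\dots,z_{i-1}$. Hence $\Gamma(G',L',a)=\min_i\,[\,(i-1)+\Gamma(H,L^{H,(i)},\alpha_i)\,]$, where $L^{H,(i)}$ is this reduced labelling, the convention $d_{z_{\ell+1}}=a$ making $i=\ell+1$ behave correctly.

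It remains to show this minimum is attained at $i=i^*$, i.e. exactly where $\phi(i)=\alpha_i+(i-1)$ is smallest, and this is the crux. The mechanism is a one-for-one trade-off: passing from $i$ to $i+1$ spends one extra unit on $Z$ but lowers every $P$-label by one and cannot raise the path budget $\alpha_i$. Since $P$ is a clique, a single unit on $P$ serves all $P$-vertices at once, so the reduced instance on $H$ is governed, as far as the top path is concerned, only through the combined top-weight $\phi(i)$, while the cost incurred strictly below the remaining branches of $H$ is insensitive to the split. The hard part will be making this decoupling rigorous: one must check the two regimes — when $\alpha_i$ already meets the $P$-domination requirement and when it does not — and verify that each extra $Z$-one saves at most one unit in the $H$-subproblem, with equality in the active regime, so that the objective is unimodal with valley at $i^*$. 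Once the minimum is pinned at $i^*$, substituting $\alpha_{i^*}=a^H$ and $L^{H,(i^*)}=L^H$ yields $\Gamma(G',L',a)=\Gamma(H,L^H,a^H)+i^*-1$.
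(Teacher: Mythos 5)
Your structural setup is sound and essentially matches the paper's: $P$ is a clique containing the universal root $R$, every $y\in P$ sees all of $Z$, vertices of $H$ off $P$ see nothing of $Z$, and $N[z]\cap Z$ is the clique $B_j$ containing $z$. From this and the preceding lemma you correctly normalize to solutions whose ones on $Z$ form a prefix $z_1,\dots,z_{i-1}$, and your computation that the $0$-vertices of $Z$ are then dominated iff $g(P)\ge d_{z_i}$ is right (within each $B_j$ the quantity $d_{x^j_p}$ is strictly decreasing in $p$, so the global prefix restricts to an index prefix and the binding vertex is the first zero). The resulting identity $\Gamma(G^{\prime},L^{\prime},a)=\min_i\bigl[(i-1)+\Gamma(H,L^{H,(i)},\alpha_i)\bigr]$ is a legitimate reformulation.

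The gap is exactly where you flag it: you never prove that this minimum is attained at $i^{\ast}$, and the mechanism you sketch for doing so does not work as stated. First, $\phi(i)=\max\{a,d_{z_i}\}+i-1$ need \emph{not} be unimodal: $d_{z_i}$ can drop by more than one in a single step (take $d=(10,10,5,5,1)$ and $a=0$, giving $\phi=(10,11,7,8,5)$), so ``unimodal with valley at $i^{\ast}$'' is false and cannot carry the argument. Second, the assertion that $\Gamma(H,L^{H,(i)},\alpha_i)$ depends on $i$ only through $\phi(i)$ is precisely the one-for-one tradeability of weight on $Z$ against weight on $P$, which is the actual content of the lemma; observing that $P$ is a clique does not establish it, because weight placed on $P$ may be doing double duty for constraints elsewhere in $H$. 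The paper closes this with a concrete exchange that your plan lacks: if an optimal $g$ has prefix length $i-1$ with $i>i^{\ast}$, zero out $z_{i^{\ast}},\dots,z_{i-1}$ and add $i-i^{\ast}$ to $g(R)$ (capped at $k$); since $R$ is universal and lies on $P$, one checks directly, using the defining minimality of $\max\{a,d_{z_{i^{\ast}}}\}+i^{\ast}-1$, that every $0$-vertex of $Z$ and of $P$ remains dominated and the cost does not increase. After that reduction one argues both inequalities by restriction to $H$ (with the $P$-labels lowered by $i^{\ast}-1$) and by extension from $H$ (placing ones on $z_1,\dots,z_{i^{\ast}-1}$), the latter being the one direction you do handle. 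Without the exchange step, your argument establishes the recurrence but not that $i^{\ast}$ is the optimal prefix length, so the stated equality is not proved.
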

We moved the proof of this lemma to Appendix~\ref{appendix TP}. 
 
\bigskip 

The previous lemmas prove the following theorem. 

\begin{theorem}
Let $G$ be a trivially perfect graph with $n$ vertices.  
Let $T$ be a rooted tree that represents $G$. 
Let $k \in \mathbb{N}$ and let $L$ be a $\{k\}$-assignment of $G$. 
Then there exists an $O(k \cdot n)$ algorithm that computes a 
weak $\{k\}$-L-dominating function of $G$. 
\end{theorem}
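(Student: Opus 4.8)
The plan is to establish the theorem as a direct consequence of the preceding structural lemmas, assembling them into a recursive algorithm that processes the tree model $T$ from the leaves upward. First I would reduce the general connected case to the reduced instance via the reduction lemma, which lets me assume every non-root vertex $x$ has $a_x^{\prime}=0$ and a label that has been discounted by the contribution of forced assignments on the closed neighborhood; the additive term $\sum_{x \in V(G)\setminus\{R\}} a_x$ is then accounted for once at the end. For a disconnected trivially perfect graph I would simply apply the connected-case algorithm to each component independently and sum the costs, since components share no edges and the weak $\{k\}$-L-domination condition is local to closed neighborhoods.

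The heart of the algorithm is a bottom-up traversal of $T^{\prime}$ that maintains, at each internal vertex $x$, the quantity $\Gamma(G^{\prime},L^{\prime},a)$ for the relevant values of the parameter $a$. The exchange lemma (that an optimal $g$ can be taken monotone along the ordering $z_1,\dots,z_\ell$ of the descendants, with $g(z_i)\ge g(z_j)$ for $i<j$) guarantees that at each step an optimal solution assigns $1$ to a prefix $z_1,\dots,z_{i^\ast-1}$ of the descendants and $0$ to the rest, so the only decision is the single threshold index $i^\ast$. The subsequent lemma pins down $i^\ast$ explicitly as the minimizer of $\max\{a,d_{z_i}\}+i-1$, and reduces the computation on $G^{\prime}$ to the smaller instance $H=G^{\prime}-Z$ with updated labels $b_y^H=\max\{0,b_y^{\prime}-i^\ast+1\}$ on the path $P$ and propagated parameter $a^H=\max\{a,d_{z_{i^\ast}}\}$, via
\[
\Gamma(G^{\prime},L^{\prime},a)=\Gamma(H,L^H,a^H)+i^{\ast}-1.
\]
Iterating this reduction along $T^{\prime}$ collapses the whole tree, and the answer $\gamma_{wkL}(G^{\prime})$ is read off by taking $a=a_R^{\prime}$ at the root, after which the reduction lemma restores the additive constant to yield $\gamma_{wkL}(G)$.

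For the running time I would argue that each vertex is touched a constant number of times as a descendant in exactly one reduction step and once as the apex $x$ of its own subtree. The per-step work is dominated by selecting $i^\ast$, which requires knowing the sorted order of the $d_{z_p}$ values; the definition $d_{x_p^j}=b_{x_p^j}^{\prime}-p+1$ together with the sorting of each clique $B_j$ by decreasing $b^{\prime}$ means each vertex carries an integer $d$-value in the range determined by $k$. I would merge the at most $k+1$ distinct attainable $d$-values using a bucket/counting scheme rather than comparison sorting, so that maintaining and querying the relabeled order $z_1,\dots,z_\ell$ costs $O(k)$ amortized per vertex, and the factor $i^\ast-1$ and the relabeling of $b_y^H$ along $P$ are absorbed into this budget. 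Summing over all $n$ vertices gives the claimed $O(k\cdot n)$ bound.

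I expect the main obstacle to be the bookkeeping that makes each reduction step genuinely $O(k)$ rather than $O(k\cdot\text{depth})$: the label modification $b_y^H=\max\{0,b_y^{\prime}-i^\ast+1\}$ applies to every vertex $y$ on the path $P$ from $x$ to the root, so a naive implementation would rewrite an entire root-path at each step and blow up to $\Omega(n)$ per vertex. The careful part is to represent these cumulative path-discounts lazily—carrying a single running offset along $P$ and applying it implicitly when a path vertex is later processed as an apex—so that the total discount bookkeeping stays linear. Establishing that this lazy propagation is consistent with the recursion for $\Gamma$, and that the values of $a$ actually needed never exceed $k$ so that only $O(k)$ states per node matter, is where I would spend the most care; the correctness of the individual reduction identities is already delivered by the two lemmas whose proofs are deferred to the appendix.
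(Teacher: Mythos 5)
Your proposal is correct and follows essentially the same route as the paper, which proves this theorem simply by assembling the reduction lemma, the monotone-exchange lemma, and the recursion $\Gamma(G^{\prime},L^{\prime},a)=\Gamma(H,L^H,a^H)+i^{\ast}-1$ into a bottom-up peeling of the tree model. Your additional observations on bucket-sorting the $d$-values and lazily accumulating the path discounts $b_y^H=\max\{0,b_y^{\prime}-i^{\ast}+1\}$ fill in implementation details the paper leaves implicit, and they are sound (the clamping at $0$ commutes with further subtraction, so a running offset per root-path suffices).
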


\bigskip 

The related $(j,k)$-domination problem can be solved in linear time on 
trivially perfect graphs. 
The weak $\{k\}$-L-domination problem can be solved in 
linear time on complete bipartite graphs. We moved that section 
to Appendix~\ref{appendix CB}. 
 
\section{$2$-Rainbow domination of interval graphs}
\label{section interval}

In~\cite{kn:bresar2} 
the authors ask four questions, the last one of 
which is, whether there is a polynomial  
algorithm for the $2$-rainbow domination problem on (proper) interval graphs. 
In this section we show that $2$-rainbow domination can be solved 
in polynomial time on interval graphs. 

\bigskip 

We use the equivalence of the $2$-rainbow domination problem 
with the 
weak $\{2\}$-domination problem. The equivalence of the two problems, 
when restricted to trees and interval graphs, was observed in~\cite{kn:bresar2}. 
Chang et al., proved that it holds for general $k$ when restricted to the 
class of strongly 
chordal graphs~\cite{kn:chang2}. The class of interval graphs is properly 
contained in that of the strongly chordal graphs. 

\bigskip 

An interval graph has a consecutive clique arrangement. That is a 
linear ordering $[C_1,\dots,C_t]$ of the maximal cliques of the interval graph 
such that, for each vertex, the cliques that contain it occur consecutively 
in the ordering~\cite{kn:gilmore}. 

\bigskip 

For a function $g: V(G) \rightarrow \{0,1,2\}$ we write, as usual, 
for any $S \subseteq V(G)$, 
\[g(S)=\sum_{x \in S} g(x).\] 
The weak $\{2\}$-domination problem is defined as follows. 

\begin{definition}
Let $G$ be a graph. A function $g: V(G) \rightarrow \{0,1,2\}$ 
is a weak $\{2\}$-dominating function 
on $G$ if 
\begin{equation}
\label{eqn19}
\forall_{x \in V(G)} \;\; g(x)=0 \quad \text{implies}\quad g(N[x]) \geq 2.
\end{equation}
The weak $\{2\}$-domination number of $G$ is 
\begin{equation}
\label{eqn20}
\gamma_{w2}(G)=\min \;\{\;\sum_{x \in V(G)}\; g(x) \;|\; 
\text{$g$ is a weak $\{2\}$-domination function on $G$}\;\}.
\end{equation}
\end{definition}

Br\u{s}ar and \u{S}umenjak proved the following theorem~\cite{kn:bresar2}. 

\begin{theorem}
When $G$ is an interval graph, 
\begin{equation}
\label{eqn21}
\gamma_{w2}(G)=\gamma_{r2}(G).
\end{equation}
\end{theorem}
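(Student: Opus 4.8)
The plan is to prove $\gamma_{w2}(G)=\gamma_{r2}(G)$ for interval graphs by establishing the two inequalities separately. The direction $\gamma_{w2}(G)\leq\gamma_{r2}(G)$ is already free: as noted just before Equation~\eqref{eqn22}, every $2$-rainbow function $f$ yields a weak $\{2\}$-dominating function $g$ via $g(x)=|f(x)|$, preserving cost, so the weak parameter never exceeds the rainbow one for \emph{any} graph. Thus the entire content of the theorem is the reverse inequality $\gamma_{r2}(G)\leq\gamma_{w2}(G)$, and my whole effort goes there.

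To prove $\gamma_{r2}(G)\leq\gamma_{w2}(G)$, I would take an optimal weak $\{2\}$-dominating function $g$ and convert it into a $2$-rainbow function $f$ of no greater cost. The natural strategy is to assign rainbow labels from $\{1\}$, $\{2\}$, or $\{1,2\}$ to the vertices with $g(x)=1$ or $g(x)=2$, matching $|f(x)|=g(x)$ so the cost is exactly $\|g\|$, and then argue that the colors can be distributed so that every vertex $x$ with $g(x)=0$ sees both colors $1$ and $2$ on its closed neighborhood $N[x]$. The weak-domination condition~\eqref{eqn19} guarantees that such an $x$ has $g(N[x])\geq 2$, meaning the neighbors carry at least two ``units'' of weight in total; the task is to show these units can always be recolored so that they realize two \emph{distinct} colors visible to $x$. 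A vertex with $g$-value $2$ can simply be labeled $\{1,2\}$, covering any neighbor by itself, so the only obstruction arises when an empty vertex is covered solely by several weight-$1$ vertices that must all be forced to the same single color.

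The interval structure is what rules out that obstruction, and this is where I would concentrate the real work. Using the consecutive clique arrangement $[C_1,\dots,C_t]$ from~\cite{kn:gilmore}, each vertex corresponds to a contiguous block of cliques, equivalently to an interval on the line. I would scan the cliques left to right and assign colors to the weight-$1$ vertices greedily in an alternating fashion, exploiting the Helly property of intervals: any set of pairwise-intersecting intervals has a common point, and any two vertices both adjacent to a common empty vertex $x$ either are adjacent to each other or sandwich $x$. The goal of the coloring is to ensure that, for each empty $x$, among the weight-carrying vertices in $N[x]$ either one has label $\{1,2\}$ or two distinct ones carry colors $1$ and $2$. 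I would argue that an alternating left-to-right assignment along the clique ordering never strands an empty vertex with a monochromatic neighborhood, because the intervals covering $x$ overlap $x$'s own clique position and hence cannot all have been assigned the same color without violating the alternation.

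The main obstacle I anticipate is precisely this combinatorial recoloring step: showing that a single consistent color assignment simultaneously satisfies the distinctness requirement at \emph{every} empty vertex, rather than fixing one empty vertex at a time in a way that conflicts elsewhere. A plausible clean route is an exchange/uncrossing argument — start from an arbitrary two-coloring of the weight-$1$ vertices and repeatedly repair a monochromatic empty vertex by flipping the color of one covering interval, while tracking a potential function (e.g.\ the number of unsatisfied empty vertices, or a lexicographic measure based on clique indices) to show the process terminates without cost increase. Verifying that each local flip does not create a new violation is the delicate part, and I would lean on the consecutiveness of each vertex's clique-range to bound the ``reach'' of any color change so that repairs stay local along the line.
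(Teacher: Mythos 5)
First, a point of orientation: the paper does not actually prove this statement itself --- it is imported verbatim from Bre\u{s}ar and \u{S}umenjak~\cite{kn:bresar2}, so there is no in-paper proof to match your attempt against. Judged on its own terms, your proposal has the right skeleton: the inequality $\gamma_{w2}(G)\leq\gamma_{r2}(G)$ is indeed free via $g(x)=|f(x)|$, and the converse correctly reduces to the following combinatorial core. Given an optimal weak $\{2\}$-dominating function $g$, set $f(x)=\{1,2\}$ when $g(x)=2$ and $f(x)=\es$ when $g(x)=0$; the only issue is to $2$-color the vertices with $g(x)=1$ so that every empty vertex with no weight-$2$ neighbor sees both colors among its weight-$1$ neighbors.

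The genuine gap is that you do not prove this core step, and the specific mechanism you propose for it is false. An alternating left-to-right assignment along the clique (equivalently, left-endpoint) order does not work: take weight-$1$ intervals $I_1=[0,10]$, $I_2=[1,2]$, $I_3=[6,7]$ and an empty vertex $x=[5,20]$. Then $x$ meets $I_1$ and $I_3$ but not $I_2$, yet $I_2$ lies strictly between $I_1$ and $I_3$ in the left-endpoint order, so alternation gives $I_1$ and $I_3$ the same color and $x$ is stranded monochromatic. Your stated justification --- that the intervals covering $x$ all overlap $x$'s clique position and hence cannot all receive the same color --- overlooks exactly this: intervals \emph{not} covering $x$ can be interleaved among those that do and absorb the color changes. (Ordering by right endpoints fails on the mirror image of the same example, so no single na\"{\i}ve alternation suffices; one can pad such examples with further empty vertices so that all three weight-$1$ intervals are forced, i.e.\ $g$ is genuinely optimal.) Your fallback --- an exchange argument with a potential function --- is only a plan, and you yourself flag its termination and non-interference as ``the delicate part''; that delicate part \emph{is} the theorem. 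A correct proof needs a sharper tool, e.g.\ the strong elimination ordering used by Chang, Li and Wu~\cite{kn:chang2} for the general statement $\gamma_{wk}=\gamma_{rk}$ on strongly chordal graphs, or the tailored argument of~\cite{kn:bresar2}; as it stands your write-up establishes only the easy inequality.
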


\bigskip 

In the following, let $G=(V,E)$ be an interval graph. 

\begin{lemma}
\label{lm int1}
There exists a weak $\{2\}$-dominating function $g$, with $g(V)=\gamma_{r2}(G)$, 
such that every maximal clique has at most 2 vertices assigned the value 2. 
\end{lemma}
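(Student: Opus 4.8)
The plan is to start from an \emph{arbitrary} optimal weak $\{2\}$-dominating function $g$, whose cost equals $\gamma_{w2}(G)=\gamma_{r2}(G)$ by the equivalence stated just above for interval graphs, and to show by a local exchange argument that no maximal clique can carry three vertices of value $2$; the lemma then follows a fortiori, since any optimal $g$ already has the desired property. I will work with the consecutive clique arrangement $[C_1,\dots,C_t]$, in which each vertex $v$ occupies a consecutive set of clique-indices $[\ell_v,r_v]$ and two vertices are adjacent precisely when these index-intervals overlap.

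Suppose, for contradiction, that some maximal clique $C=C_m$ carries three vertices $a$, $b$, $c$ with $g(a)=g(b)=g(c)=2$. Their index-intervals all contain $m$. The first key step is a small geometric observation: among three index-intervals sharing the common index $m$, one of them, say $[\ell_w,r_w]$, is contained in the union of the other two. Indeed, since all three contain $m$, the union of any two is again an interval; choosing $w$ to be a vertex that is neither the unique leftmost-starting nor the unique rightmost-ending of the three (and, in the degenerate case where one interval contains the other two, choosing $w$ to be a contained one) yields $[\ell_w,r_w]\subseteq[\ell_{w'},r_{w'}]\cup[\ell_{w''},r_{w''}]$, where $w',w''$ are the remaining two value-$2$ vertices.

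The second step is to set $g(w):=0$ and verify that $g$ stays a weak $\{2\}$-dominating function of strictly smaller cost. The containment above means every neighbour $x$ of $w$ has its index-interval meeting that of $w'$ or of $w''$, hence $x$ is adjacent to $w'$ or to $w''$. For $w$ itself, $w'\in C_m\subseteq N[w]$ retains value $2$, so $g(N[w])\ge 2$. For any vertex $x$ with $g(x)=0$ adjacent to $w$ (the only vertices whose constraint is active and could be affected), we have $x\neq w',w''$ and $x$ is adjacent to one of them, say $w'$; since $g(w')=2$ is unchanged and $w'\in N[x]$, we still get $g(N[x])\ge 2$. All remaining constraints are untouched. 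Thus $g$ remains feasible while its cost drops by $2$, contradicting optimality, so every maximal clique has at most two vertices of value $2$.

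I expect the main obstacle to be the geometric observation together with its bookkeeping: one must argue cleanly that a suitable $w$ always exists (the leftmost-versus-rightmost case analysis, including the degenerate containment case and ties) and then confirm that \emph{every} vertex formerly relying on $w$ for its domination constraint is rescued by $w'$ or $w''$. Once the containment $[\ell_w,r_w]\subseteq[\ell_{w'},r_{w'}]\cup[\ell_{w''},r_{w''}]$ is secured, checking feasibility is routine, since the only active constraints are those of value-$0$ vertices and each is witnessed by a retained value-$2$ neighbour.
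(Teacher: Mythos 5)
Your proof is correct and takes essentially the same route as the paper: among three value-$2$ vertices in a common maximal clique, one has its neighbourhood covered by the other two (the one extending furthest right and the one extending furthest left), so its value can be lowered, contradicting optimality. The only cosmetic differences are that you phrase the covering via containment of clique-index intervals rather than ``most neighbours to the left/right,'' and you lower the redundant value to $0$ (duly checking that the vertex itself stays dominated) where the paper lowers it to $1$.
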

\begin{proof}
Assume that $C_i$ is a maximal clique in the consecutive clique arrangement of $G$. 
Assume that $C_i$ has 3 vertices $x$, $y$ and $z$ with $g(x)=g(y)=g(z)=2$. 
Assume that, among the three of them, $x$ has the most neighbors 
in $\cup_{j\geq i} C_j$ and that $y$ has the most neighbors in $\cup_{j \leq i} C_j$. 
Then any neighbor of $z$ is also a neighbor of $x$ or it is a neighbor of $y$. 
So, if we redefine $g(z)=1$, we obtain a weak $\{2\}$-dominating function 
with value less than $g(V)$, a contradiction. 
\qed\end{proof}

\bigskip 

\begin{lemma}
\label{lm int2}
There exists a weak $\{2\}$-dominating function $g$ with minimum value 
$g(V)=\gamma_{r2}(G)$ such that every maximal clique 
has at most four vertices with value 1. 
\end{lemma}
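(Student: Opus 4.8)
The plan is to argue by an exchange/redistribution argument, exactly parallel to the proof of Lemma~\ref{lm int1}, but now controlling the vertices that carry the value~$1$ rather than the value~$2$. First I would fix a weak $\{2\}$-dominating function $g$ achieving $g(V)=\gamma_{r2}(G)$, and among all such optima I would additionally choose one satisfying the conclusion of Lemma~\ref{lm int1}, so that in every maximal clique at most two vertices carry the value~$2$. Suppose for contradiction that some maximal clique $C_i$ in the consecutive clique arrangement contains five vertices $v_1,\dots,v_5$ with $g(v_j)=1$. The goal is to show that one of these five can have its value lowered to~$0$ without violating the weak $\{2\}$-domination condition anywhere, contradicting minimality.

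The key structural observation is the interval-graph property already exploited in Lemma~\ref{lm int1}: among the five value-$1$ vertices, pick the one, say $x$, whose neighborhood extends furthest to the right (largest index of a clique containing it among $\cup_{j\ge i}C_j$), and the one, say $y$, whose neighborhood extends furthest to the left (among $\cup_{j\le i}C_j$). Because the cliques containing any fixed vertex are consecutive and all five vertices lie in the common clique $C_i$, every neighbor of any of the remaining value-$1$ vertices is a neighbor of $x$ or of $y$; in particular $N[z]\subseteq N[x]\cup N[y]$ for any third such vertex $z$. I would then set $g(z)=0$. The only vertices whose domination condition could be threatened are those in $N[z]$ that currently have $g=0$; for any such vertex $w$, since $w\in N[x]\cup N[y]$, the contribution of the untouched vertices $x,y$ (each still carrying value~$1$, together with whatever value~$2$ vertices dominate $w$) must be re-examined to confirm $g(N[w])\ge 2$ still holds.

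The step I expect to be the main obstacle is precisely this last verification, because dropping $z$ to~$0$ removes a unit from $g(N[w])$ for every $w\in N[z]$, and it is not immediately obvious that the surviving contributions still reach the threshold of~$2$. Here the bound ``five value-$1$ vertices'' (rather than three) is doing real work: one must ensure that for a still-undominated $w$, enough of the other four value-$1$ vertices, or the value-$2$ vertices permitted by Lemma~\ref{lm int1}, remain in $N[w]$. I anticipate that the clean way to handle this is to show that after choosing the extremal $x$ and $y$, any $w\in N[z]\setminus\{z\}$ with $g(w)=0$ already receives value at least~$2$ from $x$ and $y$ alone (or from one of them together with a value-$2$ vertex in $C_i$), so that the deletion is safe. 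If a direct count does not close the argument, the fallback is a charging scheme: assign to each value-$1$ vertex in $C_i$ the ``right'' and ``left'' extents it witnesses, observe that only the two extremal witnesses $x,y$ can be indispensable on each side, and conclude that at most four such vertices are genuinely needed, the fifth being redundant and hence reducible to~$0$.
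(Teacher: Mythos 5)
Your main line of argument has a genuine gap, and it is precisely the one you flag yourself as ``the main obstacle.'' Choosing a single vertex $x$ extending furthest to the right and a single vertex $y$ extending furthest to the left is the right move in Lemma~\ref{lm int1}, where each retained vertex carries the value $2$ and therefore satisfies a $0$-valued neighbour on its own; here, each retained vertex contributes only $1$ toward the threshold of $2$. A vertex $w$ with $g(w)=0$ lying in a clique $C_j$ with $j>i$ may be adjacent to $z$ and to $x$ but to neither $y$ nor any value-$2$ vertex: before the exchange it had $g(N[w])=2$ coming from $z$ and $x$, and after setting $g(z)=0$ it retains only $1$. So your proposed verification --- that every such $w$ ``already receives value at least $2$ from $x$ and $y$ alone, or from one of them together with a value-$2$ vertex'' --- is false in general, and the exchange as stated can destroy feasibility.

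The missing idea, and the reason the hypothesis hands you \emph{five} value-$1$ vertices rather than three, is that the correct exchange must retain \emph{two} witnesses per side: the two value-$1$ vertices extending furthest to the right and the two extending furthest to the left (at most four vertices in total), zeroing out a fifth vertex $z$ outside this set. Then any $w\in N[z]$ with $g(w)=0$ shares with $z$ some clique $C_j$ with $j\ge i$ or $j\le i$, and by consecutiveness the two retained witnesses on that side also lie in $C_j$, so $w$ still accumulates at least $2$. Your fallback paragraph does arrive at the number four, but it is not carried out, and its phrasing (``only the two extremal witnesses $x,y$ \dots on each side'') still refers to one vertex per side. The paper's proof is exactly the two-per-side version; replacing your single extremal vertex on each side by two closes the gap and makes your argument coincide with it.
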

\begin{proof}
The proof is similar to that of Lemma~\ref{lm int1}. 
Let $C_i$ be a clique in the consecutive clique arrangement of $G$. 
Assume that $C_i$ has 5 vertices $x_i$, $i \in \{1,\dots,5\}$, with 
$g(x_i)=1$ for each $i$. Order the vertices $x_i$ according to their 
neighborhoods in $\cup_{j \geq i} C_j$ and according to their 
neighborhoods in $\cup_{j \leq i} C_j$.  For simplicity, assume that 
$x_1$ and $x_2$ have the most neighborhoods in the first union of cliques 
and that $x_3$ and $x_4$ have the most neighbors in the second union of 
cliques. Then $g(x_5)$ can be reduced to zero; any other vertex that has $x_5$ 
in its neighborhood already has two other $1$'s in it. 

\medskip 

\noindent  
This proves the lemma. 
\qed\end{proof}

\bigskip 

\begin{theorem}
\label{thm int}
There exists a polynomial algorithm to compute the 
$2$-rainbow domination number for interval graphs. 
\end{theorem}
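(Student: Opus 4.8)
The plan is to design a dynamic programming algorithm that sweeps through the consecutive clique arrangement $[C_1,\dots,C_t]$ from left to right, exploiting the equivalence $\gamma_{r2}(G)=\gamma_{w2}(G)$ together with the two structural reductions of Lemmas~\ref{lm int1} and~\ref{lm int2}. These lemmas guarantee the existence of an optimal weak $\{2\}$-dominating function $g$ in which every maximal clique carries at most two vertices of value $2$ and at most four vertices of value $1$. Consequently, when processing clique $C_i$, I only need to track a bounded amount of information about how the ``active'' vertices (those appearing in $C_i$ and continuing into later cliques) have been labelled, rather than the full assignment. The key observation driving the whole approach is that because the cliques form a consecutive arrangement, a vertex's neighborhood is an interval of cliques, so whether a vertex $x$ with $g(x)=0$ satisfies $g(N[x])\ge 2$ can be decided incrementally as we pass the last clique containing $x$.

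First I would fix, for each clique $C_i$, the partition of its vertices into those whose rightmost clique is $C_i$ (vertices that ``die'' at step $i$) and those that survive to $C_{i+1}$. Second, I would define the DP state at position $i$ to record the labelled contribution of the surviving vertices that is relevant to future domination decisions: concretely, how much total value (capped at $2$, since only the threshold $g(N[x])\ge 2$ matters) is carried forward by vertices of $C_i$ that remain active, and which of the as-yet-unsatisfied active zero-vertices still need to accumulate weight from later cliques. Third, the transition from $C_i$ to $C_{i+1}$ would enumerate the possible labellings of the newly appearing vertices (those whose leftmost clique is $C_{i+1}$), charge their cost, and verify that every vertex dying at step $i$ with $g(x)=0$ has indeed received $g(N[x])\ge 2$. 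The boundedness supplied by Lemmas~\ref{lm int1} and~\ref{lm int2} is what keeps the number of reachable states polynomial: a clique need never be assigned more than a constant amount of ``new'' nonzero weight beyond what is forced, so the branching factor at each step is controlled.

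The main obstacle I anticipate is formulating the state so that it is simultaneously (i) a correct summary of the partial assignment, in the sense that two partial solutions with the same state are interchangeable for the purpose of completing an optimal solution, and (ii) of polynomial size. The subtlety is that a zero-vertex $x$ alive at clique $C_i$ may be dominated partly by vertices to its left (already decided) and partly by vertices to its right (not yet decided), and since neighborhoods are clique-intervals, the amount of ``right-side help'' still required for $x$ depends on its deficit $2-g(N[x]\cap(\text{processed part}))$. I would argue that it suffices to track, among the active vertices, the worst-case remaining deficit and the weight already promised to the future, because the consecutive structure means every future clique that can help $x$ also contains every active vertex that dies later than $x$; this lets one collapse the per-vertex deficits into a constant-sized summary. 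Establishing this interchangeability formally is the crux, and once it is in place the remaining pieces---initialization at $C_1$, the $O(1)$-sized transition table justified by the two lemmas, and reading off $\gamma_{r2}(G)=\gamma_{w2}(G)$ at $C_t$---follow routinely, giving an overall polynomial running time.
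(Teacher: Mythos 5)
Your overall plan coincides with the paper's: a left-to-right dynamic program over the consecutive clique arrangement, with the per-clique state kept polynomial by Lemmas~\ref{lm int1} and~\ref{lm int2}, and with the still-unsatisfied zero-vertices compressed into representatives. Your justification for that compression is also the right one --- since neighbourhoods are intervals of cliques, the future help available to an active vertex that dies sooner is contained in the future help available to one that dies later, so for each deficit value only the soonest-dying unsatisfied vertex matters. The paper implements exactly this by storing one representative vertex that still needs an extra $1$ and one that still needs an extra $2$.

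The one genuine problem is your proposed summary of the already-labelled active vertices: recording only ``how much total value (capped at $2$) is carried forward'' is too coarse. Which future zero-vertices are helped by an already-labelled active vertex $v$ depends on where $v$'s interval ends: a vertex $z$ whose leftmost clique is $C_j$ is adjacent to $v$ only if $v$ survives to $C_j$. Two partial solutions with the same capped total, one where the weight sits on a vertex dying at $C_{i+1}$ and one where it sits on a vertex surviving far to the right, are not interchangeable, so a DP that merges them can report an infeasible value. The paper avoids this by keeping the \emph{identities} of the labelled vertices of $C_i$ in the state --- the set of at most two vertices with value $2$ and at most four with value $1$, which is precisely what Lemmas~\ref{lm int1} and~\ref{lm int2} make affordable --- together with the two representatives, giving $O(n^8)$ states and an $O(n^9)$ algorithm. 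With that repair (store the labelled vertices themselves rather than an aggregate of their values), your argument goes through and is the paper's proof.
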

We moved the proof of this theorem and some remarks 
to Appendix~\ref{appendix interval}. 

We obtained similar results for the class of permutation graphs. 
We moved that section to Appendix~\ref{appendix permutation}. 

\section{$\NP$-Completeness for splitgraphs}

A graph $G$ is a splitgraph if $G$ and $\Bar{G}$ are both chordal. 
A splitgraph has 
a partition of its vertices into two set $C$ and $I$, 
such that the subgraph induced by $C$ is a clique and 
the subgraph induced by $I$ is an independent set.
 
\bigskip

Although the $\NP$-completeness of $k$-rainbow 
domination for chordal graphs was established in~\cite{kn:chang}, their proof 
does not imply the intractability for the class of splitgraphs. 
However, that is easy to mend. 

\bigskip 

\begin{theorem}
\label{thm NP-c rainbow}
Let $k \in \mathbb{N}$. Computing $\gamma_{rk}(G)$ 
is $\NP$-complete for splitgraphs. 
\end{theorem}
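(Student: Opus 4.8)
The plan is to prove NP-completeness by reduction from a known NP-complete problem, most naturally the ordinary dominating set problem, which is NP-complete on general graphs. Membership in $\NP$ is immediate: given a function $f: V(G) \rightarrow 2^{[k]}$, one checks in polynomial time that it is a $k$-rainbow function and computes $\|f\|$. The substance is hardness. First I would fix $k=2$, since the general statement follows once intractability holds for a single value of $k$ (one can pad with isolated trivial gadgets or simply observe the family of problems is parametrised by $k$, and a separate fixed value suffices to establish NP-completeness for the class of splitgraphs).

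The key structural idea is to take an arbitrary graph $H$ on which dominating set is hard and build a splitgraph $G$ whose $k$-rainbow domination number encodes the domination number of $H$. The standard trick to force a splitgraph is to make the clique side $C$ a copy of $V(H)$ (turning it into a clique) and to attach, on the independent side $I$, one ``edge-test'' or ``vertex-test'' vertex per vertex of $H$, with adjacencies chosen so that a cheap $k$-rainbow function must ``cover'' each test vertex by assigning colors to its neighbours in $C$. Concretely, for each $v \in V(H)$ I would place a vertex $v' \in I$ whose neighbourhood in $C$ is exactly the closed neighbourhood $N_H[v]$ (viewed inside the clique $C$). Then a vertex $v'$ with $f(v')=\es$ is rainbow-dominated precisely when $\cup_{u \in N_H[v]} f(u) = [k]$, which ties the color assignment on $C$ to a covering condition on $H$.

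The main steps in order would be: (1) describe the construction of $G$ from $H$, verifying that $G$ is a splitgraph by exhibiting the partition $(C,I)$; (2) argue the cheap direction, namely that from a dominating set $D$ of $H$ of size $s$ one builds a $k$-rainbow function of cost roughly proportional to $s$ by assigning full label $[k]$ (or a single well-chosen color, depending on the exact gadget) to the vertices of $D$ inside $C$ and $\es$ elsewhere, and checking condition~\eqref{eqn1} at every test vertex; (3) argue the converse, that any $k$-rainbow function of small cost can be transformed, without increasing cost, into one of a canonical form from which a dominating set of $H$ of the corresponding size is read off. The equivalence $\gamma_{rk}(G)=\gamma(G \Box K_k)$ from~\eqref{eqn3} may be invoked to streamline the bookkeeping, since it recasts rainbow domination as ordinary domination in a product graph.

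The hard part will be step~(3): controlling the structure of an arbitrary optimal rainbow function. Because $C$ is a clique, colors placed anywhere in $C$ are visible to every test vertex adjacent to that part of $C$, so one must prevent ``accidental'' covering in which colors scattered across $C$ satisfy the rainbow condition without corresponding to a genuine small dominating set. I expect to handle this by a careful choice of $k$ relative to the gadget (so that covering each test vertex genuinely requires a neighbour in $D$) together with an exchange argument that normalises an optimal $f$ so each used vertex carries the full label $[k]$; the cost accounting then forces the number of labelled vertices to equal $\gamma(H)$, completing the reduction.
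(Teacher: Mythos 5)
Your proposal has the right general shape (reduction from dominating set, membership in $\NP$ is easy), but there are two genuine gaps. First, the claim that it suffices to fix $k=2$ because ``the general statement follows once intractability holds for a single value of $k$'' is false: the theorem asserts $\NP$-completeness of $\gamma_{rk}$ for \emph{each} fixed $k$, and these are different problems for different $k$; hardness for $k=2$ does not transfer to $k=3$, and padding with isolated gadgets does not obviously convert an instance of one into the other. Any correct proof must work for arbitrary $k$, as the paper's does.

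Second, and more seriously, step~(3) of your plan does not go through with the gadget you describe, and the difficulty you flag is exactly where the construction breaks. With a test vertex $v'$ whose neighbourhood is $N_H[v]$ inside the clique, two clique vertices carrying the scattered singletons $\{1\}$ and $\{2\}$ cover $v'$ at total cost $2$ while contributing \emph{two} vertices to the extracted dominating set; the resulting bounds are only $\gamma(H)\leq\gamma_{r2}(G)\leq 2\gamma(H)$, which does not determine $\gamma(H)$, so the reduction does not decide dominating set. No ``careful choice of $k$'' repairs this; an additional gadget is needed. The paper's solution is to start from a splitgraph $G$ on which ordinary domination is already known to be $\NP$-complete (Bertossi), and to attach $k-1$ pendant vertices to each clique vertex $v$: these pendants force a cost of at least $k-1$ in the closed star of $v$ no matter what, and force $f(v)=[k]$ whenever $v$ is to serve as a dominator, which yields the exact identity $\gamma_{rk}(G^{\prime})=\gamma(G)+|C|\cdot(k-1)$ for every $k$. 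You would need to add some such cost-normalising gadget (and drop the restriction to $k=2$) for your argument to be completed.
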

We moved the proof of this theorem to Appendix~\ref{appendix splitgraphs}. 

\bigskip 

Similarly, we have the following theorem. 

\begin{theorem}
\label{thm NP-c weak}
Let $k \in \mathbb{N}$. Computing $\gamma_{wk}(G)$ 
is $\NP$-complete for splitgraphs. 
\end{theorem}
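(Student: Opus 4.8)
The plan is to reduce from a known $\NP$-complete problem, mirroring the reduction used for Theorem~\ref{thm NP-c rainbow}. Since the author has already established $\NP$-completeness of $\gamma_{rk}$ for splitgraphs (Theorem~\ref{thm NP-c rainbow}), the most economical route is to observe that the same gadget reduction essentially transfers, exploiting the close relationship $\gamma_{wk}(G) \leq \gamma_{rk}(G)$ noted earlier in the excerpt. First I would recall a standard $\NP$-complete source such as \textsc{Dominating Set} (or \textsc{Set Cover}, which is natural for splitgraphs since the clique side $C$ can encode sets and the independent side $I$ can encode elements). Membership in $\NP$ is immediate: a weak $\{k\}$-dominating function $g$ is a certificate of size $O(n \log k)$, and both the cost $\|g\|$ and the defining condition in~\eqref{eqn4} are checkable in polynomial time.

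For the hardness direction, I would take an instance of the source problem and build a splitgraph $G$ together with a target value $t$ such that $\gamma_{wk}(G) \leq t$ if and only if the source instance is a yes-instance. The key design goal is to force the weak $\{k\}$-dominating function to behave like a $0/k$ indicator on the clique vertices, so that assigning $g(c)=k$ to a clique vertex $c$ ``covers'' exactly the independent vertices in its neighborhood (each such vertex $x$ then satisfies $g(N[x]) \geq k$), while assigning $g(c)=0$ contributes nothing. The next step is to check, for each independent vertex $x$ with $g(x)=0$, that $\sum_{y \in N[x]} g(y) \geq k$; this is the monochromatic analogue of the rainbow covering condition, so the combinatorics of the reduction for $\gamma_{rk}$ carry over with the set-valued labels replaced by their cardinalities.

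The technical heart is a gap/rounding argument showing that an optimal weak $\{k\}$-dominating function may be assumed to use only the values $0$ and $k$ on $C$ and the value $0$ on $I$, up to cheap correction terms that are absorbed into the target $t$. Concretely, I would argue that any fractional-looking assignment (values strictly between $0$ and $k$) can be pushed to an integral $0/k$ assignment without increasing the cost, using the fact that all independent vertices attached to a clique vertex see the same clique (by the clique structure, every pair of $C$-vertices is adjacent, so redistributing weight among $C$ is harmless) — this is exactly where the splitgraph structure is used. The hard part will be ruling out exotic optima that assign small positive weights to many independent vertices instead of large weight to few clique vertices; I expect to handle this by adding pendant or padding gadgets to $I$ that make any ``small-weight-on-$I$'' strategy strictly more expensive than the intended clique cover, thereby forcing the clean $0/k$ structure. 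Once that normalization is in place, the equivalence $\gamma_{wk}(G) \leq t \iff$ a set cover of the prescribed size exists follows by the same bookkeeping as in the $\gamma_{rk}$ proof of Theorem~\ref{thm NP-c rainbow}.
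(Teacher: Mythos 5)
Your opening instinct is the right one and matches the paper: the construction from Theorem~\ref{thm NP-c rainbow} transfers verbatim. The paper attaches $k-1$ pendant vertices to each vertex of the clique $C$ of a splitgraph $G$ (reducing from \textsc{Dominating Set} on splitgraphs) and proves the exact identity $\gamma_{wk}(G^{\prime})=\gamma(G)+|C|\cdot(k-1)$. But where you elaborate, you drift into a step that would fail: you propose attaching pendant or padding gadgets to the independent side $I$ to rule out ``small-weight-on-$I$'' optima. Two pendant edges hanging off two nonadjacent vertices of $I$ form an induced $2K_2$, so the resulting graph is in general no longer a splitgraph, and the reduction would no longer establish hardness for the class in question. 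The pendants must go on the clique vertices, where they simply enlarge $I$ and preserve the split partition; there they force, for each $v\in C$ with pendant set $P_v$, the inequality $g(v)+\sum_{x\in P_v}g(x)-(k-1)\geq [g(v)>0]$, because a pendant $p$ with $g(p)=0$ requires $g(v)\geq k$ by~\eqref{eqn4}.

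The second gap is that the ``technical heart'' you defer --- normalizing to a $0/k$ assignment on $C$ and ruling out exotic optima on $I$ --- is not actually needed and is left as an expectation rather than an argument. No normalization lemma is required: for the lower bound one only uses $[g(x)>0]\leq g(x)$ for $x\in I$ together with the displayed inequality for clique vertices, and summing gives $\gamma(G)\leq |D| \leq \gamma_{wk}(G^{\prime})-|C|\cdot(k-1)$, where $D=\{x\in V(G)\mid g(x)>0\}$ is a dominating set of $G$ (a vertex with $g(x)=0$ sees at most $k-1$ from its pendants, so it must see a positively weighted neighbor in $G$). The matching upper bound comes from an explicit function: $g=k$ on $D\cap C$, $g=1$ on $D\cap I$, $g=1$ on the pendants of clique vertices outside $D$, and $g=0$ elsewhere. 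As written, your proposal neither pins down the correct gadget placement nor supplies the counting that replaces the normalization you anticipate, so the reduction is not yet a proof.
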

We moved the proof of this theorem to Appendix~\ref{appendix splitgraphs}.

\newpage 

\appendix 

\section{Rainbow domination on $P_4$-sparse graphs}
\label{appendix cographs}

Many problems can be solved in linear time for the class of 
cographs and $P_4$-sparse graphs, 
see eg,~\cite[Theorem~2 and Corollary~3]{kn:courcelle2}. 
Both classes are of bounded cliquewidth (or rankwidth). 
We are not aware of many problems 
of which the complexities differ on the two classes of graphs.  
An interesting example, that might have different complexities on 
cographs and $P_4$-sparse graphs, 
is the weighted coloring problem~\cite{kn:araujo}. 
In this section we show that the $k$-rainbow domination problem 
can be solved in linear time on $P_4$-sparse graphs. 

\bigskip 

Ho\`ang introduced $P_4$-sparse graphs as follows. 

\begin{definition}
A graph is $P_4$-sparse if every set of 5 vertices contains at most 
one $P_4$ as an induced subgraph. 
\end{definition}

\bigskip 

Jamison and Olariu showed that $P_4$-sparse graphs have a decomposition 
similar to the decomposition of cographs. To define it we need 
the concept of a spider. 

\begin{definition}
A graph $G$ is a thin spider if its vertices can be partitioned into three 
sets $S$, $K$ and $T$, such that 
\begin{enumerate}[\rm (a)]
\item $S$ induces an independent set and $K$ induces a clique and 
\[|S|=|K| \geq 2.\] 
\item Every vertex of $T$ is adjacent to every vertex of $K$ and to no 
vertex of $S$. 
\item There is a bijection from $S$ to $K$, such that every 
vertex of $S$ is adjacent to exactly one unique vertex of $K$. 
\end{enumerate}
A thick spider is the complement of a thin spider. 
\end{definition}
Notice that, possibly, $T=\es$. The set $T$ is called the head, 
and $K$ and $S$ are called the body and the feet of the thin spider. 

\bigskip 

Jamison and Olariu proved the following theorem~\cite{kn:jamison}. 

\begin{theorem}
A graph is $P_4$-sparse if and only if for every induced subgraph $H$ 
one of the following holds. 
\begin{enumerate}[\rm (i)]
\item $H$ is disconnected, or 
\item $\Bar{H}$ is disconnected, or 
\item $H$ is isomorphic to a spider. 
\end{enumerate}
\end{theorem}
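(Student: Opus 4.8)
The plan is to prove both directions, leaning on two structural features of the class. First, being $P_4$-sparse is hereditary: any five vertices of an induced subgraph are five vertices of the whole graph, so the ``at most one induced $P_4$'' condition descends to induced subgraphs. Second, the property is self-complementary, because the complement of a $P_4$ is again a $P_4$ and the counting condition is invariant under complementation; the trichotomy (i)/(ii)/(iii) is self-complementary as well, since disconnectedness and co-disconnectedness are interchanged and the complement of a thin spider is a thick spider.

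For the ``if'' direction I would argue by induction on $|V(H)|$ that any graph all of whose induced subgraphs satisfy (i)/(ii)/(iii) is $P_4$-sparse. If $H$ is disconnected, every induced $P_4$, being connected, lies in a single component; two distinct $P_4$'s on five vertices must share at least three vertices and hence lie in the same component, so induction closes this case. If $\bar{H}$ is disconnected I would pass to the complement, which also satisfies the hypothesis (by self-complementarity of the trichotomy) and is covered by the previous case, and then invoke self-complementarity of $P_4$-sparseness. The remaining case is that $H$ is a spider; here the inductive hypothesis makes the head $G[T]$ already $P_4$-sparse, and the crux is the lemma that a spider with $P_4$-sparse head is $P_4$-sparse. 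I would prove this by observing that, since each foot has degree one and the body $K$ is completely joined to the head $T$, every induced $P_4$ of a spider is either of the form $s_i\,k_i\,k_j\,s_j$ for two feet, or contained entirely in $T$ (a $P_4$ cannot cross the join $K \vee T$, being both connected and having connected complement); a short finite check then shows no five vertices can carry two such $P_4$'s.

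The ``only if'' direction is the main obstacle. By heredity it reduces to the single claim that if $H$ is $P_4$-sparse with both $H$ and $\bar{H}$ connected, then $H$ is a spider. Since $H$ is then not a cograph, it contains an induced $P_4$, say $P = v_1 v_2 v_3 v_4$. The engine of the argument is to classify every vertex $v \notin P$ by its trace $N(v) \cap V(P)$, using that the five vertices $V(P)\cup\{v\}$ may contain no induced $P_4$ besides $P$ itself. Enumerating the sixteen possible traces and discarding those that create a second $P_4$ leaves a short list, which I expect to be essentially $\varnothing$, $\{v_2,v_3\}$, and $V(P)$, together with the complementary counterparts handled by possibly replacing $H$ with $\bar{H}$. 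From these traces I would define the candidate feet (trace $\varnothing$) and the candidate body-plus-head (trace $\{v_2,v_3\}$), then verify the defining adjacencies of a spider -- independence of $S$, completeness of $K$, the complete join from $K$ to $T$, the absence of edges from $K\cup T$ to $S$, and the $S$--$K$ bijection -- by repeatedly applying the one-$P_4$-per-five-vertices constraint to carefully chosen five-element sets.

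The hardest part, where I expect to spend the most effort, is assembling the trace-classes into a globally consistent spider: distinguishing which trace-$\{v_2,v_3\}$ vertices belong to the body $K$ versus the head $T$, recovering the bijection that pairs each foot with its unique clique neighbour, and showing that connectivity of both $H$ and $\bar{H}$ forces $|S| = |K| \ge 2$ and excludes the degenerate partitions. I would also want the complementation step to be clean, so that the thin-versus-thick dichotomy is dispatched uniformly rather than by duplicating the entire case analysis.
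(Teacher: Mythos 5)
First, a point of comparison: the paper does not prove this statement at all --- it is quoted as a known structure theorem of Jamison and Olariu (their 1992 paper, cited as the source) --- so there is no in-paper proof to measure your attempt against, and I can only assess it on its own terms. Your ``if'' direction is essentially complete and correct: heredity reduces the claim to five-vertex induced subgraphs, the disconnected and co-disconnected cases go through exactly as you say (two $P_4$'s on five vertices share three vertices, and the complement of a $P_4$ is a $P_4$), and your classification of the induced $P_4$'s of a thin spider --- either $s_i k_i k_j s_j$ for two feet and their matched body vertices, or wholly inside the head $T$ --- is correct and does yield the one-$P_4$-per-five-vertices bound.

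The gap is in the ``only if'' direction, which is the substance of the theorem. Two steps there are asserted rather than carried out. The first --- the enumeration showing that a vertex $v$ outside an induced path $v_1v_2v_3v_4$ that creates no second $P_4$ must have trace $\varnothing$, $\{v_2,v_3\}$ or $\{v_1,v_2,v_3,v_4\}$ --- is a finite check and your guessed answer is right, so that is only a presentational omission. The second is genuine: the trace of a single outside vertex on a single $P_4$ says nothing about adjacencies \emph{among} the outside vertices, and every defining property of the spider partition --- independence of $S$, completeness of $K$, the complete join from $K$ to $T$, the absence of $S$--$T$ edges, the perfect matching between $S$ and $K$, the separation of $K$ from $T$ inside the common trace class $\{v_2,v_3\}$, and $|S|=|K|\ge 2$ forced by connectivity of $H$ and $\bar{H}$ --- must be extracted from five-sets mixing two or more outside vertices with fragments of the chosen $P_4$, together with a global argument that the structure seen around one $P_4$ propagates to the whole graph. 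You explicitly defer this assembly as ``the hardest part,'' and it is: it is where essentially all of Jamison and Olariu's argument lives. As written, the proposal is a correct and well-oriented plan for the standard proof, but not yet a proof.
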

 
\bigskip 

\begin{theorem}
\label{P4sparse}
There exists a linear-time algorithm to compute the 
$k$-rainbow domination number for $P_4$-sparse graphs and $k \in \mathbb{N}$. 
\end{theorem}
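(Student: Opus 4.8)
The plan is to run the dynamic program of Theorem~\ref{thm cograph} over the spider decomposition of Jamison and Olariu~\cite{kn:jamison}, which is computable in linear time. At each node $H$ I carry $|V(H)|$ together with $R^{+}(H)$ and $R^{-}(H)$. Union and join nodes are updated by~\eqref{eqn6} and~\eqref{eqn7} exactly as for cographs, and $R^{+}(H)=\max\{|V(H)|,k\}$ holds by the elementary identity~\eqref{eqn16}, which in fact holds for every graph since non-empty labels make the rainbow condition vacuous. Hence the only new ingredient is a formula for $R^{-}(H)$ at a spider node, and I will show that it is a closed form depending only on $|K|$ and $|V(T)|$, so that it costs $O(1)$ to evaluate.

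The driving observation is that the body $K$ of a thin spider is a clique completely joined to the head $T$, so a single body vertex labelled $[k]$ rainbow-dominates all of $K\cup T$ at once. Labelling one body vertex with $[k]$, emptying its foot together with all other body and head vertices, and placing one colour on each of the remaining $|K|-1$ feet produces an $f\in F^{-}(H,k)$ of cost $k+|K|-1$. In a thick spider the clique is $S$; a vertex $s_{1}$ with label $[k]$ dominates every $k_{j}$ with $j\neq 1$ and the whole head, leaving only the non-neighbour $k_{1}$, which one colour repairs, for cost $k+1$. I therefore claim
\[
R^{-}(H)=k+|K|-1\ \ (\text{thin}),\qquad R^{-}(H)=k+1\ \ (\text{thick}),
\]
both independent of the head.

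For optimality I use that any $f\in F^{-}(H,k)$ has an empty vertex $v$ with $f(N(v))=[k]$, hence $\sum_{u\in N(v)}|f(u)|\ge k$. In the thin case let $a$ count the empty feet; each empty foot forces $f(k_{i})=[k]$ and each non-empty foot costs at least one, so $\|f\|\ge ak+(|K|-a)\ge k+|K|-1$ whenever $a\ge1$, and when $a=0$ the empty vertex lies in $K\cup T$, its neighbourhood meets the feet in at most one vertex, and the surviving $\ge|K|-1$ labelled feet sit disjointly from the $\ge k$ colours on $N(v)$, giving the same bound. The thick case is analogous but asymmetric: the head sends no edge to $K$, so the $k$ colours demanded around any empty vertex can be placed entirely on $S\cup T$, and one then argues that $K$ cannot be dominated for free—keeping all of $K$ empty would force every colour onto at least two body vertices, costing $2k$—so at least one further unit is unavoidable and $\|f\|\ge k+1$.

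The hard part is making this last step of the thick case airtight. Because the clique and the independent set have exchanged roles, one must check every location of the empty vertex—inside $S$, inside $T$, or inside $K$—and rule out that the single mandatory repair of the exposed vertex $k_{1}$ can be absorbed into the $k$ colours already spent; the awkward configuration is an empty $s_{i}$, which silently empties its non-neighbour $k_{i}$ and relays the extra demand onto $S\setminus\{s_{i}\}$, so that dropping any additional body vertex would uncover a colour. Once both closed forms are secured, a spider node takes $O(|K|)$ time to read off $S$, $K$ and the head size, which amortises to $O(1)$ per vertex; combined with the linear-time decomposition and the $O(1)$ union and join updates this proves the theorem.
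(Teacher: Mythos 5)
Your proposal follows the paper's route exactly: extend the cograph recursion of Theorem~\ref{thm cograph} with a constant-time closed form at each spider node of the Jamison--Olariu decomposition, and your thin-spider analysis (upper bound $k+|S|-1$ via one empty foot whose body-neighbour receives $[k]$, lower bound by splitting on the number $a$ of empty feet) is the paper's argument almost verbatim. Where you go beyond the paper is the thick spider: the paper dismisses it as ``similar'' without stating a value, whereas you correctly observe that the answer is $k+1$, independent of $|K|$ --- in particular it is \emph{not} the mirror image $k+|K|-1$ of the thin formula, so treating the case separately was necessary rather than pedantic. The one step you flag as not yet airtight closes cleanly without any case analysis on where the empty vertex sits: every $f\in F^{-}(H,k)$ has an empty vertex $v$ with $f(N(v))=[k]$, so $\|f\|\ge k$; if $\|f\|=k$ then every colour occurs exactly once, all occurrences lie in $N(v)$, and every vertex outside the support $D$ is empty and therefore must be adjacent to \emph{all} of $D$. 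No nonempty $D\subsetneq V(H)$ of a thick spider has that property: $s_i\in D$ forces its non-neighbour $k_i$ into $D$, $k_i\in D$ forces $\{s_i\}\cup T\cup(K\setminus\{k_i\})$ into $D$, and a head vertex in $D$ forces $K\subseteq D$, so $D$ cascades to all of $V(H)$ and leaves no empty vertex, a contradiction. Hence $\|f\|\ge k+1$, matching your construction $f(s_1)=[k]$, $f(k_1)=\{1\}$, and with that the proposal is a complete proof by the same method as the paper's.
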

\begin{proof}
We extend Formula~\eqref{eqn7} in the proof of 
Theorem~\ref{thm cograph} to include spiders. 

\medskip 

\noindent 
Assume that $G$ is a thin spider, with a head $T$, a body $K$ and 
an independent set of feet $S$.  
We need to consider only $k$-rainbow colorings such that at least one 
vertex of $G$ has an empty label.  
Notice that we may assume that all 
the feet have labels of cardinality at most one. Furthermore, 
there is at most one vertex in $S$ that has an empty label. 

\medskip 

\noindent 
First assume that one foot, say $x$, has an empty label. Then its 
neighbor, say $y \in K$ has label $[k]$. In that case, the (only) optimal 
$k$-rainbow coloring has 
\begin{enumerate}[\rm (a)]
\item all vertices in $S \setminus \{x\}$ a label of cardinality 1, 
\item all vertices in $K \setminus \{y\}$ an empty label, and 
\item all vertices of $T$ also an empty label. 
\end{enumerate}
It follows that the cost in this case is 
\begin{equation}
\label{eqsparse1}
|S|-1+k. 
\end{equation}

\medskip 

\noindent
All other optimal $k$-rainbow colorings give all feet a label 
of cardinality 1. 
If some vertex of $T$ has an empty label, then its neighborhood has cost 
at least $k$, and the total cost will be more than~\eqref{eqsparse1}.  
So, the only alternative $k$-rainbow coloring assigns no empty label to 
$S$ nor $T$, and assigns an empty label to some vertex $v$ of $K$.  In such 
a case, the neighborhood of $v$ has cost at least $k$,
so that combining with the cost in the non-neighboring feet,
the total cost is at least that of~\eqref{eqsparse1}.
   
\medskip 

\noindent
In conclusion, the optimal cost of a thin spider $G$ is 
\begin{equation}
\min\; \{\; |V(G)|, |S| - 1 + k\; \}.
\end{equation}

\medskip 

\noindent
The case analysis for the thick spider is similar. 
This proves the theorem. 
\qed\end{proof}

\bigskip 

\begin{remark} 
A graph $G$ 
is $P_4$-lite if every induced subgraph $H$ of at most 6 vertices satisfies 
one of the following. 
\begin{enumerate}[\rm (i)]
\item $H$ contains at most two induced $P_4$'s, or 
\item $H$ is isomorphic to $S_3$ or $\Bar{S_3}$. 
\end{enumerate}
These graphs contain the $P_4$-sparse graphs. 

\medskip 

Babel and Olariu studied $(q,q-4)$-graphs, see, eg,~\cite{kn:babel}. 
It would be interesting to study the 
complexity of the $k$-rainbow problem on these graphs. 
\end{remark}
     
\section{Trivially perfect graphs}
\label{appendix TP}

\begin{lemma}
There exists an optimal weak $\{k\}$-$L^{\prime}$-dominating function
$g$ such that $g(z_i) \geq g(z_j)$ when $i < j$.
\end{lemma}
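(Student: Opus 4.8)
The plan is to prove the lemma by an exchange argument applied to an arbitrary optimal weak $\{k\}$-$L^{\prime}$-dominating function $g$, which by the boxed observation we may assume satisfies $g(z)\in\{0,1\}$ for every $z\in Z$. In these terms the claim asserts that the $1$'s placed on $Z$ may be taken to be a prefix $z_1,\dots,z_m$ of the global $d$-order. First I would record the two structural facts supplied by the rooted tree model. Since every $z\in Z$ is a strict descendant of $x$ and $P$ consists of $x$ together with the ancestors of $x$, every vertex of $Z$ is adjacent to every vertex of $P$; and within $Z$ a vertex $z\in B_j$ is comparable only to the other members of its clique, so that $N[z]=P\cup B_j$. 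Dually, any vertex $w\notin Z$ adjacent to some vertex of $Z$ must be an ancestor of it, hence lies on $P$ and in fact satisfies $Z\subseteq N[w]$; thus for every $w\notin Z$ we have $N[w]\cap Z\in\{\es,Z\}$.

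Next I would translate the domination constraints into the language of the $d$-values. For $z\in B_j$ with $g(z)=0$ the requirement $g(N[z])\geq b^{\prime}_z$ reads $g(P)+g(B_j)\geq b^{\prime}_z$, and since $g(P)+g(B_j)$ is invariant under permuting $g$ inside $B_j$, I may first sort each clique so that its $1$'s occupy the positions of largest $b^{\prime}$; this preserves cost and feasibility, as it only lowers the largest $b^{\prime}$ among the $0$-vertices of $B_j$. After this within-clique sorting the $1$'s of $B_j$ form a prefix $x^j_1,\dots,x^j_{m_j}$, the unique binding $0$-vertex of $B_j$ is $x^j_{m_j+1}$, and its residual demand is exactly $g(P)\geq b^{\prime}_{x^j_{m_j+1}}-m_j=d_{x^j_{m_j+1}}$. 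Feasibility of $g$ on $Z$ is therefore equivalent to $g(P)\geq\max_j d_{x^j_{m_j+1}}$, i.e. to $g(P)$ dominating the largest $d$-value occurring on a $0$-vertex of $Z$.

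The heart of the argument is then a rearrangement that keeps the total $M=g(Z)$ fixed while moving the $1$'s onto the $M$ vertices of largest $d$-value, namely $z_1,\dots,z_M$. Because this intersects each clique in a top-by-$d$ (hence top-by-$b^{\prime}$) prefix, the new function still has within-clique prefixes and still satisfies $N[z]=P\cup B_j$. The largest $d$-value left on a $0$-vertex is now $d_{z_{M+1}}$, the $(M+1)$-st largest over all of $Z$; over every placement of exactly $M$ ones this quantity is minimized precisely by the prefix placement, so $d_{z_{M+1}}\leq\max_j d_{x^j_{m_j+1}}\leq g(P)$ by feasibility of the original $g$, and all $Z$-constraints still hold. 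Crucially, holding $M$ fixed is what preserves feasibility outside $Z$: every $w\notin Z$ with $g(w)=0$ has $N[w]\cap Z\in\{\es,Z\}$, so $g(N[w])$ depends on the $Z$-part only through $M$ and is unchanged. Since $\|g\|$ is also unchanged, the rearranged function is again optimal and has the desired prefix form.

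The step I expect to be the real obstacle is exactly this coupling with the ancestors: the naive move of simply deleting a superfluous $1$ from an over-dominated clique would lower the cost but can destroy the constraint of an ancestor $y\in P$, for which $Z\subseteq N[y]$. Recognizing that one must rearrange at fixed $M=g(Z)$ rather than re-optimize $M$, and that the all-or-nothing adjacency $N[w]\cap Z\in\{\es,Z\}$ for $w\notin Z$ is what licenses such a rearrangement, is the crux. The rest is routine: one checks the extremal identity that the prefix placement minimizes the top $0$-vertex $d$-value among all $M$-placements (a standard rearrangement inequality), and handles ties by fixing any order consistent with non-increasing $d$, under which the optimal $0$-set is a suffix. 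Packaging the rearrangement as a finite sequence of adjacent transpositions in the $d$-order then yields the monotonicity $g(z_i)\geq g(z_j)$ for $i<j$ stated in the lemma.
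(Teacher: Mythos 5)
Your proposal is correct and follows essentially the same exchange argument as the paper: both reduce to the observations that within-clique reordering is free, that feasibility on $Z$ is governed by the residual demands $d_z$ on $g(P)$, and that moving $1$'s earlier in the $d$-order (equivalently, swapping a bad pair $z_i,z_j$ with $d_{z_i}\geq d_{z_j}$) preserves all constraints. The only organizational difference is that you perform one global rearrangement at fixed $M=g(Z)$ justified by an extremal property of the prefix, whereas the paper iterates a single extremal pairwise swap; you are also somewhat more explicit than the paper about why vertices outside $Z$ remain satisfied (each sees all of $Z$ or none of it, so only $M$ matters).
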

\begin{proof}
Assume not.
Let $\Hat{g}$ be the same as $g$ except that $\Hat{g}(z_i)=1$ and
$\Hat{g}(z_j)=0$.
To see that $\Hat{g}$ is a weak $\{k\}$-L-dominating function,
first notice that, when $z_i$ and $z_j$ are in a common clique of
$G^{\prime}[Z]$
then it is easy to see that $\Hat{g}$
defines a weak $\{k\}$-L-dominating function
of at most the same cost.

\medskip

\noindent
Now assume that the claim holds for all
pairs contained in common cliques.
Then choose $z_i$ and $z_j$ with
$i < j$, $g(z_i)=0$ and $g(z_j)=1$, and $z_i$ and $z_j$
in different cliques with $i$ as small as possible
and $j$ as large as possible. Say
$z_i=x^m_{m^{\ast}}$ and $z_j=x^h_{h^{\ast}}$. Then, by assumption,
\[\forall_{t < m^{\ast}}\; g(x^m_t)=1 \quad 
\forall_{t > m^{\ast}}\; g(x^m_t)=0 \quad  
\forall_{t < h^{\ast}}\; g(x^h_t)=1 \quad 
\forall_{t > h^{\ast}}\; g(x^h_t)=0.\]
Now, notice that
\[g(N(z_i))=m^{\ast}-1+g(P) \geq b^{\prime}_{z_i} \quad\Rightarrow\quad 
 g(P) \geq b^{\prime}_{z_i}-m^{\ast}+1.\]
We prove that $\Hat{g}(N[z_j]) \geq b^{\prime}_{z_j}$.
We have that
\[\Hat{g}(N[z_j])= g(P) +h^{\ast} -1 \geq b^{\prime}_{z_i}-m^{\ast}+h^{\ast}.\]
By definition and the ordering on $Z$,
\[b^{\prime}_{z_i}-m^{\ast}+1 = d_{z_i} \geq d_{z_j}=b^{\prime}_{z_j}-h^{\ast}+1 
\quad\Rightarrow \quad 
\Hat{g}(N[z_j]) \geq b^{\prime}_{z_j}.\]
This proves the lemma.
\qed\end{proof}

\bigskip 

\begin{lemma}
\label{lm appendix TP}
Define $d_{z_{\ell+1}}=a$. Let $i^{\ast} \in \{1,\dots,\ell+1\}$ be such that 
\begin{enumerate}[\rm (a)]
\item $\max\;\{\;a,\;d_{z_i^{\ast}}\;\}+i^{\ast}-1$ is smallest possible, and 
\item $i^{\ast}$ is smallest possible with respect to $(a)$. 
\end{enumerate}
Let $H=G^{\prime}-Z$. 
Let $L^H$ be the restriction of $L^{\prime}$ to $V(H)$ with the following 
modifications. 
\[\forall_{y \in P}\; b^H_y=\max\;\{\;0,\;b^{\prime}_y-i^{\ast}+1\;\}.\] 
Let $a^H=\max\;\{\;a,\;d_{z_{i^{\ast}}}\;\}$. 
Then 
\[\Gamma(G^{\prime},L^{\prime},a)=\Gamma(H,L^H,a^H)+i^{\ast}-1.\] 
\end{lemma}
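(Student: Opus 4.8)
The plan is to read $i^{\ast}-1$ as the number of descendants of $x$ that an optimal function makes positive, and to reduce $G^{\prime}$ to $H=G^{\prime}-Z$ by paying for those descendants explicitly. By the preceding lemma and the earlier observation that $g(z)\in\{0,1\}$ for $z\in Z$, I may restrict attention to optimal weak $\{k\}$-$L^{\prime}$-dominating functions whose positive descendants form a prefix $z_1,\dots,z_j$ of the ordering. Two adjacency facts govern the reduction: every $y\in P$ is adjacent in $G^{\prime}$ to all of $Z$ (since $Z$ consists of descendants of $x$ and $x$ is a descendant of, or equal to, every $y\in P$), whereas no vertex of $H\setminus P$ has a neighbour in $Z$. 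Consequently, deleting $Z$ leaves the constraints of $H\setminus P$ untouched and lowers the demand of every $y\in P$ by exactly the number $j$ of positive descendants, which is recorded by setting $b^{H}_y=\max\{0,b^{\prime}_y-j\}$; and the cost of $g$ splits as $j$ plus the cost of its restriction to $H$.

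Next I would compute the path demand that the zero-labelled descendants force. If $z_t$ is zero-labelled, its positive neighbours inside $Z$ are exactly the positive members of its own clique, so $g(N[z_t])=g(P)+c$, where $c$ is that number; since $b^{\prime}$ is non-increasing along each clique and $d_{z}=b^{\prime}_{z}-(\text{position})+1$ is strictly decreasing there, the tightest of these demands is the one coming from the first zero-labelled vertex in the global ordering, giving $g(P)\ge d_{z_{j+1}}$. Together with the external condition $g(P)\ge a$, and with the convention $d_{z_{\ell+1}}=a$, this is the single requirement $g(P)\ge\alpha_j:=\max\{a,d_{z_{j+1}}\}$. Restricting an optimal $g$ to $H$ therefore yields a feasible instance, and conversely any weak $\{k\}$-$L^{H,(j)}$-dominating function on $H$ with $g(P)\ge\alpha_j$ extends (set $z_1,\dots,z_j$ to $1$) to a feasible function on $G^{\prime}$ costing $j$ more. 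The two inequalities give
\[
\Gamma(G^{\prime},L^{\prime},a)=\min_{0\le j\le\ell}\;\bigl(\,j+\Gamma(H,L^{H,(j)},\alpha_j)\,\bigr),
\]
where $L^{H,(j)}$ denotes $L^{\prime}$ with each path demand replaced by $\max\{0,b^{\prime}_y-j\}$.

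It remains to show that this minimum is attained at $j=i^{\ast}-1$, so that the right-hand side equals $\Gamma(H,L^{H},a^{H})+i^{\ast}-1$ with $a^{H}=\alpha_{i^{\ast}-1}=\max\{a,d_{z_{i^{\ast}}}\}$; this is the step I expect to be the crux. The engine is that $P$ is a clique of $H$, so moving one unit of value onto or off of $P$ changes $g(P)$ and the neighbourhood sum of every path vertex by exactly one, which makes trades between descendants in $Z$ and vertices of $P$ unit-for-unit. Given an optimal $g$ with $m$ positive descendants, I would convert it to one with $i^{\ast}-1$ positive descendants without increasing the cost. If $m>i^{\ast}-1$, zero out $z_{i^{\ast}},\dots,z_m$ and add the same number of units to $P$; the restored path sum still meets both the descendant demands (it is at least $d_{z_{i^{\ast}}}$, which dominates every residual clique demand) and the new bound $a^{H}$, the latter because the optimality defining $i^{\ast}$ gives $\alpha_{i^{\ast}-1}+(i^{\ast}-1)\le\alpha_m+m$, i.e.\ $a^{H}-\alpha_m\le m-(i^{\ast}-1)$. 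The case $m<i^{\ast}-1$ is symmetric, adding descendants and removing the same number of units from $P$, the same inequality guaranteeing that enough slack is available on $P$. The main obstacle is to carry out these exchanges while checking that no individual constraint is violated---in particular the constraint newly acquired by a path vertex that drops to zero, and the residual per-clique demands of the now zero-labelled descendants---and it is exactly here that the monotonicity $d_{z_1}\ge\cdots\ge d_{z_\ell}$ and the choice of $i^{\ast}$ are used; the tie-break (b) then serves only to make the reduced instance well defined.
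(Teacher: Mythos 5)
Your overall strategy is the one the paper uses: invoke the monotonicity lemma to assume the positive descendants form a prefix $z_1,\dots,z_j$, exploit the two adjacency facts (every vertex of $P$ sees all of $Z$, no vertex of $V(H)\setminus P$ sees any of $Z$), and then argue by exchange that $j$ may be taken to be $i^{\ast}-1$. Your intermediate identity $\Gamma(G^{\prime},L^{\prime},a)=\min_j\bigl(j+\Gamma(H,L^{H,(j)},\alpha_j)\bigr)$ is a clean reformulation that the paper does not state explicitly but that is implicit in its two inequalities, and your exchange for $m>i^{\ast}-1$ coincides with the paper's modification $\hat{g}$, which adds the surplus to the root $R$ (capped at $k$); putting it on the universal vertex $R$ is the right choice, since that increment reaches every closed neighbourhood in $H$ and every demand is at most $k$.

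The gap sits exactly where you flag the crux, but it is worse than you suggest: the case $m<i^{\ast}-1$ is \emph{not} symmetric. Adding weight to $P$ can only help constraints in $H$, whereas removing weight from a vertex $y\in P$ strictly decreases $g(N[w])$ for every $w\in V(H)\setminus P$ having $y$ among its path-ancestors, and for such $w$ the compensating units placed on $Z$ are invisible because $N[w]\cap Z=\varnothing$. These vertices are absent from your list of constraints to re-check, and the omission is fatal rather than technical. Concretely, let $Z=\{z_1\}$ with $b^{\prime}_{z_1}=2$ and $a=0$, so that $i^{\ast}=2$ and the lemma prescribes one positive descendant, and attach to $R$ two further pendant vertices each with demand $2$: the optimum of $G^{\prime}$ is $2$ (put $g(R)=2$, which serves both branches), it is attained with $j=0$ positive descendants, and $(i^{\ast}-1)+\Gamma(H,L^{H},a^{H})=1+2=3$. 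So no exchange argument can close this case from the lemma's hypotheses alone; one needs either the extra assumption $V(G^{\prime})=P\cup Z$ or a different accounting of how weight on $P$ is shared between $Z$ and the side subtrees. For what it is worth, the paper's own proof has the same hole: it establishes that one may assume $i\le i^{\ast}$ via the $\hat{g}$ modification, but its final cost count then needs $i\ge i^{\ast}$, and the case $i<i^{\ast}$ is never repaired.
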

\begin{proof}
Let $g$ be a weak $\{k\}$-$L^{\prime}$-dominating 
function, satisfying $g(P) \geq a$, of minimal cost. 
We first prove that the restriction of $g$ to $H$ is a 
weak $\{k\}$-$L^H$-dominating 
function and that $g(P) \geq a^H$. 

\medskip 

\noindent
Let $i$ be the largest index such that $g(z_j)=1$ for all $j < i$. 
If $i=\ell+1$ we have $g(P) \geq a=d_{z_{\ell+1}}$. 

\medskip 

\noindent
Now assume that $i \leq \ell$. 
Let $z_i=x^m_{m^{\ast}}$.   
Then 
\[g(N[z_i]) = g(P)+m^{\ast}-1 \geq b^{\prime}_{z_i} \quad\Rightarrow\quad 
g(P) \geq b^{\prime}_{z_i} - m^{\ast}+1=d_{z_i}.\] 
Thus $g(P) \geq \max \{\;a,\;d_{z_i}\;\}$ and 
so we have that   
\[g(P) +i-1 \geq \max \;\{\;a,\;d_{z_i}\;\} +i-1 \geq 
\max\;\{\;a,\;d_{z_{i^{\ast}}}\;\}+i^{\ast}-1.\] 
We claim that $i^{\ast} \geq i$. 

\medskip 

\noindent
Suppose that $i^{\ast} < i$. Then let $\Hat{g}$ be the same as $g$ 
except that 
\[\Hat{g}(R)= \min\;\{\;g(R)+i - i^{\ast},\; k\;\} 
\quad\text{and}\quad 
\forall_{i^{\ast} \leq j < i} \; \hat{g}(z_j)=0.\] 
Let $j \geq i^{\ast}$ and let $h$ and $h^{\ast}$ be such that $z_j=x^h_{h^{\ast}}$. 
Let $t$ be the smallest index for which $\hat{g}(x^h_t)=0$. 
By the inequality above, $\Hat{g}(P)\geq \max\{\;a,\;d_{z_i^{\ast}}\;\}$, and so,  
\[\hat{g}(N[x^h_{h^{\ast}}]) = \hat{g}(N[x^h_t])=\hat{g}(P)+t-1 \geq 
d_{z_{i^{\ast}}}+t-1 \geq d_{x^h_t}+t-1 =b^{\prime}_{x^h_t} \geq b^{\prime}_{x^h_{h^{\ast}}}.\] 
Thus $\hat{g}$ is a weak $\{k\}$-$L^{\prime}$-dominating function on $G^{\prime}$ with 
$\Hat{g}(P) \geq a$ and with cost at most $\|g\|$. 
Therefore, we may assume that 
\[i^{\ast} \geq i \quad\text{and}\quad  
g(P) \geq  
\max \; \{\;a,\;d_{z_i^{\ast}}\;\} = a^H.\]  

\medskip 

\noindent
Also, notice that   
\begin{multline}
\forall_{y \in P} \; g(y)=0 \quad\Rightarrow \\
g^H(N[y]) \geq \max \{\;0,\; g(N[y])-i+1\;\} \geq 
\max\;\{\;0,\;b_y^{\prime}-i^{\ast}+1\;\}=b_y^H.
\end{multline}
This proves that 
\[\Gamma(G^{\prime},L^{\prime},a) - i^{\ast}+1 \geq \Gamma(H,L^H,a^H).\] 
 
\medskip 

\noindent 
Now let $g^H$ be a weak $\{k\}$-$L^H$-dominating function of $H$,  
with $g^H(P) \geq a^H$, of minimal cost. 
Extend $g^H$ to a function $g^{\prime}$ on $G^{\prime}$ by defining 
$g^{\prime}(z_j)=1$ for all $j < i^{\ast}$ and $g^{\prime}(z_j)=0$ for all 
$j \geq  i^{\ast}$. We claim that $g^{\prime}$ is a weak $\{k\}$-L-dominating 
function of $G^{\prime}$. 
Let $i \geq i^{\ast}$. 
Say $z_i=x^m_{m^{\ast}}$.  Let $m^{\ast}$ be the first index 
such that 
$g^{\prime}(x^m_j)=0$ for $j \geq m^{\ast}$. 
Then   
\[g^{\prime}(N[z_i])=g^{\prime}(P)+m^{\ast}-1 \geq d_{z_i^{\ast}}+m^{\ast}-1 
\geq d_{z_i}+m^{\ast}-1 = b_{z_i}.\]    
For $i < i^{\ast}$ we have $g^{\prime}(z_i) \neq 0$. For the vertices $y \in P$, 
$g^{\prime}(N[y])=k$ or 
\[g^{\prime}(N[y]) \geq g^H(N_H[y])+i^{\ast}-1 \geq b^H_y+i^{\ast}-1 \geq b_y^{\prime}.\] 
This proves the lemma. 
\qed\end{proof}

\section{$\NP$-Completeness proofs for splitgraphs}
\label{appendix splitgraphs}
 
\begin{theorem}
\label{appendix thm NP-c rainbow}
For each $k \in \mathbb{N}$, the $k$-rainbow domination problem 
is $\NP$-complete for splitgraphs. 
\end{theorem}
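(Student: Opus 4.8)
The plan is to first record that the problem lies in $\NP$: a $k$-rainbow function is a polynomial-size certificate, and both the validity of condition~\eqref{eqn1} and the cost $\|f\|$ are checkable in polynomial time. For hardness I would reduce from \textsc{Dominating Set} on general graphs, which is $\NP$-complete. Given a graph $H$ and a target $t$, I build a splitgraph $G^{\prime}$ as follows. The clique side is $C=\{\,c_u \mid u \in V(H)\,\}$ and, writing $M=k\cdot|V(H)|+1$, the independent side consists of $M$ copies $w_v^1,\dots,w_v^M$ of each vertex $v\in V(H)$, where each copy $w_v^i$ is made adjacent exactly to $\{\,c_u \mid u \in N[v]\,\}$. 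Since $C$ induces a clique, the copies form an independent set, and every copy is adjacent only to clique vertices, $G^{\prime}$ is a splitgraph, and it is built in polynomial time for every fixed $k$.

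The target is $k\cdot t$, and the heart of the argument is the claim
\[
\gamma_{rk}(G^{\prime}) = k\cdot\gamma(H).
\]
For the upper bound I would take a minimum dominating set $D$ of $H$, assign the full label $[k]$ to every $c_u$ with $u\in D$, and leave all remaining vertices with the empty label; since $D$ dominates $H$, each copy $w_v^i$ has a neighbour carrying $[k]$, and each unlabelled clique vertex sees the colors present on $C$. For the lower bound, the role of $M$ is to make labelling copies individually prohibitively expensive: if for some $v$ the clique vertices in $N[v]$ do not together display all of $[k]$, then every one of the $M$ unlabelled copies of $v$ violates~\eqref{eqn1}, forcing cost at least $M>k\cdot|V(H)|\ge k\gamma(H)$. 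Hence any $k$-rainbow function cheaper than this must satisfy $\cup_{u\in N[v]}\,f(c_u)=[k]$ for every $v$, and its cost is at least $\sum_{u}|f(c_u)|$.

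The step I expect to be the main obstacle — and the reason the clique does not trivialise the problem — is ruling out that colors can be spread cheaply across $C$. Here the key observation is that the family of constraints $\cup_{u\in N[v]}\,f(c_u)=[k]$ decomposes by color: for each color $\ell\in[k]$ the set $D_\ell=\{\,u \mid \ell\in f(c_u)\,\}$ must satisfy $D_\ell\cap N[v]\neq\es$ for every $v$, that is, $D_\ell$ is a dominating set of $H$. These $k$ constraints are mutually independent, so
\[
\sum_{u\in V(H)} |f(c_u)| = \sum_{\ell\in[k]} |D_\ell| \ge k\cdot\gamma(H),
\]
which furnishes the matching lower bound and shows that no partial labelling can beat $k$ full dominating sets. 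Combining the two bounds yields $\gamma_{rk}(G^{\prime})=k\gamma(H)$, so $\gamma(H)\le t$ if and only if $\gamma_{rk}(G^{\prime})\le kt$, completing the reduction for every fixed $k$. Note that the construction specialises correctly to $k=1$, where it merely recovers the $\NP$-completeness of ordinary domination on splitgraphs.
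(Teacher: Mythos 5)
Your proof is correct, but it takes a genuinely different route from the paper's. The paper reduces from \textsc{Dominating Set} on splitgraphs (citing Bertossi), keeping the given splitgraph $G$ intact and attaching $k-1$ pendant vertices to each clique vertex; the resulting identity is $\gamma_{rk}(G^{\prime})=\gamma(G)+|C|\cdot(k-1)$, and the accounting argument works pendant-bundle by pendant-bundle, showing each clique vertex together with its pendants contributes at least $k-1$, with equality exactly when $f$ vanishes there. You instead reduce from \textsc{Dominating Set} on general graphs, build a fresh splitgraph whose independent side is a blown-up copy set, and prove $\gamma_{rk}(G^{\prime})=k\cdot\gamma(H)$ via the per-color decomposition: each color class $D_\ell=\{u\mid \ell\in f(c_u)\}$ must be a dominating set of $H$, so $\sum_\ell|D_\ell|\geq k\gamma(H)$. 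Your lower-bound mechanism is clean and arguably more transparent than the paper's bookkeeping, and it is self-contained in that it does not presuppose hardness on splitgraphs (indeed your $k=1$ case re-derives it); the multiplicity $M=k|V(H)|+1$ correctly makes labelling copies prohibitively expensive, since otherwise some copy family forces cost exceeding the upper bound. The trade-offs are that your gadget has quadratic size where the paper's is linear, and the paper's reduction preserves the input graph, which makes the correspondence between dominating sets and rainbow functions essentially bijective rather than mediated through the color-class argument. Both establish the theorem; membership in $\NP$ is immediate in either case.
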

\begin{proof}
Assume that $G$ is a splitgraph with maximal clique $C$ and independent set $I$. 
Construct an auxiliary graph $G^{\prime}$ by making $k-1$ pendant 
vertices adjacent to each vertex of $C$. 
Thus $G^{\prime}$ has $|V(G)|+|C|(k-1)$ 
vertices, and $G^{\prime}$ remains a splitgraph. 
We prove that
\[\gamma_{rk}(G^{\prime})=\gamma(G) + |C|\cdot (k-1).\] 
Since 
domination is $\NP$-complete for splitgraphs~\cite{kn:bertossi}, this 
proves that $k$-rainbow domination is $\NP$-complete also. 

\medskip

\noindent
We first show that 
\[\gamma_{rk}(G^{\prime}) \leq \gamma(G) + |C|\cdot (k-1).\] 
Consider a dominating set $D$ of $G$ with 
$|D|=\gamma(G)$.  
We use $D$ to construct a $k$-rainbow function $f$ for $G^{\prime}$ as follows:
\begin{itemize}
\item For any $v \in D$, if $v \in C$, let $f(v) = [k]$;  else, if $v \in I$, 
let $f(v) = \{k\}$;
\item For any $v \in V(G)\setminus D$, let $f(v) =  \es$;
\item For the $k-1$ pendant vertices attaching to a vertex $v \in C$, 
if $f(v) = [k]$, 
then $f$ assigns to each of these pendant vertices an empty set.  
Otherwise, if $f(v) = \es$, then $f$ assigns the distinct size-1 sets 
$\{1\}, \{2\}, \ldots, \{k-1\}$ to these pendant vertices, respectively.
\end{itemize}
It is straightforward to check that $f$ is a $k$-rainbow function. 
Moreover, we have
\begin{equation}
\gamma_{rk}(G^{\prime}) \leq \sum_{x \in V(G^{\prime})} \: |f(x)| 
                                              = \gamma(G) + |C|\cdot (k-1).
\end{equation}

\medskip

\noindent
We now show that 
\[\gamma_{rk}(G^{\prime}) \geq \gamma(G) + |C|\cdot (k-1).\] 
Consider a minimizing $k$-rainbow function $f$ for $G^{\prime}$. 
Without loss of generality, 
we further assume that $f$ assigns either $\es$ or a size-1 subset 
to each pendant vertex.%
\footnote{Otherwise, if a pendant vertex $p$ attaching $v$ 
is assigned a set with two or more labels, say 
$f(p) = \{\ell_1, \ell_2, \ldots\}$, we 
modify $f$ into $f'$ so that 
$f'(p) = \{ \ell_1 \}$, $f'(v) = f(v) \cup (f(p)\setminus \{\ell_1\})$, 
and $f'(x) = f(x)$ for the remaining vertices; 
the resulting $f'$ is still a minimizing $k$-rainbow function.}
Define $D \subseteq V(G)$ as  
\begin{equation}
D = \{\; x \; \mid\; f(x) \neq \es \;\mbox{\ and\ }\; x \in V(G)\; \}.
\end{equation}
That is, 
$D$ is formed by removing all the pendant vertices in $G^{\prime}$, 
and selecting all those vertices where $f$ assigns a non-empty set.  
Observe that $D$ is a dominating set of $G$.%
\footnote{That is so because for any $v \in V(G)\setminus D$, 
we have $f(v) = \es$ so that the union of labels of $v$'s 
neighbor in $G^{\prime}$ is $[k]$;  
however, at most $k-1$ neighbors of $v$ are removed, 
and each was assigned a size-1 set, so that $v$ must have at 
least one neighbor in $D$.}  
Moreover, we have

\begin{eqnarray*}
|D| &=& \sum_{x \in C}\; [f(x) \neq \es]\;  +\; \sum_{x \in I}\; [f(x) \neq \es]  \\
&\leq& \sum_{x \in V(G^{\prime}) \setminus I}\; |f(x)|  - |C|\cdot (k-1) + 
\sum_{x \in I}\; |f(x)|  \\
&\leq& \sum_{x \in V(G^{\prime})} |f(x)| -  |C|\cdot (k-1), 
\end{eqnarray*}
where the first inequality follows from the fact that for 
each $v \in C$ and its corresponding pendant vertices $P_v$, 
\[|f(v)| + \sum_{x \in P_v} |f(x)| - (k-1) = 
\begin{cases}
0 &  
\text{if $f(v) = \es$}\\
 \geq 1 & \text{if $f(v) \neq \es$.}
\end{cases}\]  
Consequently, we have
\begin{equation}
\gamma(G) \leq |D| \leq \gamma_{rk}(G^{\prime}) - |C|\cdot (k-1). 
\end{equation}
This proves the theorem. 
\qed\end{proof}

\bigskip 
\begin{theorem}
\label{appendix thm NP-c weak}
For each $k \in \mathbb{N}$, the weak $\{k\}$-domination 
problem is $\NP$-complete for splitgraphs. 
\end{theorem}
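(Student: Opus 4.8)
The plan is to mimic the reduction used for Theorem~\ref{appendix thm NP-c rainbow}, reducing from the dominating set problem, which is $\NP$-complete for splitgraphs~\cite{kn:bertossi}. Membership in $\NP$ is immediate, since one can guess a function $g : V(G) \to \{0,\dots,k\}$ and verify in polynomial time both that it is a weak $\{k\}$-dominating function and that $\|g\|$ is at most the target value. So the real work is in the hardness.

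Given a splitgraph $G$ with clique $C$ and independent set $I$, I would use the \emph{same} auxiliary graph $G^{\prime}$ as before: attach $k-1$ pendant vertices to each vertex of $C$. As noted there, $G^{\prime}$ is again a splitgraph, with the pendants joining the independent side. The claim to establish is
\[\gamma_{wk}(G^{\prime}) = \gamma(G) + |C|\cdot(k-1),\]
exactly the identity proved for the rainbow number. Note that the case $k=1$ is already the dominating set problem itself (here $G^{\prime}=G$), so the interesting content is $k \geq 2$. For the upper bound I would start from a minimum dominating set $D$ of $G$ and build $g$ by setting $g(v)=k$ for $v \in D \cap C$ (which frees all of $v$'s pendants to take value $0$), $g(v)=1$ for $v \in D \cap I$, $g=0$ on $V(G)\setminus D$, and giving each pendant of a \emph{non}-dominating clique vertex the value $1$. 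The check that $g$ is weak $\{k\}$-dominating is the analogue of the rainbow verification: a clique vertex outside $D$ collects weight $k-1$ from its pendants and at least $1$ from its dominator, while an independent vertex outside $D$ is dominated by a clique vertex of weight $k$. A short computation then gives $\|g\| = |D| + |C|(k-1)$.

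The main obstacle is the lower bound, which I would handle by a charging argument replacing the cardinality argument used for rainbow functions. Given any weak $\{k\}$-dominating function $g$ on $G^{\prime}$, the pendant constraints force, for every $v \in C$, either $g(v)\geq k$ or $g(p)\geq 1$ for all pendants $p$ of $v$; in both cases the quantity $e_v := g(v)+\sum_{p} g(p) - (k-1)$ is nonnegative. Setting $e_u := g(u)$ for $u \in I$, one obtains $\|g\| = |C|(k-1) + \sum_x e_x$. I would then define $D = \{\,x \in V(G) : e_x \geq 1\,\}$, so that $|D| \leq \sum_x e_x = \|g\| - |C|(k-1)$.

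The delicate point is to show that this $D$ dominates $G$. The two facts I expect to need are that $e_v=0$ forces $g(v)=0$ for $v\in C$ (since otherwise some pendant is undominated), and that $g(w)\geq 1$ forces $e_w \geq 1$, i.e.\ $w \in D$. Feeding these into the weak-domination inequality at a vertex $x$ with $e_x=0$ finishes the argument: if $x\in I$ then $g(N(x))\geq k$ cannot be met by clique neighbours of $e$-value $0$ (which all carry weight $0$), so some neighbour lies in $D$; if $x\in C$ then its $k-1$ pendants already supply $k-1$, so the inequality $g(N(x))\geq k$ forces a graph-neighbour $w$ with $g(w)\geq 1$, hence $w\in D$. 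Thus $\gamma(G)\leq|D|\leq \gamma_{wk}(G^{\prime})-|C|(k-1)$, and combined with the upper bound this yields the identity and completes the reduction.
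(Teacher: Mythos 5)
Your proposal is correct and follows essentially the same route as the paper: the same pendant construction, the same identity $\gamma_{wk}(G^{\prime})=\gamma(G)+|C|\cdot(k-1)$, the same upper-bound function, and a lower bound that extracts a dominating set from an optimal $g$. Your explicit excess quantities $e_x$ merely make rigorous the inequality the paper uses implicitly (and let you skip the paper's normalization of pendant values to $\{0,1\}$), so this is a presentational refinement rather than a different argument.
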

\begin{proof}
Let $G$ be a splitgraph with maximal clique $C$ and independent set $I$. 
Construct the graph $G^{\prime}$  as in 
Theorem~\ref{thm NP-c rainbow}, by adding $k-1$ pendant vertices to each 
vertex of the maximal clique $C$. 
We prove that 
\[\gamma_{wk}(G^{\prime})=\gamma(G)+|C|\cdot (k-1).\]

\medskip 

\noindent
First, let us prove that 
\[\gamma_{wk}(G^{\prime}) \leq \gamma(G)+|C|(k-1).\]
Let $D$ be a minimum dominating set. 
Construct a weak $\{k\}$-domination function 
$g: V(G^{\prime}) \rightarrow \{0,\dots,k\}$ as follows. 
\begin{enumerate}[\rm (i)]
\item For $x \in D \cap C$, let $g(x)=k$. 
\item For $x \in D \cap I$, let $g(x)=1$. 
\item For $x \in V(G) \setminus D$, let $g(x)=0$. 
\item For a pendant vertex $x$ with $N(x) \in D$, let $g(x)=0$.
\item For a pendant vertex $x$ with $N(x) \notin D$, let $g(x)=1$. 
\end{enumerate}
It is easy to check that $g$ is a weak $\{k\}$-dominating function 
with cost 
\[\gamma_{wk}(G^{\prime}) \leq \sum_{x \in V(G^{\prime})} g(x) = 
\gamma(G)+|C| \cdot (k-1).\] 

\medskip 

\noindent 
To prove the converse, let $g$ be a weak $\{k\}$-dominating function 
for $G^{\prime}$ of minimal cost. We may assume that $g(x) \in \{0,1\}$ for 
every pendant vertex $x$. 
Define 
\[D=\{\;x\;|\; x \in V(G) \quad\text{and}\quad g(x) > 0\;\}.\] 
Then $D$ is a dominating set of $G$. 
Furthermore, 
\begin{eqnarray*}
\gamma(G) \leq |D| &=& \sum_{x \in C}\; [g(x) > 0] + \sum_{x \in I}\; [g(x)>0]\\
& \leq & \sum_{x \in V(G^{\prime}) \setminus I} g(x) - |C|\cdot(k-1)+\sum_{x \in I} g(x) \\
& \leq & \sum_{x \in V(G^{\prime})}g(x) - |C| \cdot (k-1) \\
& \leq & \gamma_{wk}(G^{\prime}) - |C| \cdot (k-1). 
\end{eqnarray*}
This proves the theorem.
\qed\end{proof}

\section{2-Rainbow domination of interval graphs}
\label{appendix interval}

\begin{theorem}
\label{appendix thm int}
There exists a polynomial algorithm to compute the
$2$-rainbow domination number for interval graphs.
\end{theorem}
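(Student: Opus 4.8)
The plan is to work with the weak $\{2\}$-domination number throughout, invoking the equivalence $\gamma_{w2}(G)=\gamma_{r2}(G)$ from Equation~\eqref{eqn21}, so that it suffices to compute a minimum-cost weak $\{2\}$-dominating function $g$. By Lemmas~\ref{lm int1} and~\ref{lm int2} I may restrict attention to a canonical optimal $g$ in which every maximal clique carries at most two vertices of value $2$ and at most four vertices of value $1$; hence at most six vertices of positive value lie in any single clique. I would fix a consecutive clique arrangement $[C_1,\dots,C_t]$ and sweep it from left to right, deciding the values on the vertices of each $C_i$ when they first appear and charging their cost once.

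The structural fact that makes a polynomial state possible is that the vertices which are \emph{live} at the boundary between $C_i$ and $C_{i+1}$ (those belonging to both cliques) all lie in $C_i$ and therefore form a clique in $G$. Consequently a single live vertex of positive value lies in the neighborhood of \emph{every} live vertex of value $0$, so positive value crossing a boundary contributes simultaneously to the domination demand of all live zero-vertices. In particular, since every such zero-vertex has its current accumulated neighborhood sum at least as large as the total positive value among live vertices, once that total reaches $2$ every live zero-vertex is already dominated, irrespective of its left endpoint. Moreover, as the sweep advances past $C_{i+1}$ any positive value placed there raises the accumulated sum of all pending zero-vertices by the same amount, so their residual demands fall in lockstep.

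Building on this, I would define the DP state after processing $C_i$ by: (i) the total positive value currently carried by live vertices, capped at $2$; and (ii) when this cap is not yet reached, for each residual demand $d\in\{1,2\}$ the smallest right endpoint over all live zero-vertices that still need $d$ further units. Because every live zero-vertex contains $C_{i+1}$, a future unit placed in clique $C_j$ satisfies exactly the pending vertices whose right endpoint is at least $j$; the binding ones are therefore those with the smallest such endpoints, which is why recording a single endpoint per residual class suffices. A transition from $C_i$ to $C_{i+1}$ enumerates the polynomially many admissible value-assignments to the newly appearing vertices (choosing the at most two vertices of value $2$ and four of value $1$ permitted by the lemmas), updates the capped live value and the two pending endpoints, and, whenever a zero-vertex reaches its right endpoint while still pending, rejects the configuration. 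Reading off the minimum cost at $C_t$ yields $\gamma_{w2}(G)=\gamma_{r2}(G)$, and backtracking recovers an optimal function; with $O(t^2)$ states per boundary and polynomially many transitions the algorithm runs in time polynomial in $n$.

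The hard part will be the correctness of this collapsed pending state, namely that the capped live value together with one smallest undominated right endpoint per residual class carries all the information needed about the potentially many live zero-vertices. I would establish this by an exchange argument: by the clique structure of live vertices and the monotonicity of accumulated neighborhood sums along the sweep, any optimal completion that rescues the earliest-ending pending vertex of a given residual also rescues every later-ending vertex of the same or smaller residual, so no finer bookkeeping about the individual zero-vertices is required. The remaining routine points, namely that a canonical solution with bounded positive vertices per clique remains globally optimal and that the final checks at right endpoints are equivalent to the weak $\{2\}$-domination constraint, follow directly from Lemmas~\ref{lm int1} and~\ref{lm int2} and the definition in~\eqref{eqn19}.
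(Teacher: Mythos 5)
Your overall strategy is the same as the paper's: pass to weak $\{2\}$-domination via \eqref{eqn21}, sweep a consecutive clique arrangement, use Lemmas~\ref{lm int1} and~\ref{lm int2} to bound the positively-valued vertices per clique by six, and summarize the still-undominated zero-vertices by one representative per residual class. Your choice of representative (smallest right endpoint, i.e., earliest deadline) and the lockstep argument for why one representative per residual class suffices are sound, and arguably more carefully justified than the corresponding step in the paper's proof.

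The gap is in state component (i). Capping the live positive total at $2$ correctly decides satisfaction of zero-vertices at the \emph{current} boundary, but it does not determine the live total at \emph{subsequent} boundaries: the positive value surviving from $C_{i+1}$ into $C_{i+2}$ is the value carried by those positive vertices whose intervals reach $C_{i+2}$, and this depends on which vertices carry the value, not merely on their sum. Concretely, a single vertex $w$ with $g(w)=2$ whose interval ends at $C_{i+1}$ and one whose interval ends at $C_{i+5}$ yield the same state after $C_{i+1}$ in your scheme, yet a zero-vertex first appearing in $C_{i+2}$ is already satisfied in the second scenario and still needs two units in the first; your transition cannot distinguish these cases, so it either certifies infeasible functions or overpays. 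The repair is exactly what the paper does: keep in the state the identities (equivalently, the right endpoints) of the at most two value-$2$ and at most four value-$1$ vertices of the current clique, an $O(n^{6})$-size component from which the surviving live value at every later boundary is determined; combined with your two pending representatives this gives polynomially many states and the claimed polynomial algorithm.
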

\begin{proof}
By~Lemmas~\ref{lm int1} and~\ref{lm int2} there is a polynomial
dynamic programming algorithm which solves the problem. Let
$[C_1,\dots,C_t]$ be a consecutive clique arrangement. For each $i$
the algorithm computes a table of partial $2$-rainbow domination numbers
for the subgraph induced by $\cup_{\ell =1}^i C_{\ell}$, parameterized
by a given subset of vertices in $C_i$ that are assigned the values 0, 1 and 2.
We say that a vertex $x \in C_i$ is {\em satiated\/} if
\[g(N(x) \cap (\cup_{\ell=1}^i C_i)) \geq 2.\]
The tabulated rainbow domination numbers are partial in the sense that
the neighborhood condition is not necessarily satisfied for all vertices
in $C_i$ that are assigned the value 0. The vertices of $C_i$
that are assigned the value 0
which are not satiated, either need
an extra 1 or an extra 2. Each of the two subsets of nonsatiated
vertices is characterized
by one representative vertex, the one among them that extends
furthest to the left.

\medskip

\noindent
In total, the dynamic system is characterized by 4 state variables.
They are
\begin{enumerate}[\rm (i)]
\item the set of, at most two 2's,
\item the set of, at most four 1's,
\item the nonsatiated vertex that
extends furthest to the left and that needs an extra 1, and
\item a similar nonsatiated
vertex that needs an extra 2.
\end{enumerate}
Each clique has at most $n=|V(G)|$ vertices and so, there are
at most $n^2 \cdot n^4 \cdot n \cdot n=n^8$ different assignments of the
state variables.
In the transition $i \rightarrow i+1$, the table is
computed by minimizing, for all sensible assignments of vertices in $C_{i+1}$,
over the compatible values in the table at stage $i$.
Each update takes $O(1)$ time.
The number of cliques is at most $n$, and so the
algorithm runs in $O(n^9)$ time.
\qed\end{proof}

\bigskip

\begin{remark}
The observations above can be generalized to show that,
for each $k$, the $k$-rainbow domination problem can be solved
in polynomial time for interval graphs. However, this leaves open
the question whether $k$-rainbow domination is in $\FPT$ for interval graphs.
\end{remark}

\bigskip

\begin{remark}
Let $A$ be the closed neighborhood matrix of an interval
graph or a  strongly chordal graph $G$. Then $A$ is totally balanced, that is,
after a suitable permutation or rows and columns the matrix becomes greedy, ie,
$\Gamma$-free. A matrix is greedy if it has no $2 \times 2$ submatrix
$\bigl ( \begin{smallmatrix} 1 & 1 \\ 1 & 0 \end{smallmatrix} \bigr )$. 
For (totally) balanced matrices some polyhedra have only integer extreme points. 
This leads to polynomial algorithms for domination on eg, strongly chordal graphs.  
Notice that the closed neighborhood matrix of $G \Box K_2$ is
\begin{equation}
\label{eqn11}
B=\begin{pmatrix} A & I \\ I & A \end{pmatrix}.
\end{equation}
To solve the $2$-rainbow domination problem for strongly chordal
graphs, one needs to solve the following integer programming problem.
\begin{eqnarray}
\label{eqn12}
\min \;  \sum_{i=1}^{2n} \; x_i \quad & & \text{such that}\\ 
 && B \mathbf{x} \geq \mathbf 1 \quad \text{and}\quad \forall_{i} \; x_i \in \{0,1\}.
\end{eqnarray}
For the $2$-rainbow domination problem it would be interesting to 
know on what conditions on the adjacency matrix $A$, the matrix $B$ is (totally) 
balanced.  
\end{remark}
                 
\section{2-Rainbow domination of permutation graphs}
\label{appendix permutation}

Permutation graphs properly contain the class of cographs. 
They can be defined as the intersection graphs of a set of 
straight line segments that have their endpoints on two parallel 
lines. 
In other words, a graph $G$ is a permutation graph whenever   
both $G$ and $\Bar{G}$ are comparability graphs~\cite{kn:dushnik,kn:klavik}. 

\bigskip 

We can use a technique similar to that used for interval graphs to show 
that $2$-rainbow domination is polynomial for permutation graphs. 

An intersection model of straight line segments for a 
permutation graph, as described above,  
is called a permutation diagram. 

\begin{definition}
Consider a permutation diagram for a permutation graph $G$. 
A scanline is a straight line segment with endpoints on the two 
parallel lines of which neither of the endpoints coincides 
with any endpoint of a line segment of the permutation diagram. 
\end{definition}

\bigskip 

We proceed as in Section~\ref{section interval}. 
Consider a $2$-rainbow function $f$ for a graph $G=(V,E)$. 
For a 
set $S \subseteq V$ we write, $f(S)=\cup_{x \in S} f(x)$. 
  
\begin{lemma}
\label{lm perm1}
Let $G=(V,E)$ be a permutation graph and let $f$ be a $2$-rainbow 
function with $\sum|f(x)|=\gamma_{r2}(G)$. 
Consider a permutation diagram with two parallel horizontal 
lines. Let $s$ be a scanline. There are at most 4 line segments 
that cross $s$ with function value $\{1,2\}$. 
\end{lemma}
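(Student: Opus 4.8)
The plan is to mirror the argument structure of Lemmas~\ref{lm int1} and~\ref{lm int2} for interval graphs, exploiting the geometric role that a scanline plays in a permutation diagram as the analogue of a maximal clique in the interval case. The key structural fact I would use is that the line segments crossing a fixed scanline $s$ form a clique in $G$ (any two segments that both cross $s$ must intersect on at least one side of $s$, hence cross each other), so a scanline cut isolates a clique whose vertices are linearly ordered by where their left endpoints sit on the top and bottom lines. This linear order is what lets me talk about which crossing segments reach furthest to the ``left'' region and which reach furthest to the ``right'' region of the diagram relative to $s$.

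First I would suppose, for contradiction, that there are at least five segments crossing $s$ all carrying the full label $\{1,2\}$, say $v_1,\dots,v_5$. Each such vertex contributes $|f(v_i)|=2$ to the cost. The goal is to show one of them, say $v_5$, is redundant and its label can be shrunk (to a single color, or possibly emptied) without violating the $2$-rainbow condition anywhere, thereby strictly lowering $\sum|f(x)|$ and contradicting optimality. To do this I would order the five crossing segments according to how far they extend to the left of $s$ and, separately, how far they extend to the right of $s$, just as in the interval proof the vertices are ordered by their reach in $\cup_{j\ge i}C_j$ and $\cup_{j\le i}C_j$. I would then select two of them, say $v_1,v_2$, that dominate the ``right'' extent and two others, say $v_3,v_4$, that dominate the ``left'' extent; the remaining vertex $v_5$ has all of its neighbors already covered by the reach of these four.

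The crucial verification is that after reducing $v_5$, every vertex $x$ with $f(x)=\es$ still sees both colors $1$ and $2$ in its neighborhood. Any such $x$ that was previously rainbow-dominated using $v_5$ must have $v_5$ as a neighbor; but by the choice of $v_1,\dots,v_4$ its segment also crosses $s$ on the same side that $v_5$ reaches, so $x$ is adjacent to segments among $v_1,\dots,v_4$ that already carry $\{1,2\}$ (or at least jointly supply both colors $1$ and $2$). Hence $f(N(x))\supseteq\{1,2\}=[2]$ is preserved, and the reduced function remains a valid $2$-rainbow function of strictly smaller cost. I expect the main obstacle to be making the ``reaches furthest to the left/right'' selection fully rigorous in the permutation-diagram setting: unlike the consecutive clique arrangement of interval graphs, the two-sided reach of a crossing segment is governed by both its top and bottom endpoints, so I must argue carefully that any neighbor of $v_5$ lies within the combined reach of the four retained segments. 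Once that geometric domination claim is established, the cost-reduction step is routine, and the bound of $4$ crossing segments with value $\{1,2\}$ follows exactly as stated.
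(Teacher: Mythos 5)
Your opening structural claim is false, and the rest of the argument leans on it: the segments crossing a scanline $s$ do \emph{not} form a clique. Two segments that cross $s$ in the same direction can run ``parallel'' across it: taking $s$ from top position $0$ to bottom position $0$, the segments with (top, bottom) endpoints $(-10,6)$ and $(-1,8)$ both cross $s$, but their endpoints appear in the same order on both lines, so they do not cross each other. Hence there is no single linear order on the crossing segments, and your plan of choosing ``two that dominate the right extent and two that dominate the left extent'' is not well defined: for a segment crossing $s$, its reach to one side is governed by its top endpoint and its reach to the other side by its bottom endpoint (and in opposite ways for ascending versus descending segments). The paper's own proof does not use any clique structure; it selects the four segments whose endpoints are extremal on the two lines (leftmost top endpoint, rightmost top endpoint, leftmost bottom endpoint, rightmost bottom endpoint) and asserts that every segment crossing the fifth crosses one of these four.

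That coverage assertion is precisely the step you defer as ``the main obstacle,'' and it is the entire content of the lemma --- once it is granted, shrinking $f(v_5)$ and re-checking the rainbow condition is bookkeeping, as you say. So your outline follows the paper's route but omits its only nontrivial step, and that step is genuinely delicate: for the five crossing segments $(-10,6)$, $(-1,8)$, $(-7,1)$, $(-2,9)$, $(-5,5)$, the first four are exactly the four extremal ones, yet the segment $(-6,7)$ crosses $(-5,5)$ and none of the other four. So the claim ``every neighbor of $v_5$ is a neighbor of one of the retained four'' is not a pure neighborhood--containment fact for an arbitrary fifth segment; a correct argument must either exploit the optimality of $f$ more carefully or choose differently which segment to discard. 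As written, your proposal (like the sketch it mirrors) does not close this gap.
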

\begin{proof}
Assume that there are at least 5 such line segments. Take 4 of them, 
each one with the endpoint as far as possible from the endpoint of $s$ 
on the top and bottom line. Any line segment crossing the fifth, crosses 
also at least one of these four. This shows that the function value of the 
fifth line can be replaced by $\es$.  This contradicts the optimality 
of $f$. 

\medskip 

\noindent
This proves the lemma. 
\qed\end{proof}

\begin{lemma}
\label{lm perm2}
For every scanline $s$ there are at most 8 line segments that cross it 
and that have function values not $\es$.        
\end{lemma}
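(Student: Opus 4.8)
The plan is to refine the counting of Lemma~\ref{lm perm1} by splitting the crossing segments according to which colors occur in their labels rather than according to the cardinality of their labels. Fix an optimal $2$-rainbow function $f$ with $\sum_x |f(x)| = \gamma_{r2}(G)$ and fix the scanline $s$. For $c \in \{1,2\}$ let $X_c$ be the set of line segments that cross $s$ and satisfy $c \in f(x)$. Every segment crossing $s$ with $f(x) \neq \es$ lies in $X_1 \cup X_2$, while the segments of label $\{1,2\}$ are exactly those in $X_1 \cap X_2$. Hence the quantity we must bound equals $|X_1 \cup X_2| \leq |X_1| + |X_2|$, and it suffices to prove that $|X_c| \leq 4$ for each color $c$; summing the two bounds then yields the desired~$8$.

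First I would establish $|X_c| \leq 4$ by the same extremal-representative argument used for Lemma~\ref{lm perm1}, now applied to the single color $c$. A segment crossing $s$ is of one of two types, according to whether it runs from the upper left to the lower right of $s$ or the other way around; among the members of $X_c$ I select four extreme representatives, namely, on each of the two lines, the segment of $X_c$ whose endpoint is furthest to the left and the one whose endpoint is furthest to the right. As in Lemma~\ref{lm perm1}, any segment that crosses a fifth member $\sigma \in X_c$ must also cross one of these four representatives. I would then delete color $c$ from $f(\sigma)$. Each formerly $\es$-labelled neighbour $x$ of $\sigma$ that relied on $\sigma$ for color $c$ still crosses one of the four representatives, and that representative also carries color $c$; hence $x$ still sees color $c$, and since the deletion strictly decreases $\|f\|$ this contradicts optimality.

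The step I expect to be the main obstacle is the case where $f(\sigma) = \{c\}$, so that deleting color $c$ empties the label of $\sigma$ and now imposes the domination requirement $f(N(\sigma)) = \{1,2\}$ on $\sigma$ itself. Here I would use the fact that in a permutation diagram every crossing segment of one type is adjacent to every crossing segment of the other type, because their top and bottom endpoints necessarily occur in reversed order. Consequently $\sigma$ is adjacent to all opposite-type representatives, which lets me recover the colors that $\sigma$ needs; when this still fails I would instead lower the contribution of one of the representatives of $\sigma$'s own type rather than of $\sigma$, again obtaining a function of strictly smaller cost. In every case the modification keeps $f$ a valid $2$-rainbow function while decreasing its cost, so no optimal $f$ can have $|X_c| \geq 5$, which proves $|X_1 \cup X_2| \leq 8$ and hence the lemma.
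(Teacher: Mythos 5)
Your reduction of the lemma to the per-colour claim $|X_c|\leq 4$ is the genuine gap: that intermediate claim is strictly stronger than the statement to be proved, and it is not what the paper's argument delivers, nor does it appear to be true. It would bound the total label weight $\sum_{\sigma \text{ crossing } s}|f(\sigma)|$ by $8$, whereas Lemma~\ref{lm perm1} together with the present lemma deliberately allow four crossing segments labelled $\{1,2\}$ \emph{plus} four further crossing segments with singleton labels, i.e.\ weight up to $12$ across a scanline. Already three crossing segments labelled $\{1,2\}$ together with two crossing segments labelled $\{1\}$ (each forced by a private neighbour on the far side of $s$ that meets no other member of $X_1$) satisfy both lemmas but give $|X_1|=5$, and no local exchange removes the fifth occurrence of colour~$1$. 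This is precisely why the paper does not argue colour by colour: it selects the eight extremal crossing segments in four \emph{pairs} whose labels have union $\{1,2\}$, so that a vertex which loses its supply from a ninth nonempty segment can be re-served \emph{both} colours by a single extremal pair; that pairing only yields the bound of $8$ on the number of nonempty crossing segments, which is the lemma, and nothing per colour.

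Two further weaknesses. The redirection step you import from Lemma~\ref{lm perm1} (``any segment crossing the fifth member also crosses one of the four extremes'') is automatic only for segments of the type opposite to $\sigma$; members of $X_c$ of the same type that all cross $s$ can be pairwise non-crossing and nested, in which case a neighbour of the middle one need not meet any extreme. (The paper glosses over the same point, so this is not special to your write-up, but your split by type does not close it.) Finally, your treatment of the case $f(\sigma)=\{c\}$ --- ``when this still fails I would instead lower the contribution of one of the representatives of $\sigma$'s own type'' --- is not an argument: lowering a representative destroys exactly the property that made it usable as a representative, and the modification can cascade. As written, the proposal does not establish the lemma.
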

\begin{proof}
The proof is similar to the proof of Lemma~\ref{lm perm1}. 
Consider the 8 line segments that cross $s$, 
and with function values that are nonempty 
subsets of $\{1,2\}$, and of which the union is pairwise $\{1,2\}$, 
and that are furthest away from the endpoints of $s$. 

\medskip 

\noindent
Any other line segment crossing $s$ with function value $\neq \es$ 
would contradict the optimality of $f$. 
\qed\end{proof}

\bigskip 

\begin{theorem}
\label{thm perm}
There exists a polynomial algorithm to compute the 
$2$-rainbow domination number for permutation graphs. 
\end{theorem}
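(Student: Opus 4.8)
The plan is to adapt the dynamic-programming scheme from Section~\ref{section interval} to the permutation-diagram setting, using scanlines in place of the consecutive clique arrangement. First I would fix an optimal $2$-rainbow function $f$ and invoke Lemmas~\ref{lm perm1} and~\ref{lm perm2}: together they guarantee a \emph{bounded-width} structure across any scanline. Specifically, at most $4$ crossing segments carry the full label $\{1,2\}$ and at most $8$ crossing segments carry a nonempty label. This is the crucial finiteness input that makes the table polynomial in size.

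Next I would sweep a scanline from left to right across the permutation diagram, moving it past one segment endpoint at a time. Between consecutive endpoint events the combinatorial structure of crossings is fixed, so there are $O(n)$ sweep positions to consider. At each position the state of the dynamic program records, for the segments currently crossing the scanline, their assigned labels (a nonempty subset of $\{1,2\}$ or $\es$), together with the ``deficiency'' information analogous to the interval case: for each segment that has been assigned $\es$ but is not yet satiated (i.e. the union of labels seen so far among its neighbors is not yet $[2]=\{1,2\}$), we must remember whether it still needs a color $1$, a color $2$, or both. By the two lemmas the number of crossing segments with nonempty label is at most $8$, and each such labeling is one of three choices, so the bounded portion of the state contributes only a constant number of configurations; the unsatiated-but-empty segments are summarized, exactly as in the interval proof, by a constant number of representative segments (the ones extending furthest in the scan direction). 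Hence the total number of states per scanline position is polynomial in $n$.

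The transition $s \to s'$ across a single endpoint event updates the table by minimizing $\|f\|$ over all admissible ways to (a) assign a label to a newly entering segment and (b) verify that any segment leaving the crossing region has had its domination requirement met, i.e. if it was labeled $\es$ then the union of the labels of its already-processed neighbors equals $\{1,2\}$. Because a permutation graph has the property that two segments are adjacent exactly when their relative order is reversed on the two parallel lines, ``already processed'' neighbors are precisely those whose relevant endpoint the scanline has already passed, and membership in $N(x)$ can be read directly from the diagram in $O(1)$ time given the stored representatives. Each transition therefore costs polynomial time, and summing over the $O(n)$ events yields an overall polynomial running time.

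The hard part will be verifying that the constant-size summary of unsatiated empty-labeled segments is genuinely sufficient, that is, that tracking only the ``furthest-reaching'' representative in each deficiency class loses no optimal solution. The argument mirrors the exchange argument in Lemmas~\ref{lm int1} and~\ref{lm int2}: if a later-needed vertex is covered, then the representative one is covered too, because any segment that would supply the missing color to a non-representative unsatiated segment also crosses, and hence can supply that color to, the representative. Making this correspondence precise for permutation diagrams — where the crossing relation replaces the linear clique order — is the main obstacle, but it is a direct analogue of the interval-graph exchange and I expect it to go through. Once correctness of the state summary is established, the theorem follows from the polynomial bound on the number of states and the $O(1)$ cost per transition.
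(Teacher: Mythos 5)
Your proposal follows essentially the same route as the paper: the paper's own proof of Theorem~\ref{thm perm} is a one-sentence reduction to the interval-graph dynamic program of Theorem~\ref{thm int}, sweeping a scanline through the permutation diagram and relying on Lemmas~\ref{lm perm1} and~\ref{lm perm2} for the bounded number of labelled segments crossing any scanline. Your write-up is in fact more explicit than the paper's about the state, the transitions, and the representative-segment summary whose correctness both you and the paper leave at the level of an analogy with the interval-graph exchange argument.
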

\begin{proof}
Consider a dynamic programming algorithm, similar to that 
described for interval graphs in Theorem~\ref{thm int}, that proceeds 
by moving a scanline from left to right through the permutation diagram. 
\qed\end{proof}

\bigskip 

\begin{remark}
Obviously, a similar technique shows that the weak $\{2\}$-domination number 
can be computed in polynomial-time for permutation graphs. 
Perhaps it is interesting to ask the question whether these two domination 
numbers are equal for the class of permutation graphs. 
\end{remark}
  
\section{Weak $\{k\}$-L-domination on complete bipartite graphs}
\label{appendix CB}

For integers $1 \leq j \leq k$, a $(j,k)$-dominating function on a graph $G$ 
is a 
function $g: V(G) \rightarrow \{0,\dots,j\}$ such that for every vertex $x$, 
$g(N[x]) \geq k$~\cite{kn:rubalcaba,kn:rubalcaba2}. The $(j,k)$-domination 
number $\gamma_{j,k}(G)$ is the minimal cost of a $(j,k)$-dominating function.  

\begin{theorem}
Let $G$ be trivially perfect. There exists a linear-time algorithm 
to compute $\gamma_{j,k}(G)$. 
\end{theorem}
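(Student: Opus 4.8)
The plan is to adapt the dynamic-programming framework developed for the weak $\{k\}$-L-domination problem on trivially perfect graphs to the $(j,k)$-domination setting. Recall that a trivially perfect graph $G$ is represented by a rooted tree $T$ in which two vertices are adjacent precisely when one lies on the root-path of the other, so that every root-to-leaf path is a maximal clique. The $(j,k)$-domination constraint is of the same \emph{local} flavor as the weak $\{k\}$-L constraint: each closed neighborhood $N[x]$ must accumulate weight at least $k$, except that here the bound $k$ is unconditional (no ``$g(x)=0$ implies'' guard) and the weights are capped at $j$ rather than $k$. First I would observe that, because $N[x]$ in a trivially perfect graph is exactly the set of vertices on the path from $x$ to $R$ together with the subtree rooted at $x$, the closed-neighborhood sums decompose cleanly along the tree, just as they did for the weak $\{k\}$-L problem.

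The key steps, in order, would be the following. First I would set up a state for each internal vertex $x$ that records the total weight $g(P)$ already committed along the path $P$ from $x$ to the root, analogous to the parameter $a$ in the definition of $\Gamma(G^{\prime},L^{\prime},a)$; this is the only information about the ``outside'' that a subtree needs in order to decide its own optimal assignment. Second, I would re-derive the two structural reductions: an exchange argument showing that among the descendants of $x$ there is an optimal assignment that is monotone (heavier weights are pushed to vertices whose constraints are more demanding), and a contraction step that, given the committed path-weight, determines the optimal number of descendants that must receive positive weight and folds the residual requirement into a modified label on the path vertices. These mirror the two lemmas proved for trivially perfect graphs in Appendix~\ref{appendix TP}. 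Because each vertex's weight now ranges over $\{0,\dots,j\}$ rather than $\{0,1\}$, I would need to verify that the monotone exchange still goes through when weights can exceed $1$; the natural move is to argue that in an optimal solution we may assume each non-path descendant uses weight either $0$ or $j$ (a single ``full'' contribution dominates spreading the same total over more vertices, since it satisfies more closed-neighborhood constraints for the same cost). Finally, I would propagate these reductions up the tree, spending constant (or $O(k)$) work per node, and read off $\gamma_{j,k}(G)$ at the root.

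The main obstacle I anticipate is the interaction between the weight cap $j$ and the coverage demand $k$ when $j < k$: unlike the weak $\{k\}$-L problem, where a single vertex could be assigned the full value $k$, here no vertex alone can satisfy the demand of a neighbor, so the ``one representative vertex carries the load'' simplification is not immediately available, and the number of positively-weighted descendants genuinely matters. The clean way around this is to reduce to the $0/j$ regime described above, so that counting the number of ``active'' descendants again captures the committed weight as a single integer, exactly the role played by $i^{\ast}-1$ in Lemma~\ref{lm appendix TP}. Once that normalization is in place, the recursion, the monotone ordering by residual demand, and the per-node contraction carry over essentially verbatim, and the linear-time bound follows from the fact that the cotree-style traversal visits each of the $n$ nodes once with bounded work per node.
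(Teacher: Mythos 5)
Your plan is far heavier than what the problem requires, and it contains a step that is genuinely false. The normalization you rely on --- that in some optimal solution every descendant carries weight either $0$ or $j$ --- fails already on very small instances. Take the tree model consisting of a root $R$ with $m$ leaf children, so $G=K_{1,m}$, and take $j=2$, $k=3$: the unique optimal assignment puts $g(R)=2$ and $g(v)=1$ on every leaf, and any $\{0,j\}$-valued assignment on the leaves costs strictly more. The same phenomenon occurs on a complete graph (a single chain in the tree model), where one vertex must carry the remainder $k \bmod j$. Since your whole reduction to ``counting the number of active descendants'' --- the analogue of the quantity $i^{\ast}-1$ in Lemma~\ref{lm appendix TP} --- hinges on this normalization, the monotone-exchange and contraction machinery from the weak $\{k\}$-L setting does not carry over ``essentially verbatim'' as you claim; the state would have to track a residual weight in $\{0,\dots,j-1\}$ as well, and you have not worked out how that interacts with the ordering by residual demand.

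More importantly, you missed the observation that makes the problem nearly trivial and is the substance of the paper's proof. Because the $(j,k)$-constraint is \emph{unconditional}, and because $N[x]$ contains $N[\ell]$ for every descendant leaf $\ell$ of $x$ in the tree model, the constraint system collapses to: every root-to-leaf path of $T$ (i.e., every maximal clique of $G$) must carry total weight at least $k$. Any unit of weight on a vertex can be pushed up to an unsaturated ancestor without violating any path constraint, so an optimal solution concentrates weight at the top of the tree. The paper therefore simply assigns $j$ to the first $\lfloor k/j \rfloor$ $\BFS$-levels and $k \bmod j$ to the next level, which is computed in linear time with no dynamic programming at all. If you want to salvage your approach, you should first prove this reduction to path constraints; once you have it, the DP becomes unnecessary.
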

\begin{proof}
Let $T$ be a tree model for $G$. 
It is easy to check that the following procedure solves the problem. 
Color the vertices of the first $\lfloor \frac{k}{j} \rfloor$ $\BFS$-levels 
of the tree $T$ with color $j$, and color the vertices in the next level with the remainder 
$k-\lfloor \frac{k}{j} \rfloor \cdot j = k \bmod j$. 
\qed\end{proof}

\bigskip 
  
We show that the weak $\{k\}$-L-domination problem can be solved in linear time 
on complete bipartite graphs. 
Let $G$ be complete bipartite, with color classes $V$ and $V^{\prime}$. 
Let $L$ be a $\{k\}$-assignment, that is, $L$ assigns a pair $(a_x,b_x)$ 
of numbers from $\{0,\dots,k\}$ to every vertex $x$. 
For simplicity we may assume that, for all vertices $x$, 
\[a_x=0.\] 
For simplicity we also assume that $|V|=|V^{\prime}|=n$. 
Denote the $b$-labels of vertices in $V$ by $b(1),\dots,b(n)$ and 
denote the 
$b$-labels of vertices in $V^{\prime}$ by $b^{\prime}(1),\dots,b^{\prime}(n)$. 
We assume that these are ordered such that 
\[b(1) \; \geq\; \dots \;\geq\; b(n) \quad\text{and}\quad 
b^{\prime}(1) \;\geq\; \dots\;\geq\; b^{\prime}(n).\] 
Then the weak $\{k\}$-L-domination problem can be formulated as follows. 
Let $b(n+1)=b^{\prime}(n+1)=0$.  
\begin{gather*}
\min \; x+y \\
\text{subject to}\quad x \geq b^{\prime}(y+1) \quad \text{and}\quad y \geq b(x+1).
\end{gather*}

\bigskip 

\begin{theorem}
The weak $\{k\}$-L-domination problem can be solved in linear time 
on complete bipartite graphs. 
\end{theorem}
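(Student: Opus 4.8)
The plan is to take the integer program displayed just above the theorem as the target and to solve it by a single monotone sweep. First I would record the two structural facts that drive everything. Since the labels are sorted so that $b(1) \geq \cdots \geq b(n)$ and $b'(1) \geq \cdots \geq b'(n)$ (with $b(n+1)=b'(n+1)=0$), the map $x \mapsto b(x+1)$ and the map $y \mapsto b'(y+1)$ are both non-increasing. Consequently the feasible region $\{(x,y) : x \geq b'(y+1),\ y \geq b(x+1)\}$ is upward closed: enlarging $x$ or $y$ only relaxes both constraints. Hence the point minimizing $x+y$ lies on the lower-left frontier of this region, a staircase with $O(n)$ corners, and it suffices to inspect one candidate $y$ for each $x$ in the relevant range $0 \leq x \leq n$.

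For a fixed $x$, both constraints become lower bounds on $y$: the first reads $y \geq b(x+1)$, and the second, using that $b'$ is non-increasing, reads $y \geq \mu(x)$, where $\mu(x)=\min\{y : b'(y+1) \leq x\}$ (well defined because $b'(n+1)=0$). The cheapest feasible $y$ for this $x$ is therefore $y(x)=\max\{b(x+1),\mu(x)\}$, and the optimum equals $\min_{0 \leq x \leq n}\bigl(x+y(x)\bigr)$. The values $b(x+1)$ are table look-ups. For $\mu$, I would exploit that $\mu(x)$ is itself non-increasing in $x$: as $x$ grows, the condition $b'(y+1)\leq x$ is met by ever smaller $y$. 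A single left-to-right pass over $x=0,1,\dots,n$ that advances one pointer into the $b'$-array thus computes all of $\mu(0),\dots,\mu(n)$ in $O(n)$ total time, after which the minimisation is another $O(n)$ scan. Sorting the labels into the required order is done by counting sort in $O(n)$, since the entries lie in $\{0,\dots,k\}$, and the reduction to the case $a_x=0$ described above, together with padding to $|V|=|V'|=n$, costs only linear time; so the whole procedure is linear.

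The step I expect to carry the real weight is justifying that the integer program faithfully models the domination problem, that is, the passage from the function $g$ to the two counters $x=g(V)$ and $y=g(V')$. Here I would argue that an optimal $g$ may be taken to use only the values $0$ and $1$ and to activate, on each side, the vertices of largest $b$-label. A vertex $v\in V$ with $g(v)=0$ imposes $g(V') \geq b_v$ because $N[v]=\{v\}\cup V'$, so once a budget $x$ is spent on $V$ it is best to spread it over the $x$ top-labelled vertices, leaving exactly the residual constraint $y \geq b(x+1)$ (and symmetrically $x \geq b'(y+1)$). Concentrating budget on fewer vertices can only raise the surviving constraint $b(\cdot)$, and it never helps the opposite side, which sees only the totals; this exchange argument is routine but is the one place where monotonicity of the sorted labels must be invoked with care. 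Once this equivalence is in hand, feasibility of the point $(n,n)$ guarantees that a solution always exists, and the sweep above delivers an optimal one in linear time, proving the theorem.
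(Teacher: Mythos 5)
Your proposal is correct and follows essentially the same route as the paper: for each budget $x$ on one side you take the larger of the two induced lower bounds on $y$ (your $\max\{b(x+1),\mu(x)\}$ is exactly the paper's $m(x)=\max\{y^1(x),y^2(x)\}$) and minimise $x+m(x)$ over $x\in\{0,\dots,n\}$ in a linear sweep. You additionally spell out two points the paper leaves implicit --- the monotone two-pointer computation of $\mu$ and the exchange argument showing an optimal $g$ may be taken $0/1$-valued on the top-labelled vertices of each side --- both of which are sound.
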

\begin{proof}
Let, for $x \in \{0,\dots,n\}$,  
\begin{eqnarray*}
y^1(x)&=&\min\;\{\;y\;|\; y \in \{0,\dots,n\} \quad\text{and}\quad 
y \geq b(x+1)\;\}=b(x+1)\\
y^2(x)&=& \min \; \{\;y\;|\; y \in \{0,\dots,n\} \quad\text{and}\quad 
x \geq b^{\prime}(y+1)\;\}.
\end{eqnarray*}
Let 
\[m(x)=\max\;\{\;y^1(x),\;y^2(x)\;\}.\] 
Then the solution to the weak $\{k\}$-L-domination problem is 
\[\min \;\{\;x+m(x)\;|\; x \in \{0,\dots,n\}\;\}.\] 
It is easy to check that the values $y^1(x)$ and $y^2(x)$ can be 
computed in, overall, linear time. 
This proves the theorem.
\qed\end{proof}


\begin{thebibliography}{99}

\bibitem{kn:aharoni}Aharoni,~R. and T.~Szab\'o, 
Vizing's conjecture for chordal graphs, 
{\em Discrete Mathematics\/} {\bf 309} (2009), pp.~1766--1768. 

\bibitem{kn:araujo}Araujo,~J., C.~Sales and I.~Sau, 
Weighted coloring on $P_4$-sparse graphs. Manuscript on HAL, 
Id:~inria--00467853, 2010. 

\bibitem{kn:babel}Babel,~L., 
{\em On the $P_4$-structure of graphs\/}, 
PhD Thesis, TU-M\"unchen, 1997. 
 
\bibitem{kn:bertossi}Bertossi,~A., 
Dominating sets for split and bipartite graphs, 
{\em Information Processing Letters\/} {\bf 19} (1984), pp.~37--40. 

\bibitem{kn:bretscher}Bretscher,~A., D.~Corneil, M.~Habib and C.~Paul, 
A simple linear time LexBFS cograph recognition algorithm, 
{\em SIAM Journal on Discrete Mathematics\/} {\bf 22} (2008), pp.~1277--1296. 

\bibitem{kn:chu}Chu,~F., 
A simple linear time certifying LBFS-based algorithm for recognizing 
trivially perfect graphs and their complements, 
{\em Information Processing Letters\/} {\bf 107} (2008), pp.~7--12. 
 
\bibitem{kn:corneil2}Corneil,~D., H.~Lerchs and L.~Stewart-Burlingham, 
Complement reducible graphs, 
{\em Discrete Applied Mathematics\/} {\bf 3} (1981), pp.~163--174. 

\bibitem{kn:corneil3}Corneil,~D., Y.~Perl and L.~Stewart, 
A linear recognition algorithm for cographs, 
{\em SIAM Journal on Computing\/} {\bf 14} (1985), pp.~926--934. 

\bibitem{kn:courcelle}Courcelle,~B., 
The expression of graph properties and graph transformations 
in monadic second-order logic. In: {\em handbook of graph grammars and 
graph transformations\/}, World Scientific Publishing Co., Inc. River Edge, 
NJ, USA, 1997. 

\bibitem{kn:courcelle2}Courcelle,~B., J.~Makowski and U.~Rotics, 
Linear time solvable optimization problems on graphs of bounded cliquewidth, 
{\em Theoretical Computer Science\/} {\bf 33} (2000), pp.~125--150. 

\bibitem{kn:farber}Farber,~M., 
Domination, independent domination, and duality in strongly chordal 
graphs, 
{\em Discrete Applied Mathematics\/} {\bf 7} (1984), pp.~115--130. 

\bibitem{kn:gallai}Gallai,~T., 
Transitiv orientierbare Graphen, 
{\em Acta Math. Acad. Sci. Hung.\/} {\bf 18} (1967), 
pp.~25--66. \\
A translation appears in (J.~Ram\'irez-Alfons\'in and B.~Reed, eds) 
{\em Perfect graphs\/}, John Wiley \& Sons, Interscience series 
in discrete mathematics and optimization, Chichester, 2001. 
 
\bibitem{kn:gilmore}Gilmore,~P. and A.~Hoffman, 
A characterization of comparability graphs and of interval graphs, 
{\em Canadian Journal of Mathematics\/} {\bf 16} (1964), pp.~539--548. 

\bibitem{kn:golumbic}Golumbic,~M., 
Trivially perfect graphs, 
{\em Discrete Mathematics\/} {\bf 24} (1978), pp.~105--107. 

\bibitem{kn:habib}Habib,~M. and C.~Paul, 
A simple linear time algorithm for cograph recognition, 
{\em Discrete Applied Mathematics\/} {\bf 145} (2005), 
pp.~183--197. 

\bibitem{kn:hoang}Ho\`ang,~C., 
{\em A class of perfect graphs\/}, Master's thesis, School of Computer Science, 
McGill University, Montreal, 1983. 

\bibitem{kn:howorka}Howorka,~E., 
A characterization of distance-hereditary graphs, 
{\em The Quarterly Journal of Mathematics, Oxford, Second Series\/} {\bf 28} 
(1977), pp.~417--420. 

\bibitem{kn:jamison}Jamison,~B. and S.~Olariu, 
A tree representation for $P_4$-sparse graphs, 
{\em Discrete Applied Mathematics\/} {\bf 35} (1992), pp.~115--129. 

\bibitem{kn:klavik}Klav\'{\i}k,~P., J.~Kratochv\'{\i}l and B.~Walczak, 
Extending partial representations of function graphs and permutation 
graphs. Manuscript on ArXiV: 1204.6391, 2012. 

\bibitem{kn:kloks}Kloks,~T. and Y.~Wang, 
{\em Advances in graph algorithms\/}. Manuscript on ViXrA: 1409.0165, 2014. 

\bibitem{kn:rooij2}van~Rooij,~J., 
{\em Exact exponential-time algorithms for domination problems in graphs\/}, 
PhD Thesis, Utrecht University, 2011. 

\bibitem{kn:rooij}van~Rooij,~J. and H.~Bodlaender, 
Exact algorithms for dominating set, 
{\em Discrete Applied Mathematics\/} {\bf 159} (2011), pp.~2147--2164. 

\bibitem{kn:vizing}Vizing,~V., 
Some unsolved problems in graph theory, 
{\em Uspehi Mat. Naukno.\/} (in Russian) {\bf 23} (1968), pp.~117--134. 

\bibitem{kn:wolk}Wolk,~E., The comparability graph of a tree,
{\em Proceedings of the American Mathematical Society\/} {\bf 13} (1962),
pp.~789--795.
(See also the note that appeared in the same journal, volume 65, 1965.)

\end{thebibliography}

\begin{thebibliography}{99}

\bibitem{kn:allan}Allan,~R. and R.~Laskar, 
On domination and independent domination numbers of graphs, 
{\em Discrete Mathematics\/} {\bf 23} (1978), pp.~73--76. 

\bibitem{kn:ali}Ali,~M., M.~Rahim, M.~Zeb and G.~Ali, 
On $2$-rainbow domination of some families of graphs, 
{\em International Journal of Mathematics and Soft Computing\/} {\bf 1} (2011), 
pp.~47--53. 

\bibitem{kn:bresar}Bre\u{s}ar,~B., M.~Henning and D.~Rall, 
Rainbow domination in graphs, 
{\em Taiwanese Journal of Mathematics\/} {\bf 12} (2008), pp.~213--225. 

\bibitem{kn:bresar2}Bre\u{s}ar,~B., T.~\u{S}umenjak, 
On 2-rainbow domination in graphs, 
{\em Discrete Applied Mathematics\/} {\bf 155} (2007), pp.~2394--2400. 

\bibitem{kn:chang2}Chang,~G., B.~Li and J.~Wu, 
Rainbow domination and related problems on strongly chordal graphs, 
{\em Discrete Applied Mathematics\/} {\bf 161} (2013), pp.~1395--1401. 

\bibitem{kn:chang}Chang~G., J.~Wu and X.~Zhu,  
Rainbow domination on trees, 
{\em Discrete Applied Mathematics\/} {\bf 158} (2010), pp.~8--12. 

\bibitem{kn:chellali}Chellali,~M., T.~Haynes and S.~Hedetniemi, 
Bounds on weak roman and 2-rainbow domination numbers, 
{\em Discrete Applied Mathematics\/} {\bf 178} (2014), pp.~27--32. 

\bibitem{kn:domke}Domke,~G., S.~Hedetniemi, R.~Laskar and G.~Fricke, 
Relationships between integer and fractional parameters of graphs. 
In (Y.~Alavi, G.~Chartrand, O.~Oellermann and A.~Schwenk eds.) 
Graph theory, combinatorics, and applications, 
{\em Proceedings of the $6^{\mathrm{th}}$ quadrennial international conference 
on the theory and applications of graphs\/} {\bf 1} 
(Kalamzaoo, MI 1988), pp.~371--387, Wiley, 1991. 

\bibitem{kn:dushnik}Dushnik,~B. and E.~Miller, 
Partially ordered sets, 
{\em American Journal of Mathematics\/} {\bf 63} (1941), pp.~600--610. 

\bibitem{kn:fujita3}Fujita,~S. and M.~Furuya, 
Difference between 2-rainbow domination and roman domination 
in graphs, 
{\em Discrete Applied mathematics\/} {\bf 161} (2013), pp.~806--812. 

\bibitem{kn:fujita4}Fujita,~S. and M.~Furuya, 
Rainbow domination numbers on graphs with given radius, 
{\em Discrete Applied mathematics\/} {\bf 166} (2014), pp.~115--122. 

\bibitem{kn:fujita}Fujita,~S., M.~Furuya and C.~Magnant, 
$k$-Rainbow domatic numbers, 
{\em Discrete Applied Mathematics\/} {\bf 160} (2012), pp.~1104--1113. 

\bibitem{kn:fujita2}Fujita,~S., M.~Furuya and C.~Magnant, 
General bounds on rainbow domination numbers, 
{\em Graphs and Combinatorics\/}, Published online December 2013. 

\bibitem{kn:furuya}Furuya,~M., 
A note on total domination and 2-rainbow domination in graphs, 
{\em Discrete Applied Mathematics\/}, November 2014. 

\bibitem{kn:hartnell}Hartnell,~B. and D.~Rall, 
On dominating the Cartesian product of a graph and $K_2$, 
{\em Discussiones Mathematicae Graph Theory\/} {\bf 24} (2004), 
pp.~389--402. 

\bibitem{kn:meierling}Meierling,~D., S.~Sheikholeslami, L.~Volkmann, 
Nordhaus-Gaddum bounds on the $k$-rainbow domatic number of a graph, 
{\em Applied Mathematics Letters\/} {\bf 24} (2011), pp.~1758--1761. 

\bibitem{kn:pai}Pai,~K. and W.~Chiu, 
$3$-Rainbow domination number in graphs, 
{\em Proceedings of the Institute of Industrial Engineers Asian Conference~2013\/}, 
Springer, Science+Business Media Singapore (2013), pp.~713--720. 

\bibitem{kn:rad}Rad,~N., 
Critical concept for $2$-rainbow domination in graphs, 
{\em Australasian Journal of Combinatorics\/} {\bf 51} (2012), 
pp.~49--60. 

\bibitem{kn:rubalcaba}Rubalcaba,~R. and P.~Slater, 
Efficient $(j,k)$-domination, 
{\em Discussiones Mathematicae Graph Theory\/} {\bf 27} (2007), pp.~409--423. 

\bibitem{kn:rubalcaba2}Rubalcaba,~R. and P.~Slater, 
A note on obtaining $k$ dominating sets from a $k$-dominating function on a 
tree, 
{\em Bull. Inst. Combin. Appl.\/} {\bf 51} (2007), pp.~47--54. 

\bibitem{kn:shao}Shao,~Z., M.~Liang, C.~Yin, X.~Xu, P.~Pavli\u{c} 
and J.~\u{Z}erovnik, 
On rainbow domination numbers of graphs, 
{\em Information Sciences\/} {\bf 254} (2014), pp.~225--234. 

\bibitem{kn:sheikholeslami}Sheikholeslami,~S. and L.~Volkmann, 
The $k$-rainbow domatic number of a graphs, 
{\em Discussiones Mathematicae Graph Theory\/} {\bf 32} (2012), pp.~129--140. 

\bibitem{kn:stepien}St\d{e}pie\'n,~Z. and M.~Zwierzchowski, 
$2$-Rainbow domination number of Cartesian products: 
$C_n \Box C_3$ and $C_n \Box C_5$, 
{\em Journal of Combinatorial Optimization\/} {\bf 28} (2014), pp.~748--755. 

\bibitem{kn:stepien2}St\d{e}pie\'n,~Z., A.~Szymaszkiewicz, L.~Szymaszkiewicz and 
M.~Zwierzchowski, 
$2$-Rainbow domination number of $C_n \Box C_5$, 
{\em Discrete Applied Mathematics\/} {\bf 170} (2014), pp.~113--116. 

\bibitem{kn:sumenjak}\u{S}umenjak,~T., D.~Rall and A.~Tepeh, 
Rainbow domination in the lexicographic product of graphs. 
Manuscript on ArXiV: 1210.0514, 2012. 

\bibitem{kn:tong}Tong,~X., X.~Lin, Y.~Yang and M.~Luo, 
$2$-Rainbow domination of generalized Petersen graphs $P(n,2)$, 
{\em Discrete Applied Mathematics\/} {\bf 157} (2009), pp.~1932--1937. 

\bibitem{kn:wang}Wang,~Y. and K.~Wu, 
A tight upper bound for $2$-rainbow domination in 
generalized Petersen graphs, 
{\em Discrete Applied Mathematics\/} {\bf 161} (2013), pp.~2178--2188. 

\bibitem{kn:wu}Wu,~Y. and N.~Rad, 
Bounds on the $2$-rainbow domination number of graphs, 
{\em Graphs and Combinatorics\/} {\bf 29} (2013), pp.~1125--1133. 

\bibitem{kn:wu2}Wu,~Y. and H.~Xing, 
Note on the $2$-rainbow domination and roman domination in graphs, 
{\em Applied Mathematics Letters\/} {\bf 23} (2010), pp.~706--709. 

\bibitem{kn:xavier}Xavier,~D., E.~Thomas and S.~Therese, 
$2$-Rainbow domination of hexagonal mesh networks, 
{\em International Journal of Computing Algorithm\/} {\bf 3} (2014), pp.~468--473. 

\bibitem{kn:xu}Xu,~G., $2$-Rainbow domination in generalized Petersen graphs $P(n,3)$, 
{\em Discrete Applied Mathematics\/} {\bf 157} (2009), pp.~2570--2573. 

\bibitem{kn:yen}Yen,~C., $2$-Rainbow domination and its practical 
variations on weighted graphs, 
{\em Advances in Intelligent Systems and Applications -- Vol. 1, 
Smart Innovation, Systems and Technologies\/} {\bf 20} (2013), 
pp.~59--68. 

\end{thebibliography}
\end{document}